\newtheorem{Theorem}{Theorem} 
\newtheorem{Lemma}[Theorem]{Lemma}
\newtheorem{Definition}[Theorem]{Definition}
\newtheorem{Remark}[Theorem]{Remark}
\newtheorem{Example}[Theorem]{Example}
\title{Categorical generalization of BF theory coupled to gravity}
\author[a,c]{A.D. L\'opez-Hern\'andez,}
\author[b]{Graciela Reyes-Ahumada}
\author[c]{and Javier Chagoya}
\affiliation[a]{Departamento  de F\'{\i}sica, Universidad de Guanajuato, A.P. E-143, C.P. 37150, Le\'on, Guanajuato, M\'exico.}
\affiliation[b]{CONAHCyT-Unidad Académica de Matemáticas, Universidad Autónoma de Zacatecas, Calzada Solidaridad esquina Paseo a la Bufa S/N C.P. 
98060, M\'exico.}
\affiliation[c]{Unidad Acad\'emica de F\'isica, Universidad Aut\'onoma de Zacatecas,
98060, Zacatecas, M\'exico.}
\emailAdd{ad.lopez.hernandez@ugto.mx, grace@cimat.mx, javier.chagoya@fisica.uaz.edu.mx}
\abstract{We present a thorough introduction to the tools of category theory required for formulating gauge theories based on 2-connections. We provide a detailed construction of the categorical generalization of BF theory, dubbed BFCG, also known as 2BF. Similar to BF gravity, it is known that BFCG can be deformed to give general relativity. Here, we obtain an alternative relation between BFCG and gravity, which consists of coupling general relativity and BFCG by means of the volume form constructed out of the BFCG connections. The resulting theory, closely related to unimodular gravity, is a generalization of BF sequestered gravity not only in the sense that it adds new fields but also in that it allows for new choices for the volume form that is coupled to gravity. Furthermore, we show that BF sequestered gravity in the abelian case is recovered for a specific choice of the 2-group.}
\begin{document}
\maketitle
\flushbottom

\section{Introduction}
\label{sec:intro}
Symmetry principles have played a fundamental role in the construction of physical theories. The first example of a gauge field theory that successfully combined quantum mechanics and special relativity is quantum electrodynamics (QED): an abelian gauge theory for $U(1)$ that describes the interactions between photons, electrons and positrons. The generalization from abelian to non-abelian gauge theories led to Yang-Mills theories and ultimately to the development of the standard model of particle physics (SMPP), which is a spontaneously broken non-abelian gauge theory based on the symmetry group $SU(3)\times SU(2)\times U(1)$, describing the strong force, the electromagnetic force and the weak force. These are three of the four forces considered as fundamental, the other one being gravity. Essentially, the fact that gravity is described by a classical field theory is the reason why it is not included in the SMPP, which is a quantum field theory. Efforts to address this shortcoming include attempts to formulate a theory of gravity in the language of Yang-Mills theories, even if it is not a YM theory in the traditional sense. For instance, recent developments have related gravity to (YM)${}^2$ theories~\cite{Bern:2010yg}. Another possibility is to write gravity as a diffeomorphism invariant YM theory~\cite{Krasnov:2011up} or as deformations of topological field theories, specifically of $BF$ theories~\cite{Celada_2016}. Deformations, including constraints and potential terms added to the topological $BF$ action, can transform some of the gauge degrees of freedom in the topological $BF$ action into physical ones. Examples include the Plebanski formulation of General Relativity (GR), which imposes the simplicity constraint on the $B$ field, and the $BF$ formulation of Yang-Mills theories, which is realized by adding potential terms~\cite{Fucito:1996ax}. Another notable example of a deformation involving potential terms is the $BF$ reformulation of MacDowell-Mansouri gravity introduced by Freidel and Starodubtsev~\cite{Freidel:2005ak}

In the field of modified gravity, a new relationship between $BF$ theories and gravity has recently been proposed under the name of \textit{$BF$ sequestered gravity}, in which Lagrangians for both theories are coupled such that the volume form of spacetime must coincide with the volume form constructed from one of the fields of the $BF$ theory, with this volume form acting as the potential term~\cite{alexander2020topological}. One result of this coupling is that the field equations include those of an alternative theory to GR, known as Unimodular Gravity (UG)~\cite{1952prel.book..189E, Unruh:1988in}. There are various ways to motivate UG, but its central point is that the observed cosmological constant is attributed to an integration constant that arises from manipulating the field equations. This is in contrast to General Relativity, where the observed cosmological constant is tied to the vacuum energy density. One way to obtain UG from a principle of least action is by considering theories with Weyl invariance~\cite{Alvarez:2015sba}. This is where the connection with the Einstein-Hilbert theory coupled to $BF$ comes in, as the previously mentioned compatibility condition between the volume forms also leads to Weyl invariance. 

On the other hand, there are generalizations of gauge theories, known as higher gauge theories~\cite{baez2011invitation}. In general terms, these generalizations introduce a 2-form connection, in addition to the 1-form connection of conventional gauge theory. While the 1-form connection provides curves with holonomies in a gauge group, the 2-form connection is used to provide surfaces with a new type of surface holonomy, represented by elements of another group. Using higher gauge theories, it is possible to generalize the $BF$ theory, leading to formulations of topological field theories~\cite{Mikovic:2016xmo}, for instance, $BFCG$ (also known as $2BF$) for 2-categories and $3BF$ for 3-categories (see~\cite{girelli2008topological, Radenkovic:2019qme}). It has been shown that these categorical generalizations of $BF$ allow to couple gauge and matter fields to GR. The dynamical degrees of freedom appear after imposing a simplicity constrain~\cite{Radenkovic:2019qme}, similar to the deformation of $BF$ theory that leads to a formulation of GR.

The main purpose of this work is to offer a pedagogical, mostly self-contained introduction to the tools of category theory required for generalizing gauge theories, and to provide a physical application given by an alternative coupling between $BFCG$ and gravity along the lines of $BF$ sequestered gravity, providing a new application of higher gauge theories. As we show in the following sections, this methodology leads to a theory that couples UG with the gauge fields of $BFCG$ theory.

The structure of this work is as follows: In section~\ref{sec:pre}, we introduce the concept of $G$-principal bundles along with examples of their physical applications. In section~\ref{sec:cat}, we categorize the principal bundles and subsequently generalize them to introduce the notion of 2-principal bundles and their associated 2-form connections. In section~\ref{sec:phys}, we present a generalized coupling (original work intended for this article) between gravity and $BF$ theories, and we propose a coupling with a kinetic term. Section~\ref{sec:con} is devoted to conclusions. 

\section{Preliminaries}\label{sec:pre}

\subsection{Principal bundles and connections}
In this section we recall standard notation from differential geometry such as principal $G$-bundles, connections, horizontal lift and holonomy, as in~\cite{nakahara2003geometry,michor2008topics}. We also refer to classical texts such as \cite{nomizu1963foundations}. Throughout this paper we consider real differential manifolds and  differentiable morphisms. We let $G$ be a compact connected Lie group.

\begin{Definition}
 Let $M$ be a smooth manifold.   
 A $G$-fibration on $M$ is a morphism $\pi:P\rightarrow M$  of smooth manifolds together with a differentiable right action $\triangleleft :P\times G\rightarrow P$ and a $G$-equivariant morphism $\pi (p\triangleleft g)=\pi (p), \forall g\in G.$
 A morphism between two $G$-fibrations $\pi:P\rightarrow M$ and $\pi ^{\prime}:P^{\prime}\rightarrow M$ is a morphism $f:P\rightarrow P^{\prime}$ such that $\pi=\pi'\circ f$ (equivariant).
$P$ and $P'$ are isomorphic $G$-fibrations if there are equivariant morphisms  $f:P\rightarrow P^{\prime}$ and $g:P'\rightarrow P$ with $g\circ f=\text{id}_{P}$ and $f\circ g=\text{id}_{P'}$.
A $G$-fibration is trivial when it is isomorphic to the $G$-fibration projection $pr_{1}:M\times G\rightarrow M$ with the right action 
$$ \rho:  (M\times G)\times G  \rightarrow M\times G,\; \; \; 
        ((m,g),g^{\prime}) \mapsto (m,gg^{\prime}).$$
\end{Definition}
 
\begin{Definition}
    A principal $G$-bundle is a $G$-fibration which is locally trivial, that, is for any point $m\in M$ there is a neighborhood $U\ni m$ such that $\pi|_{\pi ^{-1}(U)}:\pi ^{-1}(U)\rightarrow U$ is trivial.
    A section of a principal $G$-bundle $\pi :P\rightarrow M$ over $U\subset M$ is a morphism $s:U\rightarrow P$ such that $\pi \circ s=\text{id}_{U}$. A global section is a section $s:M\rightarrow P$.
\end{Definition}

\noindent It is well known that a principal $G$-bundle is trivial if there exists a global section, and that there exists an open cover $\{U_{i}\}$ of $M$ with sections $\sigma _{i}$ defined on each $U_{i}$.

    \begin{Definition}
        Let $P$ be a smooth manifold, and $G$ a Lie group with Lie algebra $\mathfrak{g}\equiv T_{e}G$. A $\mathfrak{g}$-valued $n$-form over the manifold $P$ is an element of $\Omega ^{n}(P,\mathfrak{g})=\Omega ^{n}(P)\otimes \mathfrak{g}$, where $\Omega ^{n}(P)$ is the space of all $n$-forms on $P$.
    \end{Definition}
    
\begin{Remark}
    If $\omega \in \Omega ^{1}(P,\mathfrak{g})$ is a $\mathfrak{g}$-valued 1-form then $\omega$ is a $C^{\infty}(P)$-linear function $\omega: \mathfrak{X}(P)\rightarrow C^{\infty}(P,\mathfrak{g})$, where $\mathfrak{X}(P)$ is the set of all vector fields on $P$. Locally, 
    \begin{equation*}
        \omega = \omega _{\mu}^{a}dx^{\mu}\otimes g_{a},
    \end{equation*}

    \noindent where $\{g_{a}\}$ is a basis for $\mathfrak{g}$. Thus, for any vector field $X\in \mathfrak{X}(P)$,
    \begin{equation*}
        \omega (X)=(\omega _{\mu}^{a}dx^{\mu}X)g_{a},
    \end{equation*}

    \noindent where $(\omega _{\mu}^{a}dx^{\mu}X)\in C^{\infty}(P)$.
\end{Remark}

\begin{Definition} 
    Let $\pi :P\rightarrow M$ be a principal $G$-bundle. A 1-form connection $\omega$ is a $\mathfrak{g}$-valued 1-form over $P$, satisfying the following properties:

    \begin{enumerate}
    \renewcommand{\theenumi}{\alph{enumi}}

    \item For any point $p\in P$ and any element $A\in \mathfrak{g}$, it holds that $\omega_{p}(X_{p}^{A})=A$,
    \begin{equation*}
        \omega(X^{A})=A, \;\text{for any}\; A\in \mathfrak{g},
    \end{equation*}
    where the field $X^{A}$ represents the fundamental vector field induced by $A$, defined as
    \begin{equation}
        X_{p}^{A}f = \left(\frac{d}{dt}f(p\triangleleft \text{exp}(tA))\right)\bigg|_{t=0},
    \end{equation}
    for any $p\in P$ and $f\in C^{\infty}(P)$.
    
    \item For any $p\in P$, $g\in G$ and any tangent vector $X_{p}\in T_{p}P$, we get $((\triangleleft g)^{\ast}\omega)_{p}(X_{p})=(Ad_{g^{-1}})_{\ast}(\omega_{p}(X_{p}))$,
    \begin{equation*}
        (\triangleleft g)^{\ast}\omega=(Ad_{g^{-1}})_{\ast}\omega, \;\text{for any}\;\; g\in G.
    \end{equation*}
where $\triangleleft g:P \rightarrow P,\; \; \;
                p  \mapsto p\triangleleft g.
          $
    \end{enumerate}
\end{Definition}

\begin{Definition}
    Let $\pi :P\rightarrow M$ be a principal $G$-bundle. The \emph{vertical space} at a point $p\in P$ is defined as
    \begin{equation*}
        V_{p}P=\{X\in T_{p}P \mid \pi_{\ast} (X)=0\}\subset T_{p}P,
    \end{equation*}
    where $\pi _{\ast}$ is the push-forward or differentiation of $\pi$ at $p$.
\end{Definition}
\begin{Lemma}
    Let $A\in \mathfrak{g}$. Then $X_{p}^{A}\in V_{p}P$ for any $p\in P$.
\end{Lemma}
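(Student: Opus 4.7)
The plan is to unpack the definition of $V_pP$ and apply the fundamental equivariance property of the bundle projection. By definition, showing $X_p^A \in V_pP$ amounts to verifying that $\pi_\ast(X_p^A) = 0$ as a tangent vector at $\pi(p) \in M$, which by duality is equivalent to showing $\pi_\ast(X_p^A)(g) = 0$ for every $g \in C^\infty(M)$.

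First, I would use the standard identity $\pi_\ast(X_p^A)(g) = X_p^A(g \circ \pi)$. Applying the defining formula for the fundamental vector field to the smooth function $g \circ \pi \in C^\infty(P)$ gives
\begin{equation*}
  X_p^A(g\circ \pi) = \left.\frac{d}{dt}\,g\bigl(\pi(p\triangleleft \exp(tA))\bigr)\right|_{t=0}.
\end{equation*}

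Next, I would invoke the $G$-equivariance property built into the definition of a $G$-fibration, namely $\pi(p\triangleleft h) = \pi(p)$ for every $h \in G$. Taking $h = \exp(tA)$ shows that the curve $t \mapsto \pi(p\triangleleft \exp(tA))$ is the constant curve $\pi(p)$. Therefore the $t$-derivative of $g$ evaluated along this curve vanishes identically, which yields $X_p^A(g\circ\pi) = 0$ for every $g$, and hence $\pi_\ast(X_p^A) = 0$.

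There is no real obstacle here: the whole content of the lemma is that the fundamental vector field is tangent to the $G$-orbits, and the orbits are precisely the fibers of $\pi$ by equivariance. The only points requiring care are (i) noting that $\exp(tA)$ is a well-defined curve in $G$ through the identity with tangent $A \in \mathfrak{g} = T_eG$, so that the $t$-derivative indeed captures the action of $X_p^A$, and (ii) applying the equivariance property pointwise in $t$ before differentiating.
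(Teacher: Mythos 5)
Your proposal is correct and follows essentially the same route as the paper: compute $\pi_\ast(X_p^A)f = X_p^A(f\circ\pi)$, plug in the defining curve $t\mapsto p\triangleleft\exp(tA)$, and use the $G$-equivariance $\pi(p\triangleleft g)=\pi(p)$ to see the curve projects to a constant, so the derivative vanishes. No gaps; your added remarks on $\exp(tA)$ and pointwise equivariance are exactly the care the paper's proof implicitly assumes.
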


\begin{proof}
   $ \pi _{\ast}(X_{p}^{A})f=X_{p}^{A}(\pi ^{\ast}f)\equiv X_{p}^{A}(f\circ \pi)=(f\circ \pi (p \triangleleft \textup{exp}(tA)))^{\prime}(0)=(f\circ \pi (p))^{\prime}(0)=0,
   $
    for any $f\in C^{\infty}(M)$. Therefore $\pi _{\ast}(X_{p}^{A})=0$.
\end{proof}

\begin{Lemma}
    For any $p\in P$ the map
    $$        X_{p}^{(\cdot)}:\mathfrak{g} \rightarrow V_{p}P ,\; \; \;
        A  \mapsto X_{p}^{A}
      $$
    is an isomorphism of vector spaces. Thus, for every vertical vector field $X$ there exists a unique element $A\in \mathfrak{g}$ such that $X=X^{A}$.
\end{Lemma}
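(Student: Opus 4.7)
The natural strategy is to recognize the map $A \mapsto X_p^A$ as the differential at the identity of the orbit map. Define $\phi_p : G \to P$ by $\phi_p(g) = p \triangleleft g$. For any $A \in \mathfrak{g} = T_eG$, the curve $t \mapsto \exp(tA)$ represents $A$ at $e$, so by the chain rule the defining formula for $X_p^A$ gives exactly $X_p^A = (\phi_p)_{\ast,e}(A)$. Linearity of the map $A \mapsto X_p^A$ is then immediate from linearity of the differential $(\phi_p)_{\ast,e} : T_eG \to T_pP$.

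For injectivity, I would invoke the fact that the right action of $G$ on a principal bundle is free: if $p \triangleleft g = p$ then $g = e$. This means $\phi_p$ is injective, and in fact embeds $G$ into $P$ as the fiber $\pi^{-1}(\pi(p))$, where transitivity of the action on fibers (which follows from local triviality, since on a trivialization the action is just right multiplication on the $G$-factor) shows that the image of $\phi_p$ is exactly this fiber. Freeness combined with properness of the action ensures $\phi_p$ is an immersion, so $(\phi_p)_{\ast,e}$ has trivial kernel. Hence $X_p^A = 0 \Rightarrow A = 0$.

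For surjectivity, I would use that the fiber $\pi^{-1}(\pi(p))$ is a smooth embedded submanifold of $P$, being the image of the embedding $\phi_p$, and that its tangent space at $p$ coincides with $V_pP$: a tangent vector is vertical iff it is annihilated by $\pi_\ast$, which happens iff it is tangent to a level set of $\pi$, i.e.\ to the fiber. Since $\phi_p$ is a diffeomorphism onto this fiber, its differential at $e$ surjects onto $T_p(\pi^{-1}(\pi(p))) = V_pP$. Equivalently, one can just note that $\dim \mathfrak{g} = \dim G = \dim V_pP$, so injectivity already forces the map to be a bijection.

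The only step requiring real care is surjectivity, since it tacitly uses two structural facts about principal bundles that the text has mentioned but not belaboured: that the orbits of the $G$-action coincide with the fibers of $\pi$ (a consequence of local triviality), and that the vertical space as defined here is the tangent space to the fiber. Once these are acknowledged, the result reduces to the standard observation that the orbit map is an embedding whose differential at $e$ identifies $\mathfrak{g}$ with the tangent space to the orbit.
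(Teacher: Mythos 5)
Your proof is correct, but note that the paper states this lemma without any proof, so there is no in-text argument to compare against; your route is the standard one. You identify $A\mapsto X_{p}^{A}$ with the differential at $e$ of the orbit map $\phi_{p}(g)=p\triangleleft g$, obtain injectivity from freeness of the action, and surjectivity from the identification of $V_{p}P$ with the tangent space of the fiber (or by a dimension count). Two small remarks. First, properness is not needed to conclude that $(\phi_{p})_{\ast,e}$ is injective: if $X_{p}^{A}=0$, then the constant curve at $p$ and $t\mapsto p\triangleleft \exp(tA)$ are both integral curves of the fundamental vector field $X^{A}$ through $p$, so uniqueness of integral curves gives $p\triangleleft \exp(tA)=p$ for all $t$, and freeness forces $A=0$; in any case the paper assumes $G$ compact, so properness is automatic. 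Second, the facts you flag as tacit --- that the action is free, that the orbit through $p$ is the fiber $\pi^{-1}(\pi(p))$, and that $\pi$ is a submersion so that $V_{p}P=\ker\pi_{\ast,p}$ is exactly the tangent space of the fiber and has dimension $\dim G$ --- all follow at once by working in a local trivialization $\pi^{-1}(U)\cong U\times G$: for $p=(m,g)$ the orbit map is $g'\mapsto(m,gg')$, its differential at $e$ is $A\mapsto\bigl(0,(L_{g})_{\ast}A\bigr)$, and $\ker\pi_{\ast,p}=\{0\}\oplus T_{g}G$, so the map is manifestly a linear isomorphism onto $V_{p}P$. Given the paper's definition of a principal bundle via local triviality, this computation is arguably the quickest fully self-contained proof, and your more abstract argument reduces to it.
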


\begin{Theorem}
    Let $\omega$ be a 1-form connection, $p\in P$ and consider the horizontal space 
    \begin{equation}
        H_{p}P=\text{Ker}(\omega _{p})=\{X\in T_{p}P \mid \omega_{p}(X)=0\}.
    \end{equation}
     Then for all $p\in P$, the following properties hold:

    \begin{enumerate}
    \renewcommand{\theenumi}{\alph{enumi}}
    \item $H_{p}P\oplus V_{p}P=T_{p}P$.
    \item $(\triangleleft g)_{\ast}(H_{p}P)=H_{p\triangleleft g}P$ for any $g\in G$.
    \item The unique decomposition $X_{p}=\text{hor}(X_{p})+\text{ver}(X_{p})$ into its horizontal and vertical components maps every vector field $X$ to a decomposition into two vector fields, $\text{hor}(X)$ and $\text{ver}(X)$.
    \end{enumerate}

     \begin{Theorem}
        Let $H_{p}P$ be a vector subspace of $T_{p}P$ for all $p\in P$ satisfying the three properties of a horizontal space.  
    Then the 1-form $\omega$ defined as 
   $$            \omega _{p}:T_{p}P  \rightarrow \mathfrak{g},\; \; \;
            Y_{p}  \mapsto \omega _{p}(Y_{p})=(X_{p}^{(\cdot)})^{-1}(ver(Y_{p})),    $$
    for any $p\in P$ is a 1-form connection, and it satisfies $ker(\omega _{p})=H_{p}P$.
     \end{Theorem}
\end{Theorem}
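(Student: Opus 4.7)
The plan is to verify the four items that constitute the statement: well-definedness and smoothness of $\omega$, axiom (a) of a connection, axiom (b), and the identification of the kernel.

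First I would address existence and linearity. Hypothesis (a) of a horizontal distribution provides, at each $p\in P$, a unique splitting $Y_p=\mathrm{hor}(Y_p)+\mathrm{ver}(Y_p)$, and the previous lemma says that $X_p^{(\cdot)}:\mathfrak{g}\to V_pP$ is a linear isomorphism. Composing the projection onto $V_pP$ with $(X_p^{(\cdot)})^{-1}$ therefore gives a well-defined linear map $\omega_p:T_pP\to \mathfrak{g}$. Smoothness of $\omega$ reduces to smoothness of the vertical projection and of the map sending $p$ to $(X_p^{(\cdot)})^{-1}$, both of which follow from property (c) together with the standard smoothness of fundamental vector fields.

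Axiom (a) and the identification of the kernel are essentially immediate. Since $X^A$ is vertical (shown in the earlier lemma), $\mathrm{ver}(X_p^A)=X_p^A$ and so $\omega_p(X_p^A)=(X_p^{(\cdot)})^{-1}(X_p^A)=A$. For the kernel, $\omega_p(Y_p)=0$ is equivalent to $\mathrm{ver}(Y_p)=0$, which by the direct-sum decomposition is equivalent to $Y_p\in H_pP$.

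The main work is axiom (b), and it is here that I expect the only real obstacle. The first step is to establish the intertwining relation
\begin{equation*}
(\triangleleft g)_{\ast}X_p^{A}=X_{p\triangleleft g}^{\mathrm{Ad}_{g^{-1}}A},
\end{equation*}
which I would obtain by evaluating both sides on $f\in C^{\infty}(P)$ and using $\exp(tA)\,g=g\,\exp(t\,\mathrm{Ad}_{g^{-1}}A)$. Next, I would observe that $(\triangleleft g)_{\ast}$ preserves both summands of the decomposition: it maps $V_pP$ to $V_{p\triangleleft g}P$ because $\pi\circ\triangleleft g=\pi$, and it maps $H_pP$ to $H_{p\triangleleft g}P$ by hypothesis (b) on the horizontal distribution. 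Consequently $\mathrm{ver}(\cdot)$ and $(\triangleleft g)_{\ast}$ commute. Writing $\mathrm{ver}(Y_p)=X_p^{A}$ with $A=\omega_p(Y_p)$, and chaining these two facts yields
\begin{equation*}
((\triangleleft g)^{\ast}\omega)_{p}(Y_p)=\omega_{p\triangleleft g}\bigl((\triangleleft g)_{\ast}Y_p\bigr)=(X_{p\triangleleft g}^{(\cdot)})^{-1}\bigl(X_{p\triangleleft g}^{\mathrm{Ad}_{g^{-1}}A}\bigr)=\mathrm{Ad}_{g^{-1}}A=(\mathrm{Ad}_{g^{-1}})_{\ast}\omega_p(Y_p),
\end{equation*}
which is precisely axiom (b). The only genuinely nontrivial ingredient is the intertwining identity above; once it is in hand, the rest of the verification is bookkeeping.
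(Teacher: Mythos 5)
Your treatment of the converse (inner) theorem is essentially correct, and since the paper states this result without any proof there is no in-text argument to compare against. The well-definedness and linearity from the splitting $T_pP=H_pP\oplus V_pP$, the computation $\omega_p(X_p^A)=A$, the kernel identification, and the equivariance obtained from the intertwining identity $(\triangleleft g)_{\ast}X_{p}^{A}=X_{p\triangleleft g}^{\mathrm{Ad}_{g^{-1}}A}$ combined with the observation that $(\triangleleft g)_{\ast}$ maps $V_pP$ to $V_{p\triangleleft g}P$ and $H_pP$ to $H_{p\triangleleft g}P$ (hence commutes with the vertical projection) are exactly the right ingredients, and the identity itself is correctly reduced to $\exp(tA)\,g=g\,\exp(t\,\mathrm{Ad}_{g^{-1}}A)$.

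The gap is that you only prove half of the statement. The outer theorem — that for a \emph{given} connection $\omega$ the spaces $H_pP:=\ker\omega_p$ satisfy properties (a), (b), (c) — is never addressed; in your proposal those properties appear only as hypotheses for the converse construction. Closing it requires three short arguments: (i) $\ker\omega_p\cap V_pP=0$, since a vertical vector equals $X_p^A$ for a unique $A$ and then $\omega_p(X_p^A)=A$ forces $A=0$; (ii) every $Y_p$ splits as $Y_p=\bigl(Y_p-X_p^{\omega_p(Y_p)}\bigr)+X_p^{\omega_p(Y_p)}$, where the first summand lies in $\ker\omega_p$ by axiom (a) of a connection, so $H_pP\oplus V_pP=T_pP$; (iii) if $\omega_p(X)=0$ then $\omega_{p\triangleleft g}\bigl((\triangleleft g)_{\ast}X\bigr)=(\mathrm{Ad}_{g^{-1}})_{\ast}\omega_p(X)=0$ by axiom (b), and applying the same reasoning with $g^{-1}$ gives the equality $(\triangleleft g)_{\ast}(H_pP)=H_{p\triangleleft g}P$; smoothness of the induced decomposition of vector fields (property (c)) then follows from smoothness of $\omega$ and of $A\mapsto X^A$. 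These steps are routine, but they are part of the claim and are missing from your proposal.
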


\begin{Definition}
    The Maurer-Cartan 1-form $\Theta$ is defined as
    \begin{equation}
        \begin{split}
            \Theta _{g}:T_{g}G & \rightarrow T_{e}G\equiv \mathfrak{g}\\
            X_{g} & \rightarrow (L_{g^{-1}})_{\ast}(X_{g})
        \end{split}
    \end{equation}
    for any $g\in G$, where $L_{g}:G  \rightarrow G,\; \; \;
            h  \rightarrow gh,
  $
    is the left translation respect to $g$.

\end{Definition}

\begin{Definition}
    Let $\pi: P\rightarrow M$ be a principal $G$-bundle, and $\omega$ be a 1-form connection. Choose a local section $\sigma: U\rightarrow P$, where $U\subset M$, also known as a gauge (local). The local Yang-Mills potential (or local connection) is the $\mathfrak{g}$-valued 1-form $A$ (defined on $U\subset M$), given by,
    \begin{equation}
        A=\sigma ^{\ast}\omega.
    \end{equation}

\end{Definition}

\begin{Definition}
    Let $\sigma_{1}: U_{1}\rightarrow P$ and $\sigma_{2}: U_{2}\rightarrow P$ be local sections such that $U_{1}\cap U_{2} \neq \emptyset$. We refer to the local gauge transformation as the function
          $$  g:U_{1}\cap U_{2}  \rightarrow G, \; \; \; m  \mapsto g(m),$$
    where $g(m)$, for any $m\in U_{1}\cap U_{2}$, is defined as the unique element of $G$ such that
    \begin{equation*}
        \sigma _{2}(m)=\sigma_{1}(m)\triangleleft g(m)\equiv (\triangleleft g \circ \sigma _{1})(m),
    \end{equation*}
     where
   $$\triangleleft g:\pi ^{-1}(U_{1}\cap U_{2})  \rightarrow P,\; \; \;
             p  \rightarrow p\triangleleft g(\pi (p)).$$\\
    Such element always exists since $\sigma _{1}(m)$ and $\sigma _{2}(m)$ are in the same fiber.

\end{Definition}

\begin{Remark}
     Let $g: M \rightarrow G$ be a smooth function, $\sigma_{1}: U \rightarrow P$ a local section and $g _{U}: U \rightarrow P$ the restriction of $g$ to the subset $U$. Since $\pi$ is $G$-invariant, the function $\sigma_{2}: U \rightarrow P$ defined as $\sigma_{2}(m)=\sigma_{1}(m)\triangleleft g_{U}(m)$ is also a local section. Thus, any smooth function $g: M \rightarrow G$ will be referred to as a gauge transformation.
\end{Remark}

\begin{Theorem}
     Let $\sigma _{1}:U_{1}\rightarrow P$, $\sigma _{2}:U_{2}\rightarrow P$ be local sections, $\omega$ a 1-form connection, and $g:U_{1}\cap U_{2}\rightarrow G$ the local gauge transformation. Then, 
     \begin{equation}
        A_{2}=(Ad_{g^{-1}})_{\ast}A_{1}+g^{\ast}\Theta,
    \end{equation}

    \noindent where $A_{1}=\sigma _{1}^{\ast}\omega$ and $A_{2}=\sigma _{2}^{\ast}\omega=(\triangleleft g \circ \sigma _{1})^{\ast}\omega$. If $G\subset GL(n)$ the above identity simplifies to
    \begin{equation}
       A_{2}=g^{-1}A_{1}g+g^{-1}dg,
    \end{equation}
    where $g^{-1}:U_{1}\cap U_{2}\rightarrow G$ is defined as $g^{-1}(m)=(g(m))^{-1}\in G$ for any $m\in U_{1}\cap U_{2}$, that is, the inverse of $g$ at each point, and $dg$ is the \emph{exterior derivative} of $g$.

\end{Theorem}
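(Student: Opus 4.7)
The plan is to compute $A_2(X_m) = \omega_{\sigma_2(m)}(\sigma_{2*}X_m)$ directly for an arbitrary $m\in U_1\cap U_2$ and tangent vector $X_m\in T_mM$, and recognize each piece using the defining properties (a) and (b) of a connection together with the definition of $\Theta$. Since $\sigma_2(m)=\sigma_1(m)\triangleleft g(m)$, the central step is to decompose $\sigma_{2*}X_m$ into a horizontal-style piece coming from the variation of $\sigma_1$ and a vertical piece coming from the variation of $g$.

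Concretely, I would pick a curve $c\colon(-\varepsilon,\varepsilon)\to M$ with $c(0)=m$ and $\dot c(0)=X_m$, set $p(t)=\sigma_1(c(t))$ and $h(t)=g(c(t))$, and apply the Leibniz rule for the right action,
\begin{equation*}
\tfrac{d}{dt}\bigl[p(t)\triangleleft h(t)\bigr]_{t=0}
=(\triangleleft h(0))_{\ast}\dot p(0)\;+\;\tfrac{d}{dt}\bigl[p(0)\triangleleft h(t)\bigr]_{t=0}.
\end{equation*}
The first summand is exactly $(\triangleleft g(m))_{\ast}\sigma_{1\ast}X_m$. For the second summand I would write $h(t)=h(0)\cdot\bigl(h(0)^{-1}h(t)\bigr)$; the curve $h(0)^{-1}h(t)$ starts at $e$ with initial velocity $(L_{g(m)^{-1}})_{\ast}g_{\ast}X_m=\Theta_{g(m)}(g_{\ast}X_m)=(g^{\ast}\Theta)_m(X_m)$, call this element $A\in\mathfrak{g}$. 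Then $\sigma_1(m)\triangleleft h(t)$ agrees to first order with $\sigma_2(m)\triangleleft\exp(tA)$, so the second summand equals the fundamental vector field $X^{A}_{\sigma_2(m)}$.

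Now I would apply $\omega_{\sigma_2(m)}$ to the decomposition. On $(\triangleleft g(m))_{\ast}\sigma_{1\ast}X_m$, property (b) gives $(Ad_{g(m)^{-1}})_{\ast}\omega_{\sigma_1(m)}(\sigma_{1\ast}X_m)=(Ad_{g(m)^{-1}})_{\ast}A_1(X_m)$. On $X^{A}_{\sigma_2(m)}$, property (a) gives $A=(g^{\ast}\Theta)_m(X_m)$. Summing these pointwise yields the claimed identity $A_2=(Ad_{g^{-1}})_{\ast}A_1+g^{\ast}\Theta$. For the matrix-group specialization $G\subset GL(n)$, the adjoint action reduces to conjugation, $(Ad_{g^{-1}})_{\ast}X=g^{-1}Xg$, and a short calculation with the left-invariant trivialization identifies $g^{\ast}\Theta=g^{-1}dg$, giving the displayed corollary.

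The main obstacle is the Leibniz step: one must justify differentiating the map $m\mapsto\sigma_1(m)\triangleleft g(m)$ through the right action $\triangleleft\colon P\times G\to P$, and then correctly identify the ``group-variation'' piece with the fundamental vector field $X^{(g^{\ast}\Theta)_m(X_m)}$ at $\sigma_2(m)$ rather than at $\sigma_1(m)$; this is where the Maurer–Cartan form enters naturally through the left trivialization $h(t)=h(0)\cdot h(0)^{-1}h(t)$. Once that identification is secured, the connection axioms (a) and (b) convert each piece into its target expression with essentially no further work.
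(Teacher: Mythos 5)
Your proof is correct: the decomposition of $\sigma_{2\ast}X_m$ via the Leibniz rule for the action map, the identification of the group-variation piece with the fundamental vector field $X^{A}_{\sigma_2(m)}$ for $A=(g^{\ast}\Theta)_m(X_m)$ through the left trivialization $h(t)=h(0)\cdot h(0)^{-1}h(t)$, and the subsequent application of connection properties (a) and (b) are exactly what is needed, and the matrix-group specialization $(Ad_{g^{-1}})_{\ast}X=g^{-1}Xg$, $g^{\ast}\Theta=g^{-1}dg$ is handled correctly. The paper itself states this theorem without proof, treating it as a standard result from its differential-geometry references, and your argument is precisely the standard curve-based proof one finds there, so there is no divergence to report.
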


\begin{Remark}
    \noindent In a chart $(U,x)$ in $M$, we can express the local components of $g^{-1}dg$ as
    \begin{equation*}
        (g^{-1}dg)_{j}^{i}=(g^{-1})_{k}^{i}(\frac{\partial}{\partial x^{\mu}}g_{j}^{k})dx^{\mu},
    \end{equation*}
that is,
\begin{equation*}
        g^{-1}dg=g^{-1}\partial_{\mu}g\; dx^{\mu}.
    \end{equation*}
\noindent If we consider a principal $G$-bundle equipped with a left action, then the gauge transformation of the connection is written as
\begin{equation*}
        A_{2}=g^{-1}A_{1}g-g^{-1}dg,
    \end{equation*}
or equivalently, $A_{1}=gA_{2}g^{-1}-gdg^{-1}$, which can be addressed by making the change.
\begin{equation}
    A\longrightarrow -A,
\end{equation}
in
\begin{equation}
    A_{2}=g^{-1}A_{1}g+g^{-1}dg,
\end{equation}
or equivalently, $A_{1}=gA_{2}g^{-1}+gdg^{-1}$. The preceding expressions are typically written as elements of the Lie algebra of the Lie Group $G$. Alternatively, if we express the connection in terms of the generator as usual, where we make the change
\begin{equation}
    A\longrightarrow-iA,
\end{equation}
then the gauge transformation of the connection is written as
\begin{equation}
    A_{2}=g^{-1}A_{1}g+ig^{-1}dg,
\end{equation}
or equivalently, $A_{1}=gA_{2}g^{-1}+igdg^{-1}$.
\end{Remark}

\begin{Theorem}
    Let $\pi: P \to M$ be a principal $G$-bundle equipped with a 1-form connection $\omega$, and let $\gamma: [0,1] \to M$ be a smooth curve with $\gamma(0) = x_{i}$ the initial point. Choose a point $p \in \pi^{-1}(x_{i})$ in the fiber over $x_{i}$. Then, there exists a unique smooth curve $\gamma^\uparrow: [0,1] \to P$ passing through the point $p$ (i.e., $\gamma^\uparrow(0) = p$) and satisfying the following conditions:
    \begin{enumerate}
        \item $\pi \circ \gamma ^{\uparrow}=\gamma$, meaning it lies in the fibers of the points that $\gamma$ passes through.
        \item $\omega _{\gamma ^{\uparrow}(t)}(X_{\gamma ^{\uparrow},\gamma ^{\uparrow}(t)})=0$, for any $t \in [0,1]$, indicating it is horizontal.

    \end{enumerate}
This curve is called the \emph{horizontal lift} of $\gamma$ through the point $p$.
\end{Theorem}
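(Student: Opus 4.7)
The plan is to reduce the construction of the horizontal lift to solving a first-order ODE on the structure group $G$, where Picard--Lindel\"of gives existence and uniqueness. Working locally I will write any candidate lift as a smooth reference lift modified by a group-valued function $g(t)$, and convert horizontality into an ODE for $g$.

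First I would use compactness of $[0,1]$ together with local triviality of $\pi$ to cover $\gamma([0,1])$ by finitely many open sets $U_1,\dots,U_N\subset M$ over which $P$ trivializes, with local sections $\sigma_k:U_k\to P$, and pick a partition $0=t_0<\cdots<t_N=1$ with $\gamma([t_{k-1},t_k])\subset U_k$. On $[t_{k-1},t_k]$ the curve $\delta_k(t):=\sigma_k(\gamma(t))$ is a smooth reference lift of $\gamma$, and since each fiber is a principal homogeneous $G$-space any other smooth lift can be written uniquely as $\gamma^{\uparrow}(t)=\delta_k(t)\triangleleft g_k(t)$ for a smooth $g_k:[t_{k-1},t_k]\to G$.

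Next I would derive the horizontality equation for $g_k$. Splitting $g_k(t+s)=g_k(t)\,[g_k(t)^{-1}g_k(t+s)]$ and differentiating $\gamma^{\uparrow}=\delta_k\triangleleft g_k$ decomposes $\dot\gamma^{\uparrow}(t)$ into the piece $(\triangleleft g_k(t))_{\ast}\dot\delta_k(t)$ plus the fundamental vector field $X^{g_k^{\ast}\Theta(\partial_t)}$ at $\gamma^{\uparrow}(t)$. Applying $\omega$, using the equivariance $(\triangleleft g)^{\ast}\omega=(Ad_{g^{-1}})_{\ast}\omega$ on the first piece and $\omega(X^{A})=A$ on the second, the condition $\omega(\dot\gamma^{\uparrow})=0$ becomes
\begin{equation*}
g_k^{\ast}\Theta(\partial_t)=-(Ad_{g_k(t)^{-1}})_{\ast}A_k(t),\qquad A_k(t):=(\sigma_k^{\ast}\omega)_{\gamma(t)}(\dot\gamma(t)),
\end{equation*}
which for matrix groups reduces to the linear ODE $\dot g_k=-A_k(t)\,g_k$. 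This is a smooth time-dependent ODE on $G$, so Picard--Lindel\"of produces a unique smooth solution on the entire subinterval $[t_{k-1},t_k]$ once an initial value $g_k(t_{k-1})$ is prescribed.

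Finally I would assemble the global lift inductively: start with $\gamma^{\uparrow}(0)=p$, which fixes $g_1(0)$ as the unique element with $\sigma_1(\gamma(0))\triangleleft g_1(0)=p$, solve the ODE on $[t_0,t_1]$, and use the endpoint $\gamma^{\uparrow}(t_1)$ to fix the initial condition for the next subinterval, continuing until $t_N=1$. Smoothness at each junction follows because on a slightly enlarged overlap two horizontal lifts through the same point satisfy the same ODE, so they coincide by the uniqueness clause, which in turn gives global uniqueness of $\gamma^{\uparrow}$. The main technical delicacy I expect is keeping the ODE intrinsic when $G$ is not a matrix group: the symbol $g^{-1}\dot g$ must be interpreted as $g^{\ast}\Theta(\partial_t)$, and one appeals to the general fact that time-dependent vector fields on a Lie group admit complete flows over a compact time interval. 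With this in place, properties (1) and (2) of the theorem hold by construction.
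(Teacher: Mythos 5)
Your proposal is correct and follows essentially the same route as the paper: write the lift locally as $\sigma_k(\gamma(t))\triangleleft g_k(t)$, turn horizontality into the ODE $\dot g = -A(\dot\gamma)\,g$ (which the paper solves explicitly as a path-ordered exponential while you invoke Picard--Lindel\"of), and patch the local solutions over a finite cover of $\gamma([0,1])$ obtained from compactness, with uniqueness inherited from the ODE. Your extra care with the non-matrix-group case via the Maurer--Cartan form is a small refinement, not a different method.
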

\noindent The construction of the horizontal lift leads to important concepts within gauge theories, such as parallel transport and holonomies. Locally, let $\sigma :U\rightarrow P$ be a local section such that $\gamma (0)\in U$, and let $g_{0}\in G$ be the group element such that $p=\sigma (\gamma (0)) \triangleleft g_{0}$. Then, we can express $\gamma ^{\uparrow}(t)=\sigma (\gamma (t)) \triangleleft g(\gamma (t))$, where $g:U \rightarrow G$, with the initial condition $g(\gamma (0))=g_{0}$ is such that
\begin{equation}
\label{eqdifhol}
    dg_{\gamma (t)}X_{\gamma,\gamma (t) }=-(Ag)_{\gamma (t)}X_{\gamma,\gamma (t) },
\end{equation}
where $A=\sigma ^{\ast}\omega $. This can be obtained from the condition of horizontability  
\begin{equation*}
    0=\omega _{\gamma ^{\uparrow}(t)}(X_{\gamma ^{\uparrow},\gamma ^{\uparrow}(t)}),
\end{equation*}
along with expression
\begin{equation*}
    \gamma ^{\uparrow}=(\triangleleft  g\circ \sigma)\circ \gamma .
\end{equation*}
and applying the gauge transformation.\\
Simplifying the left-hand side of eq.~\eqref{eqdifhol},
\begin{equation*}
    dg_{\gamma (t)}X_{\gamma,\gamma (t) }=X_{\gamma,\gamma (t) }g=\frac{d}{dt}(g\circ \gamma)(t),
\end{equation*}
and the right-hand side,
\begin{equation*}
   (Ag)_{\gamma (t)}X_{\gamma,\gamma (t) }=A_{\mu}gdx^{\mu}X_{\gamma,\gamma (t) }=A_{\mu}gX_{\gamma,\gamma (t) }x^{\mu}=A_{\mu}(\gamma (t))\dot{\gamma}^{\mu}(t)g(\gamma (t)),
\end{equation*}
we obtain,
\begin{equation}
\label{eqdifhol2}
    \frac{d}{dt}(g\circ \gamma)(t)=-A_{\mu}(\gamma (t))\dot{\gamma}^{\mu}(t)(g\circ \gamma)(t).
\end{equation}
The solution to the differential equation \eqref{eqdifhol2} for $t\in \gamma ^{-1}(U)$ with the initial condition $g(\gamma (0))=g_{0}$ is
\begin{equation}
    g(\gamma (t))= \mathcal{P}\textup{exp} (\int _{\gamma _{[0,t]}}-A)g_{0},
\end{equation}
where $\gamma _{[0,t]}$ is the restriction of $\gamma$ to the set $[0,t]$ and 
\begin{equation*}
\begin{split}
    \mathcal{P}\textup{exp} (\int _{\gamma _{[0,t]}}-A)=[1&-\int _{0}^{t}d\lambda _{1}A_{\mu}(\gamma (\lambda _{1}))\dot{\gamma}^{\mu}(\lambda _{1})\\
    &+\int _{0}^{t}d\lambda _{1}\int _{0}^{\lambda _{1}}d\lambda _{2}A_{\mu}(\gamma (\lambda _{1}))\dot{\gamma}^{\mu}(\lambda _{1})A_{\mu}(\gamma (\lambda _{2}))\dot{\gamma}^{\mu}(\lambda _{2})-  ...],
\end{split}
\end{equation*}
is the path-ordered exponential.
\noindent The (local) uniqueness follows from the uniqueness of the solution to a differential equation with initial condition. To construct the complete horizontal lift (not only locally), we proceed as follows. Since the interval $[0,1]$ is compact and connected, the image of $\gamma$ is also compact and connected. Therefore, there exists a finite cover composed of trivializations (or gauges) ${(U_i, \sigma_i)}_{i=1,\ldots,n}$, where $\sigma_i$ for $i=1,\ldots,n$ are local sections. The horizontal lift is constructed as follows: 

\noindent Choose an $i\in {1,\ldots,n}$ such that $\gamma(0) \in U_i$. As above, the horizontal lift can be written locally as
\begin{equation}
    \gamma ^{\uparrow}_{i}(t)=\sigma _{i} (\gamma (t))\triangleleft \mathcal{P}\textup{exp} (\int _{\gamma _{[0,t]}}-A_{i})g_{0} ,
\end{equation}
where $A_{i}=\sigma _{i}^{\ast} \omega$ and $t\in \gamma ^{-1}(U_{i})$. If $\gamma [0,1]\subset U_{i}$ then we are done. Otherwise, choose a $j\neq i$ and $t_{ij}\in [0,1]$ such that $\gamma (t_{ij})\in U_i \cap U_j$ (connected). Then, the horizontal lift for $t\in \gamma ^{-1}(U_{j})\cap [t_{ij},1]$ is
\begin{equation}
    \gamma ^{\uparrow}_{j}(t)=\sigma _{j}( \gamma(t))\triangleleft \mathcal{P}\textup{exp} (\int _{\gamma _{[t_{ij},t]}}-A_{j})g_{1} ,
\end{equation}
where $g_{1}$ (initial condition) is such that
\begin{equation*}
    \sigma _{j}(\gamma (t_{ij}))\triangleleft g_{1}=\gamma ^{\uparrow}_{i}(t_{ij}).
\end{equation*}
If $\gamma [0,1]\subset U_{i}\cup U_{j}$ then we are done. Otherwise, we continue as above.\\

\begin{Remark}
The expression
\begin{equation}
    \mathcal{P}\textup{exp} (\int _{\gamma }-A),
\end{equation}
is called holonomy and has interesting properties as a gauge transformation
\begin{equation}
    \mathcal{P}\textup{exp} (\int _{\gamma}-A_{2})=g(x_{f})^{-1}\mathcal{P}\textup{exp} (\int _{\gamma }-A_{1})g(x_{i}),
\end{equation}
where $x_{i}$ and $x_{f}$ are the initial and final points, respectively, of $\gamma$ and $g$ is the gauge transformation between the local connections. This relation can be derived from the uniqueness of the horizontal lift and the gauge relations between the corresponding sections.\end{Remark}

\noindent Consider a principal $G$-bundle equipped with a left action, the holonomy is 
\begin{equation}
    \mathcal{P}\textup{exp} (\int _{\gamma }A),
\end{equation}
which can be addressed  by making the change
\begin{equation}
    A\longrightarrow -A.
\end{equation}
Alternatively, if we express the connection in terms of generators as usual, where we make the change
\begin{equation*}
    A\longrightarrow-iA,
\end{equation*}
the holonomy becomes
\begin{equation}
    \mathcal{P}\textup{exp} (i \int _{\gamma }A),
\end{equation}

\noindent which is also referred to as the Wilson line.

\begin{Definition}
    The curvature of the 1-form connection $\omega$ is a $\mathfrak{g}$-valued 2-form defined as
    \begin{equation}
        \Omega (X_{1},X_{2})=d\omega (hor(X_{1}),hor(X_{2})), 
    \end{equation}
    for any $X_{1}, X_{2} \in \mathfrak{X}(P)$.
   The exterior covariant derivative $d_{\omega}\phi$ of a $\mathfrak{g}$-valued $(n-1)$-form  $\phi$ is defined as
    \begin{equation*}
        D\phi =d\phi \circ hor_{n},
    \end{equation*}
    where $hor_{n}:TP^{n}\rightarrow HP^{n}$ is the horizontal projection on each component.\end{Definition}
    
\begin{Theorem}
\label{curvatura}
The curvature can be expressed as
    \begin{equation}
        \Omega =d\omega + \omega \wedge \omega,
    \end{equation}
   where
   \begin{equation}
        \omega \wedge \omega (X_{1},X_{2})=[\omega (X_{1}),\omega (X_{2})],
    \end{equation}
   for any $X_{1}, X_{2} \in \mathfrak{X}(P)$, with $[\ ]$ denoting the commutator in the Lie algebra $\mathfrak{g}$.
\end{Theorem}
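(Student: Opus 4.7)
The plan is to check the identity $\Omega=d\omega+\omega\wedge\omega$ pointwise by evaluating both sides on a pair of tangent vectors extended to vector fields. Both sides are $C^{\infty}(P)$-bilinear and alternating, so by the splitting $T_{p}P=H_{p}P\oplus V_{p}P$ supplied by the previous theorem, it suffices to verify the equality in the three cases: (i) both arguments horizontal, (ii) both arguments vertical, and (iii) one horizontal and one vertical. Throughout, I would use the standard global formula
\begin{equation*}
d\omega(X,Y)=X\bigl(\omega(Y)\bigr)-Y\bigl(\omega(X)\bigr)-\omega([X,Y]).
\end{equation*}

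For case (i), if $X_{1},X_{2}$ are horizontal then $\omega(X_{1})=\omega(X_{2})=0$, so $\omega\wedge\omega(X_{1},X_{2})=0$, and by definition $\Omega(X_{1},X_{2})=d\omega(\mathrm{hor}(X_{1}),\mathrm{hor}(X_{2}))=d\omega(X_{1},X_{2})$. Both sides agree trivially. For case (ii), I would reduce to fundamental vector fields $X^{A},X^{B}$, using the isomorphism of the earlier lemma together with the tensorial character of each side. Since $\mathrm{hor}(X^{A})=0$, the left-hand side vanishes. On the right, $\omega(X^{A})=A$ and $\omega(X^{B})=B$ are constants, so the first two terms in the global formula for $d\omega$ drop out, leaving $d\omega(X^{A},X^{B})=-\omega([X^{A},X^{B}])$. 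Invoking the identity $[X^{A},X^{B}]=X^{[A,B]}$ for right-action fundamental vector fields, one obtains $d\omega(X^{A},X^{B})=-[A,B]$, which precisely cancels $\omega\wedge\omega(X^{A},X^{B})=[A,B]$.

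The mixed case (iii) is the main obstacle. Take $Y$ horizontal and $X^{A}$ vertical. Both $\Omega(Y,X^{A})$ and $\omega\wedge\omega(Y,X^{A})$ vanish (the former because $\mathrm{hor}(X^{A})=0$, the latter because $\omega(Y)=0$), so I must show that $d\omega(Y,X^{A})=0$. The first term $Y(\omega(X^{A}))=Y(A)$ vanishes because $A$ is constant; the second term $X^{A}(\omega(Y))$ vanishes because $\omega$ annihilates horizontal fields; the whole equality reduces to proving $\omega([Y,X^{A}])=0$, that is, $[Y,X^{A}]$ is horizontal. This is where the second axiom of a connection, $(\triangleleft g)^{\ast}\omega=(\mathrm{Ad}_{g^{-1}})_{\ast}\omega$, enters in an essential way: it implies $(\triangleleft g)_{\ast}H_{p}P=H_{p\triangleleft g}P$, so the flow of $X^{A}$ (which is the right action by $\exp(tA)$) preserves the horizontal distribution. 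Differentiating at $t=0$ yields $\mathcal{L}_{X^{A}}Y=[X^{A},Y]$ horizontal whenever $Y$ is horizontal, as required.

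Putting the three cases together, the identity $\Omega(X_{1},X_{2})=d\omega(X_{1},X_{2})+[\omega(X_{1}),\omega(X_{2})]$ extends by bilinearity and antisymmetry to arbitrary $X_{1},X_{2}\in\mathfrak{X}(P)$. I expect the bookkeeping of the vertical-vertical case (in particular fixing sign conventions for fundamental vector fields under a right action) and the flow-invariance argument in case (iii) to be where care is needed; everything else is a matter of applying the defining properties of $\omega$ and of the horizontal projection.
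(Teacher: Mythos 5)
Your proof is correct. Note that the paper itself states this theorem (Cartan's structure equation) without any proof, deferring implicitly to the standard references it cites, so there is no in-paper argument to compare against; your write-up is the classical three-case verification one finds in those texts. The logical skeleton is sound: all three expressions ($\Omega$, $d\omega$, and $[\omega(\cdot),\omega(\cdot)]$) are tensorial, which legitimizes checking the identity on the decomposition $T_{p}P=H_{p}P\oplus V_{p}P$ with convenient extensions (fundamental fields for vertical vectors, horizontal fields for horizontal vectors). The two delicate points are exactly the ones you flag and resolve correctly: in the vertical--vertical case, the identity $[X^{A},X^{B}]=X^{[A,B]}$, which holds with this sign precisely because the action is a right action; and in the mixed case, the horizontality of $[X^{A},Y]$, obtained from the fact that the flow of $X^{A}$ is $\triangleleft\exp(tA)$ and that right translations preserve the horizontal distribution (property (b) of the horizontal-space theorem, itself a consequence of the equivariance axiom $(\triangleleft g)^{\ast}\omega=(Ad_{g^{-1}})_{\ast}\omega$), together with the closedness of the finite-dimensional subspace $H_{p}P$ under the limit defining the Lie derivative. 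An equivalent shortcut for the mixed case, worth knowing, is to compute $\omega([X^{A},Y])=X^{A}(\omega(Y))-(\mathcal{L}_{X^{A}}\omega)(Y)$ and use $\mathcal{L}_{X^{A}}\omega=-[A,\omega]$, which vanishes on horizontal $Y$; but your flow argument is equally valid.
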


\begin{Theorem}
    The curvature satisfies the Bianchi identity
    \begin{equation}
        d_{\omega}\Omega =0.
    \end{equation}
  
\end{Theorem}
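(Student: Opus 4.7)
The plan is to exploit the structure equation of Theorem~\ref{curvatura},
\[ \Omega = d\omega + \omega \wedge \omega, \]
to reduce the Bianchi identity to a near-trivial observation about $\omega$ evaluated on horizontal vectors.

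First, I would apply $d$ to both sides and use $d^{2}=0$, which eliminates the first term and leaves
\[ d\Omega = d(\omega \wedge \omega). \]
Expanding by the graded Leibniz rule for $\mathfrak{g}$-valued wedge products yields a sum of 3-forms of the schematic shape $d\omega \wedge \omega$ and $\omega \wedge d\omega$, understood with the bracket structure induced from $\mathfrak{g}$. The crucial qualitative fact is that \emph{every} such term carries at least one factor of $\omega$; none of them reduce to a bare $d\omega$.

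Next, I would invoke the definition $d_{\omega}\Omega = d\Omega \circ \text{hor}_{3}$ and evaluate on arbitrary vector fields $X_{1},X_{2},X_{3} \in \mathfrak{X}(P)$,
\[ d_{\omega}\Omega(X_{1},X_{2},X_{3}) = d(\omega \wedge \omega)\bigl(\text{hor}(X_{1}),\text{hor}(X_{2}),\text{hor}(X_{3})\bigr). \]
By the earlier theorem on horizontal spaces, $H_{p}P = \ker(\omega_{p})$, so $\omega(\text{hor}(X_{i}))=0$ for each $i$. In the Leibniz expansion just described, every summand contracts $\omega$ against at least one of the horizontal arguments and so vanishes identically. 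Therefore $d_{\omega}\Omega = 0$, as required.

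The only potentially fiddly step is keeping the sign and bracket conventions of the Lie-algebra-valued wedge product consistent with Theorem~\ref{curvatura}; however this is not a genuine obstacle, since the vanishing argument depends only on the presence of an $\omega$-factor in each term of the Leibniz expansion, not on its numerical coefficient. An equally clean variant would be to first rewrite $d\omega = \Omega - \omega \wedge \omega$ and push all instances of $d\omega$ out of $d(\omega \wedge \omega)$, producing an expression of the form $[\Omega,\omega] - [\omega,\omega,\omega]$ (in self-explanatory notation), which also vanishes after projecting onto horizontal arguments.
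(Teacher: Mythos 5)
Your proposal is correct and takes essentially the same route as the paper's proof: apply $d$ to $\Omega = d\omega + \omega\wedge\omega$, kill the $d\omega$ term with $d^{2}=0$, expand $d(\omega\wedge\omega)$ by the graded Leibniz rule into terms of the form $d\omega\wedge\omega$ and $\omega\wedge d\omega$, and observe that each such term vanishes when composed with $\mathrm{hor}_{3}$ because $\omega$ annihilates horizontal vectors ($H_{p}P=\ker\omega_{p}$). Your write-up is simply a more explicit version of the paper's one-line argument, so no further changes are needed.
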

\begin{proof}
From the identity $\omega (X)=0$ for all $X\in HP$ and $d^{2}=0$, 
      $$   d_{\omega}\Omega=d(d\omega + \omega \wedge \omega)\circ hor_{3}=(d(\omega) \wedge \omega-\omega \wedge d( \omega))\circ hor_{3}=0. $$\end{proof}
  
    \begin{Lemma}
    Let $\{g_{a}\}$ be a basis for $\mathfrak{g}$ and let $\omega=\omega ^{a}\otimes g_{a} \in \Omega^{1}(P,\mathfrak{g})$ be a $\mathfrak{g}$-valued 1-form. Then,
    \begin{equation}
        \omega \wedge \omega =\frac{1}{2}[\omega \wedge \omega],
    \end{equation}
    where
    \begin{equation}
        [\omega \wedge \omega]=\omega ^{a} \wedge \omega ^{b}\otimes [g_{a},g_{b}].
    \end{equation}
\end{Lemma}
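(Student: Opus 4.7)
The plan is to prove the identity pointwise by evaluating both sides on an arbitrary pair of vector fields $X_1, X_2 \in \mathfrak{X}(P)$, and then invoking the characterization of $\omega \wedge \omega$ that was established in Theorem \ref{curvatura}, namely $(\omega \wedge \omega)(X_1, X_2) = [\omega(X_1), \omega(X_2)]$. Since both sides of the claimed identity are $\mathfrak{g}$-valued 2-forms, showing equality on an arbitrary pair of tangent vector fields suffices.

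First I would expand the left-hand side. Writing $\omega = \omega^a \otimes g_a$ and using bilinearity of the Lie bracket in $\mathfrak{g}$,
\begin{equation*}
(\omega \wedge \omega)(X_1, X_2) = [\omega(X_1), \omega(X_2)] = \omega^a(X_1)\,\omega^b(X_2)\,[g_a, g_b].
\end{equation*}
Next I would expand the right-hand side using the standard definition of the wedge product of ordinary 1-forms,
\begin{equation*}
\tfrac{1}{2}[\omega \wedge \omega](X_1, X_2) = \tfrac{1}{2}\bigl(\omega^a(X_1)\omega^b(X_2) - \omega^a(X_2)\omega^b(X_1)\bigr)\,[g_a, g_b].
\end{equation*}

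The key step is then to exploit the antisymmetry $[g_a, g_b] = -[g_b, g_a]$ in the second summand. Relabeling the dummy indices $a \leftrightarrow b$ in that term converts it into $\omega^b(X_2)\omega^a(X_1)\,[g_a, g_b]$, which matches the first term exactly. The two contributions therefore combine into $2\,\omega^a(X_1)\omega^b(X_2)\,[g_a, g_b]$, cancelling the prefactor $\tfrac{1}{2}$ and reproducing the left-hand side.

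I do not foresee a real obstacle; the argument is purely algebraic. The only point that needs some care is the bookkeeping of the dummy-index relabeling used to merge the two terms produced by the wedge product into a single one via the bracket's antisymmetry. This is the same combinatorial manipulation that underlies the familiar formula $\tfrac{1}{2}[A, A] = A \wedge A$ for $\mathfrak{g}$-valued 1-forms in Yang--Mills theory, so once the two sides are written out in components the equality is immediate.
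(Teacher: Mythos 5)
Your proposal is correct and follows essentially the same route as the paper's proof: evaluate both sides on a pair of vector fields, expand $\tfrac{1}{2}[\omega\wedge\omega](X_1,X_2)$ via the wedge of the scalar 1-forms $\omega^a$, and use the antisymmetry of the Lie bracket together with a dummy-index relabeling to merge the two terms into $[\omega(X_1),\omega(X_2)]=(\omega\wedge\omega)(X_1,X_2)$. No gaps; the argument matches the paper's computation step for step.
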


\begin{proof}
    Let $X_{1}, X_{2} \in \Gamma (TP)$ be vector fields. Then,
     
    \begin{equation*}
        \begin{split}
            \frac{1}{2}[\omega \wedge \omega](X_{1}, X_{2}) & =\frac{1}{2}\omega ^{a} \wedge \omega ^{b}(X_{1}, X_{2})[g_{a},g_{b}]=\frac{1}{2}(\omega ^{a}(X_{1})\omega ^{b}(X_{2})-\omega ^{a}(X_{2})\omega ^{b}(X_{1}))[g_{a},g_{b}]\\[0.2cm]
            & =\omega ^{a}(X_{1})\omega ^{b}(X_{2})[g_{a},g_{b}]=[\omega ^{a}(X_{1})g_{a},\omega ^{b}(X_{2})g_{b}]\\[0.2cm]
            & =[\omega (X_{1}),\omega (X_{2})]=\omega \wedge \omega (X_{1},X_{2}).
        \end{split}
    \end{equation*}

\end{proof}

\begin{Theorem}
    Let $\pi :P\rightarrow M$ be a principal $G$-bundle, with a 1-form connection $\omega$ and a local section (gauge) $\sigma :U\rightarrow P$. Then, the local curvature $F$ defined as $\sigma ^{\ast} \Omega$ satisfies
    \begin{equation}
        F=dA+A\wedge A, 
        \label{local curvature}
    \end{equation}
    where $A=\sigma ^{\ast}\omega$ is the Yang-Mills potential (or local connection). In components,
    \begin{equation}
        F=\frac{1}{2}F_{\mu \nu}dx^{\mu}\wedge dx^{\nu}, \;\;\; A=A_{\mu}dx^{\mu},
    \end{equation}
    where $F_{\mu \nu}$ and $A_{\mu}$ are valued in the Lie algebra of the Lie group $G$. With this eq.~\eqref{local curvature} is rewritten as 
    \begin{equation}
        F_{\mu \nu}=\partial _{\mu}A_{\nu}-\partial _{\nu}A_{\mu}+[A_{\mu},A_{\nu}].
        \label{comp local curvature}
    \end{equation}

\end{Theorem}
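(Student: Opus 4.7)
The plan is to apply the pullback $\sigma^{\ast}$ to the global expression for the curvature given by Theorem~\ref{curvatura} and then unpack everything in local coordinates.

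First, I would start from $\Omega = d\omega + \omega \wedge \omega$, pull back both sides along $\sigma : U \to P$, and use two naturality properties: that $\sigma^{\ast}$ commutes with the exterior derivative ($\sigma^{\ast}d\omega = d\sigma^{\ast}\omega = dA$) and that $\sigma^{\ast}$ distributes over the wedge product of $\mathfrak{g}$-valued forms, so that $\sigma^{\ast}(\omega \wedge \omega) = (\sigma^{\ast}\omega) \wedge (\sigma^{\ast}\omega) = A \wedge A$. This immediately yields $F = dA + A \wedge A$. The preceding lemma lets me rewrite $A \wedge A = \tfrac{1}{2}[A \wedge A]$, which is the form most convenient for extracting components.

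For the component version, I would expand $A = A_{\mu}\,dx^{\mu}$ with $A_{\mu}$ taking values in $\mathfrak{g}$, compute
\begin{equation*}
dA = \partial_{\nu}A_{\mu}\,dx^{\nu}\wedge dx^{\mu} = \tfrac{1}{2}(\partial_{\mu}A_{\nu} - \partial_{\nu}A_{\mu})\,dx^{\mu}\wedge dx^{\nu},
\end{equation*}
and, using the lemma, $A \wedge A = \tfrac{1}{2}[A_{\mu},A_{\nu}]\,dx^{\mu}\wedge dx^{\nu}$. Matching this against the expansion $F = \tfrac{1}{2}F_{\mu\nu}\,dx^{\mu}\wedge dx^{\nu}$ gives $F_{\mu\nu} = \partial_{\mu}A_{\nu} - \partial_{\nu}A_{\mu} + [A_{\mu},A_{\nu}]$.

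There is no real obstacle in this argument; the content is essentially naturality of $d$ and $\wedge$ under pullback, plus a bookkeeping antisymmetrization in the indices. The only point that deserves a moment's care is justifying that $\sigma^{\ast}$ passes through the wedge of $\mathfrak{g}$-valued forms, which follows by writing $\omega = \omega^{a}\otimes g_{a}$, applying $\sigma^{\ast}$ componentwise on the scalar-valued factors $\omega^{a}$, and invoking the standard identity $\sigma^{\ast}(\alpha \wedge \beta) = \sigma^{\ast}\alpha \wedge \sigma^{\ast}\beta$ for ordinary differential forms.
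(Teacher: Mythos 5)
Your proposal is correct and follows essentially the same route as the paper: pull back $\Omega = d\omega + \omega\wedge\omega$ along $\sigma$, use naturality of $d$ and of the wedge under pullback to get $F = dA + A\wedge A = dA + \tfrac{1}{2}[A\wedge A]$, with the component formula following by straightforward antisymmetrization. The explicit coordinate computation you add is a harmless supplement the paper leaves implicit.
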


\begin{proof}
    Using Theorem \ref{curvatura} and the naturality of the exterior derivative, we obtain
   $$F\equiv \sigma ^{\ast}\Omega =\sigma ^{\ast}(d\omega + \omega \wedge \omega)=d(\sigma ^{\ast}\omega)+(\sigma ^{\ast}\omega)\wedge (\sigma ^{\ast}\omega)=dA+A\wedge A=dA+\frac{1}{2}[A\wedge A].$$
\end{proof}

\noindent The preceding expressions are typically written as elements of the Lie algebra of the Lie group $G$. Alternatively, if we express the connection and curvature in terms of generators, where we make the change
\begin{equation*}
    A\longrightarrow-iA,\;\;\; F\longrightarrow -iF,
\end{equation*}

\noindent then eq.\eqref{comp local curvature} is rewritten as
\begin{equation}
    F_{\mu \nu}=\partial_{\mu} A_{\nu}-\partial_{\nu} A_{\mu}-i[A_{\mu},A_{\nu}].
\end{equation}

\noindent After a little algebra it can be demonstrated that:
\begin{Lemma}
    Under gauge transformations, the local curvature transforms as
    \begin{equation}
        F\longrightarrow (Ad_{g^{-1}})_{\ast}F\equiv g^{-1}Fg,
    \end{equation}
    where the arrow notation indicates the change of gauge from a $\sigma _{1}$ to a $\sigma _{2}$.
\end{Lemma}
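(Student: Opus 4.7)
The plan is to prove this by a direct computation, substituting the gauge transformation $A_2 = g^{-1}A_1 g + g^{-1}dg$ (from the previous Theorem) into the local curvature formula $F_2 = dA_2 + A_2\wedge A_2$ and showing that everything beyond $g^{-1}(dA_1 + A_1\wedge A_1)g = g^{-1}F_1 g$ cancels. The single auxiliary identity I need is obtained by differentiating $g^{-1}g = \mathrm{id}$: applying $d$ and the Leibniz rule yields $dg^{-1} = -g^{-1}(dg)g^{-1}$, which I will use to rewrite the cross terms so that the bookkeeping can be done on a common footing.

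First I would compute $dA_2$ term by term, using $d^2 g = 0$ to simplify $d(g^{-1}dg) = dg^{-1}\wedge dg$:
\begin{equation*}
dA_2 \;=\; dg^{-1}\wedge A_1 g \;+\; g^{-1}(dA_1)g \;-\; g^{-1}A_1\wedge dg \;+\; dg^{-1}\wedge dg.
\end{equation*}
Next I would expand $A_2\wedge A_2$ into four pieces,
\begin{equation*}
A_2\wedge A_2 \;=\; g^{-1}(A_1\wedge A_1)g \;+\; g^{-1}A_1\wedge dg \;+\; g^{-1}dg\wedge g^{-1}A_1 g \;+\; g^{-1}dg\wedge g^{-1}dg,
\end{equation*}
where the first piece uses $g\cdot g^{-1} = \mathrm{id}$ to collapse the inner factors. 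Adding, the term $-g^{-1}A_1\wedge dg$ in $dA_2$ cancels the second piece of $A_2\wedge A_2$. Substituting the auxiliary identity $dg^{-1} = -g^{-1}(dg)g^{-1}$ into the remaining two pieces of $dA_2$, one sees that $dg^{-1}\wedge A_1 g$ cancels the third piece of $A_2\wedge A_2$, and $dg^{-1}\wedge dg$ cancels the fourth. What survives is exactly $g^{-1}(dA_1 + A_1\wedge A_1)g = g^{-1}F_1 g$.

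The main obstacle is the careful tracking of matrix placement inside the wedge products, since $A_1$, $dg$ and $dg^{-1}$ are all matrix-valued and need not commute; the key is to insert identities of the form $g g^{-1}$ to match the structure of the cross terms. A more conceptual alternative, which avoids the bookkeeping entirely, is to note that the equivariance $(\triangleleft g)^{\ast}\omega = (Ad_{g^{-1}})_{\ast}\omega$ passes to $d\omega$ and to $\omega\wedge\omega$, hence to $\Omega$. Since $\sigma_2 = \triangleleft g\circ\sigma_1$, pulling back gives $F_2 = \sigma_2^{\ast}\Omega = \sigma_1^{\ast}(\triangleleft g)^{\ast}\Omega = (Ad_{g^{-1}})_{\ast}F_1$ in one line.
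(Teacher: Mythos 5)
Your direct computation is correct and is exactly the ``little algebra'' the paper alludes to (the paper states this Lemma without writing out a proof): substituting $A_2=g^{-1}A_1g+g^{-1}dg$ into $F_2=dA_2+A_2\wedge A_2$ and using $dg^{-1}=-g^{-1}(dg)g^{-1}$, all cross terms cancel and $F_2=g^{-1}F_1g$ survives, valid in the matrix-group setting $G\subset GL(n)$ in which that transformation law was stated. One caution on your ``one-line'' alternative: property (b) of the connection, $(\triangleleft g)^{\ast}\omega=(Ad_{g^{-1}})_{\ast}\omega$, is stated for a \emph{constant} $g\in G$, whereas the gauge transformation $g$ is point-dependent, so equivariance does not simply ``pass to $d\omega$ and $\omega\wedge\omega$''; the argument still goes through, but only because $\Omega$ is horizontal, which kills the vertical correction coming from $dg$ (the very term that produces $g^{-1}dg$ at the level of $A$), so that step should be made explicit if you use that route.
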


\begin{Theorem}\label{teoderivadacovarianteequivariante}

        Let $A\in \Omega ^{1}(M,\mathfrak{g})$ be a local connection and $B\in \Omega(M,\mathfrak{g})$ be a $\mathfrak{g}$-valued $G$-equivariant form, meaning that under gauge transformations, $B$ transforms as
        \begin{equation*}
            B\longrightarrow g^{-1}Bg.
        \end{equation*}
        Then, the covariant derivative of $B$ is given by
        \begin{equation}
        \label{derivadacovarianteequivariante}
            d_{A}B=dB+[A\wedge B].
        \end{equation}
        
    \end{Theorem}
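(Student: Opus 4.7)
The strategy is to lift the problem from the base to the total space of the bundle and exploit the globally defined exterior covariant derivative $D$ introduced just before Theorem~\ref{curvatura}. A $\mathfrak{g}$-valued form $B$ on $U\subset M$ that transforms by $B\mapsto g^{-1}Bg$ under changes of gauge is precisely the $\sigma$-pullback of a horizontal, adjoint-equivariant form $\tilde B\in\Omega^{\bullet}(P,\mathfrak{g})$, i.e.\ one that is annihilated by vertical vectors and satisfies $(\triangleleft g)^{\ast}\tilde B=(Ad_{g^{-1}})_{\ast}\tilde B$. Since $D\tilde B:=d\tilde B\circ\mathrm{hor}$ is defined on $P$ and behaves well under pullback by $\sigma$, the natural thing to identify as $d_A B$ is $\sigma^{\ast}D\tilde B$. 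The whole theorem is therefore reduced to proving the global identity
\begin{equation*}
D\tilde B \;=\; d\tilde B + [\omega\wedge\tilde B],
\end{equation*}
which is the covariant-derivative analogue of the curvature formula $\Omega=d\omega+\omega\wedge\omega$ of Theorem~\ref{curvatura}.

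To prove this identity I would evaluate both sides on an arbitrary tuple of vector fields on $P$ and split into cases according to how many entries are horizontal and how many are vertical. When all entries are horizontal the identity is tautological, because $\omega$ vanishes on horizontal vectors and $D\tilde B$ agrees with $d\tilde B$ there by definition. When one entry is a fundamental vertical field $X^A$ (with $A\in\mathfrak{g}$), Cartan's formula gives $\mathcal L_{X^A}\tilde B=\iota_{X^A}d\tilde B+d\,\iota_{X^A}\tilde B$; the second term vanishes by horizontality of $\tilde B$, while the first is computed by differentiating the equivariance relation at $t=0$, producing $\mathcal L_{X^A}\tilde B=-[A,\tilde B]$. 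On the other hand, since $\omega(X^A)=A$ and $\omega$ kills the remaining horizontal entries, $[\omega\wedge\tilde B]$ contributes exactly $[A,\tilde B(\cdots)]$ on such an input, and the two cancel against $d\tilde B$. Both sides therefore vanish on any input containing a vertical vector, matching $D\tilde B=0$ in that case by construction.

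Pulling the global identity back along the local section $\sigma$ and using naturality of $d$ and of the wedge product, together with $A=\sigma^{\ast}\omega$ and $B=\sigma^{\ast}\tilde B$, gives
\begin{equation*}
d_A B \;=\; \sigma^{\ast}D\tilde B \;=\; \sigma^{\ast}d\tilde B + \sigma^{\ast}[\omega\wedge\tilde B] \;=\; dB + [A\wedge B],
\end{equation*}
which is \eqref{derivadacovarianteequivariante}. As a sanity check, one can verify directly that the right-hand side is gauge-covariant: under $A\mapsto g^{-1}Ag+g^{-1}dg$ and $B\mapsto g^{-1}Bg$, the inhomogeneous $g^{-1}dg$ terms generated by the graded Leibniz rule in $dB$ are cancelled by the corresponding contributions from $[A\wedge B]$, leaving $d_{A_2}B_2=g^{-1}(d_{A_1}B_1)g$, as required for the pullback interpretation to be consistent.

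The main obstacle is the middle step: tracking the signs and indices in $[\omega\wedge\tilde B]$ when $\tilde B$ has higher degree and more than one evaluation vector is vertical, and reducing these configurations to the single-vertical case using graded antisymmetry of $d\tilde B$ and of the bracket-wedge. Once this bookkeeping is in place, the horizontality of $\tilde B$ forces the lower cases to collapse to the one already handled by Cartan's formula, and the rest of the argument is essentially naturality of pullback.
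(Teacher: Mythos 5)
Your proposal is correct, and in fact the paper states Theorem~\ref{teoderivadacovarianteequivariante} without any proof, so there is no in-text argument to compare against; what you give is the standard way to fill that gap, and it is consistent with the paper's own framework: the exterior covariant derivative $D\phi=d\phi\circ hor$ defined on $P$, the convention $[\alpha\wedge\beta]=\alpha^{a}\wedge\beta^{b}\otimes[g_a,g_b]$, and the identification $A=\sigma^{\ast}\omega$, $B=\sigma^{\ast}\tilde B$ with $\tilde B$ the horizontal $\mathrm{Ad}$-equivariant (tensorial) lift of $B$. The key global identity $D\tilde B=d\tilde B+[\omega\wedge\tilde B]$ is exactly the analogue of Theorem~\ref{curvatura} (note the absence of the factor $\tfrac12$, which is correct for a horizontal form of adjoint type, whereas $\Omega=d\omega+\tfrac12[\omega\wedge\omega]$ carries it because $\omega$ is not horizontal), and your case analysis is the right one: all-horizontal inputs are immediate, and the single-vertical case follows from $\iota_{X^{A}}\tilde B=0$ together with $\mathcal L_{X^{A}}\tilde B=-[A,\tilde B]$ obtained by differentiating the equivariance relation, cancelling against $[\omega(X^{A}),\tilde B(\cdots)]$.

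Two points you flag as "bookkeeping" deserve to be made explicit if this were written out in full. First, for inputs with two or more vertical vectors you need both sides to vanish: $[\omega\wedge\tilde B]$ dies because $\tilde B$ is horizontal and $\omega$ occupies only one slot, while for $d\tilde B$ one uses the invariant (Koszul) formula for the exterior derivative together with the fact that $[X^{A},X^{B}]=X^{[A,B]}$ is again vertical and that brackets of fundamental with horizontal lifts can be projected away, so every term feeds $\tilde B$ a vertical argument. Second, the identification of a locally defined $B$ transforming as $B\to g^{-1}Bg$ with $\sigma^{\ast}\tilde B$ requires actually constructing $\tilde B$ on $\pi^{-1}(U)$ (e.g.\ $\tilde B_{p}=\mathrm{Ad}_{h(p)^{-1}}\bigl((\pi^{\ast}B)_{p}\bigr)$ with $p=\sigma(\pi(p))\triangleleft h(p)$), and the stated transformation law is precisely what makes this well defined across overlapping gauges. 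With these spelled out, your gauge-covariance sanity check at the end is indeed implied rather than assumed, and the argument is complete.
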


\begin{Theorem}[The reduction theorem ~\cite{coquereaux1988riemannian}]
    There is a one-to-one correspondence between G-invariant metrics on P and triples $(g_{\mu \nu}, \omega, k_{ij})$, where $g_{\mu \nu}$ is a metric on $M$, $\omega$ is a 1-form connection, and $k_{ij}(x)$ is a choice of a $G$-invariant metric on each fiber $G_{x}$.
    \label{metricaginvariante}
\end{Theorem}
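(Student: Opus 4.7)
The plan is to build the correspondence explicitly in both directions and then check the two constructions are mutually inverse.

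For the forward direction, start with a $G$-invariant metric $\tilde{g}$ on $P$. I would first use $\tilde{g}$ to define the horizontal subspace as the orthogonal complement of the vertical subspace, $H_pP := \{X \in T_pP \mid \tilde{g}_p(X,Y)=0 \;\forall\, Y \in V_pP\}$. The three properties required by the earlier Theorem follow almost for free: orthogonality gives the direct sum $T_pP = H_pP \oplus V_pP$; the equivariance $(\triangleleft g)_*(H_pP) = H_{p\triangleleft g}P$ holds because $\triangleleft g$ preserves both $\tilde{g}$ (by hypothesis) and the vertical distribution (it maps fibers to themselves); and the smooth splitting of vector fields follows from smoothness of this distribution. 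The earlier Theorem then produces the 1-form connection $\omega$. Next, define the base metric by $g_x(u,v) := \tilde{g}_p(\tilde u, \tilde v)$, where $\tilde u, \tilde v \in H_pP$ are the horizontal lifts of $u,v \in T_xM$ for a chosen $p \in \pi^{-1}(x)$; independence of $p$ within the fiber follows from $G$-invariance of $\tilde g$ together with the fact that $(\triangleleft g)_*$ sends horizontal lifts to horizontal lifts. Finally, the fiber metric $k(x)$ is the restriction of $\tilde{g}_p$ to $V_pP \cong \mathfrak{g}$; by $G$-invariance of $\tilde{g}$ it is automatically a $G$-invariant metric on each fiber.

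For the reverse direction, given a triple $(g, \omega, k)$, the connection provides the splitting $T_pP = H_pP \oplus V_pP$, and I would set
\[
 \tilde{g}_p(X,Y) := g_{\pi(p)}(\pi_* X, \pi_* Y) + k_{\pi(p)}\bigl((X_p^{(\cdot)})^{-1}\mathrm{ver}(X),\;(X_p^{(\cdot)})^{-1}\mathrm{ver}(Y)\bigr).
\]
Checking $G$-invariance decomposes into two independent pieces: on the horizontal summand the identity $\pi \circ (\triangleleft g) = \pi$ together with $(\triangleleft g)_*(H_pP) = H_{p\triangleleft g}P$ handles things trivially, while on the vertical summand the statement is exactly the assumed $G$-invariance of $k$ on each fiber.

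The main subtlety, and where I expect the real verification to lie, is showing the two constructions are mutually inverse. The round trip \emph{triple $\to$ metric $\to$ triple} recovers $g$ on horizontal lifts by definition, recovers $k$ by restriction to the vertical, and recovers $\omega$ because the $\tilde{g}$-orthogonal complement of $V_pP$ coincides pointwise with $\ker(\omega_p) = H_pP$: by construction $\tilde{g}(H_pP, V_pP) = 0$, and the unique complement of $V_pP$ of the correct dimension must be $H_pP$. The reverse composition \emph{metric $\to$ triple $\to$ metric} is equally direct, splitting cleanly on horizontal and vertical components. The hard part is not any single computation but rather tracking the point-dependence across each fiber: one must verify that the smoothly chosen inner product used to define $k$ intertwines consistently with the change of representative $p \mapsto p\triangleleft g$ inside a fiber of $P$, which is precisely what $G$-invariance of each fiber metric enforces.
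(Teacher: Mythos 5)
The paper offers no proof of this statement at all: it is quoted as ``the reduction theorem'' with a citation to Coquereaux--Jadczyk, and is never used beyond motivating the Kaluza--Klein example. Your proposal is therefore not competing with an argument in the text; it reconstructs the standard proof from the literature, and in outline it is correct: orthogonal complement of the vertical distribution gives a horizontal distribution satisfying the three properties, hence a connection via the earlier theorem; $G$-invariance plus $\pi\circ(\triangleleft g)=\pi$ makes the horizontal part descend to a well-defined metric on $M$; the restriction to the fibers gives the right-invariant fiber metrics; and the reverse construction plus the two round-trip checks (in particular, that the $\tilde g$-orthogonal complement of $V_pP$ is exactly $\ker\omega_p$) establishes the bijection.

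One point deserves tightening. In your reverse-direction formula you feed $k_{\pi(p)}$ the elements $(X_p^{(\cdot)})^{-1}\mathrm{ver}(X)$ of $\mathfrak{g}$, i.e.\ you treat $k(x)$ as an inner product on $\mathfrak{g}$ depending only on the base point. But under the right action the fundamental vector fields transform as $(\triangleleft g)_{\ast}X_p^{A}=X_{p\triangleleft g}^{\,\mathrm{Ad}_{g^{-1}}A}$, so the matrix of a right-invariant fiber metric in this identification is conjugated by $\mathrm{Ad}_{g^{-1}}$ as you move along the fiber; a fixed $k(x)$ on $\mathfrak{g}$ would only yield a $G$-invariant $\tilde g$ if $k(x)$ were $\mathrm{Ad}$-invariant, which is not assumed. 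The fix is either to read $k(x)$ as what the theorem literally says --- a $G$-invariant (right-invariant) metric tensor on the fiber $G_x$ itself, and define the vertical part of $\tilde g_p$ simply as $k_{\pi(p)}(\mathrm{ver}(X),\mathrm{ver}(Y))$ since vertical vectors are tangent to the fiber --- or to fix a local trivialization and keep track of the $\mathrm{Ad}$-twist explicitly, which is exactly the ``point-dependence across each fiber'' you allude to in your closing sentence. With that adjustment the argument is complete and is the same construction used in the cited reference.
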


\subsection{Physical applications}

\noindent 
A well-known application of principal $G$-bundles is Kaluza-Klein theory, which unifies gravity and electrodynamics. The central element of Kaluza-Klein theory is a metric $\gamma_{AB}$ on $M\times S^{1}$, where $A,B$ are indices that runs from $0$ to $4$,  
$M$ is the $4$-dimensional spacetime and $S^{1}$ is the circle, along with the so-called cylindrical condition

\begin{equation}
\frac{\partial \gamma_{AB}}{\partial x^{4}}=0,
\end{equation}
\noindent where $x^{4}=y=\theta$ is the coordinate of the circle $S^{1}$. The gauge ambiguity can be eliminated by considering a $U(1)$-invariant metric $\gamma$ on a principal $U(1)$-bundle $\pi :P\rightarrow M$, which is locally isomorphic to $M\times S^{1}$.


\noindent The action integral of the theory is written as a 5-dimensional Einstein-Hilbert Lagrangian,
\begin{equation}
    S=\frac{1}{16 \pi \hat{G}}\int _{P} dx^{4}dy\sqrt{-\gamma}R^{(5)},
\end{equation}
where $\hat{G}$ is a 5-dimensional gravitational constant,  $\gamma$ is the determinant of $\gamma _{AB}$, and $R^{(5)}=R_{BAD}^{A}\gamma ^{BD}$ is the five dimensional scalar curvature of the metric $\gamma _{AB}$. The action can be expressed as
\begin{equation}
    S=\int _{M} dx^{4}\sqrt{-g}\left(\frac{1}{16 \pi {G}}R+\frac{1}{4}\phi ^{2}F_{\mu \nu}F^{\mu \nu}+\frac{1}{24\pi G}\frac{\partial ^{\mu}\phi \partial _{\mu}\phi}{\phi ^{2}}\right),
\end{equation}
where $G=\hat{G}/\int dy$ is
the gravitational constant and $F_{\mu\nu}$ is the field strength of the local $U(1)$ connection $A_\mu$ and $\phi$ is part of the internal metric (see~\cite{Cho:1975sw,overduin1997kaluza} for details). 

\noindent Another application of gauge theory is in theories of modified gravity that introduce additional fields. For example, one recently developed theory is $BF$-coupled gravity~\cite{alexander2020topological}, for which the action is given by
\begin{equation}\label{eq:seq}
	S=\int_M \bigl \langle B\wedge F\bigr \rangle+ \left[\frac{1}{2\kappa}R(\hat{g})-\frac{\bar{\Lambda}}{\kappa}+\mathscr{L}_{M}(\hat{g})\right] \;\bigl \langle B\wedge B\bigr \rangle=S_{BF}+S_{GR},
\end{equation}
where $\kappa=8\pi G c^{-4}$, and gravity is coupled to the $BF$ fields through 
\begin{equation}
\label{bwedgeb}
    \;\bigl \langle B\wedge B\bigr \rangle \equiv \frac{1}{4}d^{4}x\epsilon^{\mu\nu\rho\sigma}B_{\mu \nu}^{a}B_{\rho \sigma \;a}=:d^{4}x\;\sqrt{|\hat{g}|}\neq 0.
\end{equation} 
In the previous expressions the internal metric $\langle ,\rangle$ is used to lower internal indices, meaning that $B_{\rho \sigma \;a}=B_{\rho \sigma}^{b}\langle \tau _{b}, \tau _{a}\rangle$ where $\{\tau _{a}\}_{a=1,...,n}$ constitutes a basis for the Lie algebra of the Lie group $G$ of dimension $n$. The composite metric $\hat{g}_{\mu \nu}$ is defined as
\begin{equation}
    \hat{g}_{\mu \nu}=\left(\frac{\hat{g}}{g}\right)^{\frac{1}{4}}\;g_{\mu \nu},
\end{equation}
where $g_{\mu \nu}$ is an arbitrary metric, considered as one of the fundamental fields in the action and $g$ its determinant. Futhermore, the components of the inverse  metric are given by
\begin{equation}\label{eq:gghat}
    \hat{g}^{\mu \nu}=\left(\frac{\hat{g}}{g}\right)^{-\frac{1}{4}}\;g^{\mu \nu}.
\end{equation}
The transformations that leave $\hat{g}_{\mu \nu}$ invariant and consequently all terms in the action
~\eqref{eq:seq} are Weyl transformations, defined as 
\begin{equation}
    g_{\mu \nu}\longrightarrow \Omega^{2}(x)g_{\mu \nu},
\end{equation}
and transverse diffeomorphisms $\mathscr{L}_{\xi} \langle B\wedge B \rangle=0$, where $\mathscr{L}_{\xi}$ is the Lie derivative generated by a transverse vector field $\xi^{a}$. 

\noindent The metric $\hat{g}_{\mu \nu}$ is a non-trivial function of the fields $B$ and $g_{\mu \nu}$. The variation of the action with respect to these two fields can be evaluated using the chain rule, and is given in terms of the variation with respect to $\delta \hat{g}^{\mu \nu}$,
\begin{equation}
    \begin{split}
        \delta S_{GR} & \equiv \delta \int_M d^{4}x\sqrt{\hat{g}}\; \left[\frac{1}{2\kappa}R(\hat{g})-\frac{\bar{\Lambda}}{\kappa}+\mathscr{L}_{M}\right]\\[0.2cm]
        & = \int_M d^{4}x\; \left[\frac{1}{2\kappa}\frac{\delta(\sqrt{\hat{g}}R)}{\delta \hat{g}^{\mu \nu}}-\frac{\bar{\Lambda}}{\kappa}\frac{\delta\sqrt{\hat{g}}}{\delta \hat{g}^{\mu \nu}}+\frac{\delta(\sqrt{\hat{g}}\mathscr{L}_{M})}{\delta \hat{g}^{\mu \nu}}\right] \delta \hat{g}^{\mu \nu}\\[0.2cm]
        & = \int_M d^{4}x\frac{\sqrt{\hat{g}}}{2\kappa}\; \left[\left(\frac{\delta R}{\delta \hat{g}^{\mu \nu}}+\frac{(R-2\bar{\Lambda})}{\sqrt{\hat{g}}}\frac{\delta\sqrt{\hat{g}}}{\delta \hat{g}^{\mu \nu}}\right)+\kappa \frac{2}{\sqrt{\hat{g}}}\frac{\delta(\sqrt{\hat{g}}\mathscr{L}_{M})}{\delta \hat{g}^{\mu \nu}}\right] \delta \hat{g}^{\mu \nu}\\[0.2cm]
        & = \int_M d^{4}x\frac{\sqrt{\hat{g}}}{2\kappa}\; \left[R_{\mu \nu}(\hat{g})-\frac{1}{2}R\hat{g}_{\mu \nu}+\bar{\Lambda}\hat{g}_{\mu \nu} -\kappa T_{\mu \nu}\right]\delta \hat{g}^{\mu \nu},
    \end{split}
\end{equation} 
where we have employed the functional derivative identities for $R$ and $\hat{g}$,
\begin{equation*}
    \frac{\delta R(\hat{g})}{\delta \hat{g}^{\mu \nu}}=R_{\mu \nu}(\hat{g}),\;\;\;\;\; \frac{\delta\sqrt{\hat{g}}}{\delta \hat{g}^{\mu \nu}}=-\frac{1}{2}\sqrt{\hat{g}}\;\hat{g}_{\mu \nu},
\end{equation*}
 where we have omitted the boundary terms as they do not contribute to the variation and applied the definition of the energy-momentum tensor,
\begin{equation}
    T_{\mu \nu}(\hat{g})=\frac{-2}{\sqrt{\hat{g}}}\frac{\delta(\sqrt{\hat{g}}\mathscr{L}_{M})}{\delta \hat{g}^{\mu \nu}}=-2\frac{\delta\mathscr{L}_{M}}{\delta \hat{g}^{\mu \nu}}+\hat{g}_{\mu \nu}\mathscr{L}_{M}.
\end{equation}
Using the definition of the trace of the energy-momentum tensor $T=T_{\mu \nu}\hat{g}^{\mu \nu}$, we can finally evaluate the variation with respect to $B$,
{\allowdisplaybreaks
\begin{align}
\label{variacionbbf}
        0 & = \frac{\delta (S_{BF}+S_{GR})}{\delta B_{\mu ^{\prime} \nu ^{\prime}} ^{a}}\delta B_{\mu ^{\prime} \nu ^{\prime}} ^{a}  \nonumber\\[0.2cm]
        & =\int_{M}d^{4}x\;\left[ \epsilon^{\mu ^{\prime}\nu ^{\prime}\rho \sigma}\;\frac{1}{4}\;F_{\rho\sigma\;a}+\frac{\sqrt{\hat{g}}}{2\kappa}\; \left[G_{\mu \nu}+\bar{\Lambda}\hat{g}_{\mu \nu} -\kappa T_{\mu \nu}\right] \frac{\delta\hat{g}^{\mu \nu}}{\delta B_{\mu ^{\prime}\nu ^{\prime}} ^{a}}\right]\delta B_{\mu ^{\prime} \nu ^{\prime}} ^{a}\nonumber\\[0.2cm]
        & =\int_{M}d^{4}x\;\left[ \epsilon^{\mu ^{\prime}\nu ^{\prime}\rho\sigma}\;\frac{1}{4}\;F_{\rho\sigma\;a}-\frac{\sqrt{\hat{g}}}{8\kappa}\; \left[G_{\mu \nu}+\bar{\Lambda}\hat{g}_{\mu \nu} -\kappa T_{\mu \nu}\right]\frac{g^{1/4}}{\hat{g} ^{5/4}} g^{\mu \nu} \frac{\delta \hat{g}}{\delta B_{\mu ^{\prime}\nu ^{\prime}} ^{a}}\right]\delta B_{\mu ^{\prime} \nu ^{\prime}} ^{a}\nonumber\\[0.2cm]
        & =\int_{M}d^{4}x\;\left[ \;\frac{1}{4}\;F_{\rho\sigma\;a}+\frac{1}{8\kappa}\; \left[R(\hat{g})-4\bar{\Lambda} +\kappa T\right]B_{\rho \sigma\;a}\right]\epsilon^{\mu ^{\prime}\nu ^{\prime}\rho\sigma}\delta B_{\mu ^{\prime} \nu ^{\prime}} ^{a},
\end{align}
}
where
\begin{equation*}
    \frac{\delta \hat{g}}{\delta B_{\mu ^{\prime}\nu ^{\prime}} ^{a}}=\sqrt{\hat{g}}\;\epsilon^{\mu ^{\prime}\nu ^{\prime}\rho\sigma}B_{\rho \sigma\;a},
\end{equation*}
follows from eq.~\eqref{bwedgeb} and $G_{\mu\nu}$ is
the Einstein tensor of the metric $\hat g$. Using eq.~\eqref{variacionbbf}, 
the identity $\epsilon^{\mu \nu \rho \sigma}\epsilon_{\mu \nu \alpha \beta}=2(\delta _{\alpha}^{\rho}\delta _{\beta}^{\sigma}-\delta _{\beta}^{\rho}\delta _{\alpha}^{\sigma})$ and the antisymmetry properties of the components of $F$ and $B$, we obtain the field equation
\begin{equation}
\label{primeraecuacionbfsecuestered}
    F+\frac{1}{2\kappa}\left[R(\hat{g})+\kappa T-4\bar{\Lambda}\right]B=0.
\end{equation}
\noindent Furthermore, variation with respect to $A$ leads to
\begin{equation}
\label{segundaecuacionbfsecuestered}
    d_{A}B=0,
\end{equation}
since $S_{GR}$ does not depend on the connection $A$.
 Finally, variation with respect to $g^{\mu \nu}$ leads to the trace-free Einstein equation,
{\allowdisplaybreaks\begin{align}
\label{desarrollovariaciónrg}
        0 & =\frac{\delta S_{GR}}{\delta g^{\mu ^{\prime} \nu ^{\prime}}}\delta g^{\mu ^{\prime} \nu ^{\prime}}\nonumber\\[0.2cm]
        & =\int_M d^{4}x\frac{\sqrt{\hat{g}}}{2\kappa}\; \left[R_{\mu \nu}(\hat{g})-\frac{1}{2}R\hat{g}_{\mu \nu}+\bar{\Lambda}\hat{g}_{\mu \nu} -\kappa T_{\mu \nu}\right] \frac{\delta \hat{g}^{\mu \nu}}{\delta g^{\mu ^{\prime} \nu ^{\prime}}}\delta g^{\mu ^{\prime} \nu ^{\prime}}\nonumber\\[0.2cm]
        & =\int_M d^{4}x\frac{\sqrt{\hat{g}}}{2\kappa}\; \left[G_{\mu \nu}+\bar{\Lambda}\hat{g}_{\mu \nu} -\kappa T_{\mu \nu}\right] \frac{g^{1/4}}{\hat{g}^{1/4}}\left(\delta _{\mu ^{\prime}}^{\mu}\delta _{\nu ^{\prime}}^{\nu}-\frac{1}{4}g^{\mu \nu}g_{\mu ^{\prime} \nu ^{\prime}}\right)\delta g^{\mu ^{\prime} \nu ^{\prime}}\nonumber\\[0.2cm]
        & =\int_M d^{4}x\frac{\sqrt{\hat{g}}}{2\kappa} (\hat{g}/g)^{-1/4}\; \left[R_{\mu ^{\prime} \nu ^{\prime}}-\frac{1}{4}R\hat{g}_{\mu ^{\prime} \nu ^{\prime}} -\kappa \left(T_{\mu ^{\prime}\nu ^{\prime}}-\frac{1}{4}T\hat{g}_{\mu ^{\prime}\nu ^{\prime}}\right)\right]\delta g^{\mu ^{\prime} \nu ^{\prime}}.
\end{align}}
\noindent Thus, from the previous equation, we deduce that the gravitational field equation is
\begin{equation}
\label{tereceraecuacionbfsecuestered}
    R_{\mu \nu }(\hat{g})-\frac{1}{4}R(\hat{g})\hat{g}_{\mu \nu }=\kappa \left(T_{\mu \nu}-\frac{1}{4}T\hat{g}_{\mu \nu}\right).
\end{equation}
Equations~\eqref{primeraecuacionbfsecuestered}, \eqref{segundaecuacionbfsecuestered} and \eqref{tereceraecuacionbfsecuestered} are the field equations of the theory. From the Bianchi identity, eqs.~\eqref{primeraecuacionbfsecuestered}, \eqref{segundaecuacionbfsecuestered} and the condition $\hat{g}\neq 0$ it follows that
\begin{equation}
    d(R+\kappa T)=0.
\end{equation}
Thus we can write $R+\kappa T=4 \Lambda$, where $\Lambda$ is an integration constant that plays the role of the cosmological constant. This interpretation arises when noticing that eq.~\eqref{tereceraecuacionbfsecuestered} can be rewritten as
\begin{equation}
    R_{\mu \nu}(\hat{g})-\frac{1}{2}\hat{g}_{\mu \nu}R(\hat{g})+\Lambda \hat{g}_{\mu \nu}=\kappa T_{\mu \nu}.\label{eqeinstein}
\end{equation}
In GR the  cosmological constant is a fixed coupling constant and thus has a specific value (up to renormalizations). However, $\Lambda$ in eq.~\eqref{eqeinstein} is an integration constant that remains undetermined until a solution is chosen, and it is not affected by corrections, which instead modify $\bar{\Lambda}$. The field equations~\eqref{eqeinstein} have been proposed in the literature as \textit{unimodular gravity}. 
It can be argued that if one dismisses the naturalness problem of the cosmological constant, then GR is classically equivalent to UG. However, an important difference between both theories is that 
Birkhoff's theorem is no longer valid in UG~\cite{Bonder:2022kdw,Alvarez2023}, instead, vacuum, spherically symmetric solutions can be Schwarzschild, Schwarzschild-de Sitter or Schwarzschild-anti-de Sitter spacetimes. This allows for vacuum expanding cosmological solutions. The same conclusion can be reached in BF coupled to gravity, since eq.~\eqref{tereceraecuacionbfsecuestered} allows for constant curvature spacetimes in vacuum. This highlights the fact that Einstein equations and equations~\eqref{eqeinstein} are fundamentally different. Nevertheless, UG still describes a cosmological model where the accelerated expansion is driven by the constant energy density $\Lambda$ and, perhaps, by an additional fluid that is introduced \textit{ad-hoc} in the energy-momentum tensor. On the other hand, in BF coupled to gravity,  $\Lambda$ in eq.~\eqref{eqeinstein} can be promoted to a dynamical field by including in the action a kinetic term for the gauge field. This gives a theoretical explanation for the origin of dynamical dark energy. In~\cite{alexander2022black}, it is shown that action~\eqref{eq:seq} supplemented with a kinetic term for a $SU(2)$ gauge field and without a matter Lagrangian leads to a de Sitter cosmological model at late times. It is also interesting to note that the model with a $U(1)$ gauge field leads to astrophysical solutions that, in some limits, resemble Reissner-Nordstr\"om black holes (solutions to the Einstein-Maxwell theory) but introduce the notion of a fundamental unit charge. The previous discussion motivates the search for a way to couple different groups to gravity, since each group may have its own relevant consequences for different gravitational phenomena, including dark energy. In addition, it is natural to ask whether one can couple the gauge fields that are relevant for the standard model of particle physics. It was recently shown that  a classical action describing the standard model coupled to Einstein gravity can be obtained from a constrained BF theory based on a 3-group~\cite{Radenkovic:2019qme}. However, as discussed above, even from a classical perspective it is justified to look for alternatives to Einstein gravity. Here, we develop the fundamentals for obtaining a modified gravity theory that is coupled to several gauge fields, extending the approach presented in~\cite{alexander2020topological} to higher order BF theories.

\section{Categorical approach}\label{sec:cat}
In this section, we introduce the notions from category theory essential for categorizing the geometry coming from connections. We interpret connections as functors and introduce the first notions from higher categories to define 2-connections as 2-functors and motivate higher gauge theory~\cite{baez2005higher}. 

\subsection{The connection functor}

We aim to show that a connection over a principal $G$-bundle can be translated into the language of categories as a functor between two special categories, namely the path groupoid  and the induced category of $G$ (for an in-depth analysis in category theory see~\cite{mac2013categories})

\begin{Definition}
            A category $\mathscr{C}$ consists in the following data:
            \begin{enumerate}
	               \item A collection of objects $Obj(\mathscr{C})$.
	               \item A collection of arrows $Mor(\mathscr{C})$ between objects. An arrow $f$ is represented by
                          \begin{equation*}
                    \begin{tikzcd}[row sep=huge]
                        A \arrow [r, "f"] &
                        B,
                    \end{tikzcd}
                \end{equation*}
                where $A,B\in Obj(\mathscr{C})$
               
                \item An operation of composition of arrows. Given two arrows
    $  \begin{tikzcd}[row sep=huge]
                        A \arrow [r, "f"] &
                        B,
                    \end{tikzcd}
                    \begin{tikzcd}[row sep=huge]
                    B \arrow [r, "g"] &
                    C,
                    \end{tikzcd}$
                there is an arrow in $Mor(\mathscr{C})$
                \begin{equation*}
                    \begin{tikzcd}[row sep=huge]
                        A \arrow [r, "g\circ f"] &
                        C,
                    \end{tikzcd}
                \end{equation*}
                called the composite of $f$ and $g$.
                \end{enumerate}

           \noindent Satisfying the following conditions:
              \begin{enumerate}
        \renewcommand{\theenumi}{\alph{enumi}}
            \item (Identity) For each object $A$ there exists an arrow
                \begin{tikzcd}[row sep=huge]
                    A \arrow [r, "1_{A}"] &
                    A
                \end{tikzcd}
                such that for any \begin{tikzcd}[row sep=huge]
                A \ar [r, "f"] &
                B,
                \end{tikzcd} 
                \begin{equation}
                f\circ 1_{A}=f=1_{B}\circ f.
            \end{equation}
                         $1_A$ is called the identity arrow in $A$.
            
                  \item (Associativity) For any three arrows $\begin{tikzcd}[row sep=huge]
                A \arrow [r, "f"] &
                B,
                \end{tikzcd}
                \;
                \begin{tikzcd}[row sep=huge]
                B \arrow [r, "g"] &
                C,
                \end{tikzcd}
                \begin{tikzcd}[row sep=huge]
                C \arrow [r, "h"] &
                D,
                \end{tikzcd}$
            \begin{equation}
                h\circ (g\circ f)=(h\circ g)\circ f.
            \end{equation}
            
             \end{enumerate}
        \end{Definition}
        
    \begin{Example}[Category induced by a group] Let $G$ be a group, we define the category $\mathscr{G}$ by the data,
 \begin{align*}
                Obj(\mathscr{G})&=\{\ast\}, \nonumber \\
                Mor(\mathscr{G})&=G,
            \end{align*}
        where $\{\ast\}$ is a set with a unique element. An arrow corresponding with an element $g\in G$ is represented by
        \begin{equation*}
            \begin{tikzcd}[row sep=huge]
            \ast \arrow [r, "g"] &
            \ast.
            \end{tikzcd}
        \end{equation*}
        Given two arrows $g,h \in G$, their composition is defined as the product $gh\in G$. The identity arrow $1_{\ast} = e$ is the identity element of $G$.
    \end{Example}

    \begin{Definition}
            A groupoid is a category in which every arrow is invertible, that is, for
            $\begin{tikzcd}
                A \ar [r, "f"] &
                B
            \end{tikzcd}$ there exists an arrow
            $\begin{tikzcd}
                B \ar [r, "f^{-1}"] &
                A
            \end{tikzcd}$
            such that 
          $  f\circ f^{-1} =1_{B},\; 
                 f^{-1}\circ f  =1_{A}.$
                   \end{Definition}
\noindent Notice that the category induced by a group is also an example of a groupoid .

    \begin{Definition}
        A (covariant) functor
      $
            F:\mathscr{C}\rightarrow \mathscr{D}
       $
        between the categories $\mathscr{C}$ and $\mathscr{D}$, consists in two maps (one at level of objects and one at level of arrows)
        \begin{align*}
            F&:Obj(\mathscr{C})\rightarrow Obj(\mathscr{D}), \\
            F&:Mor(\mathscr{C})\rightarrow Mor(\mathscr{D}),
        \end{align*}
        satisfying the following conditions:
        \begin{enumerate}
        \renewcommand{\theenumi}{\alph{enumi}}
            \item (Domain/codomain) If
            $\begin{tikzcd}[row sep=huge]
                A \arrow [r, "f"] &
                B
            \end{tikzcd}$ in $Mor(\mathscr{C})$, then $ \begin{tikzcd}[row sep=huge]
                F(A) \arrow [r, "F(f)"] &
                F(B)
            \end{tikzcd}
            $ in $Mor(\mathscr{D})$.
            \item ($F$ preserves composition)
            $
                F(g\circ f)=F(g)\circ F(f).
            $  
            \item ($F$ preserves identities)
            $    F(1_{A})=1_{F(A)}.
            $ 
        \end{enumerate}
    \end{Definition}
  
    \begin{Definition}
        Let $\gamma: [0,1] \rightarrow M$ be a smooth path on $M$. $\gamma$ is called a ``lazy path'' if it remains constant in the neighborhoods of $t=0$ and $t=1$. Such a path is denoted as $\gamma: x \rightarrow y$, where $x = \gamma(0)$ is the initial point and $y = \gamma(1)$ is the final point.
    \end{Definition}

    \begin{Definition}
        A \emph{thin homotopy} between two lazy paths $\gamma$, $\delta: x \rightarrow y$ is a differentiable function $H: [0, 1] \times [0, 1] \rightarrow M$ that satisfies the following conditions:

         \begin{itemize}
            \item $H(0,t)=\gamma (t)$,\;\; $H(1,t)=\delta (t)$.
            \item $H(s,0)=x,\;\; H(s,1)=y$ for all $s\in [0,1]$.
            \item Its derivative has a rank less than 2 everywhere, meaning that
            \begin{equation*}
                H_{\ast}:T_{(s,t)}([0,1]\times [0,1])\rightarrow T_{H(s,t)}M
            \end{equation*} has a rank less than two for any $(s, t) \in [0, 1] \times [0, 1]$.
        \end{itemize}
    \end{Definition}

    \begin{Definition}
        Two lazy paths, $\gamma$ and $\delta$, are \emph{thin homotopic} if there exists a thin homotopy between them. The collection of lazy paths that are thin homotopic to $\gamma$ is called the thin homotopy class of $\gamma$ and is denoted by $[\gamma]$.
    \end{Definition}

    \begin{Definition}
        The composition of two thin homotopy classes is defined as $
            [\gamma][\delta]=[\gamma \delta]
        $,
        where
        \begin{equation*}
            \gamma \delta(t)=
            \begin{cases}
                            \gamma(2t) &         \text{if } 0\leq t \leq \frac{1}{2},\\
                            \delta(2t-1), &         \text{if } \frac{1}{2}\leq t \leq 1.
                    \end{cases}
        \end{equation*}
    \end{Definition}

    \begin{Definition}
        The path groupoid $\mathcal{P}_{1}(M)$ of $M$ is the category defined as:
        \begin{itemize}
            \item Objects are points in $M$.
            \item Arrows are thin homotopy classes of lazy paths in $M$, $\begin{tikzcd}
                x \ar [r, "{[\gamma ]}"] &
                y
            \end{tikzcd}$
            where $x=\gamma (0)$ and $y=\gamma(1)$.
            \item Composition is the composition of equivalence classes $
               [\gamma][\delta]=[\gamma \delta].$
           \item For any $x\in M$, the identity $1_{x}$ is the thin homotopy class of the constant path at $x$.
           \item The inverse of an arrow $[\delta]$ is $[\delta ^{-1}]$ where
          $\delta ^{-1}(t)=\delta (1-t)$.
        \end{itemize}
   \end{Definition}
    
    \begin{Remark}
        Given a principal $G$-bundle $\pi :P\rightarrow M$ equipped with a 1-form connection $\omega$ and a cover of trivializations (given by sections) $\{(U_{i},\sigma _{i})\}_{i\in I}$, where $\pi \circ \sigma _{i}=id_{U_{i}}$ for all $i\in I$, we have functors from the path groupoid  to the category defined by $G$,
    \begin{equation}
       hol_{i}:\mathcal{P}_1(U_{i})\rightarrow G,
   \end{equation}
  defined at objects and arrows,
    \begin{equation*}
        \begin{split}
            hol_{i}:U_{i} & \rightarrow \{\ast\}\\
            x & \mapsto \ast.
        \end{split}
   \end{equation*}
    \begin{align*}
            hol_{i}:Mor  (\mathcal{P}  (U_{i}))  &\rightarrow G\\ \nonumber
            [\gamma] & \mapsto \mathcal{P}\textup{exp}(\int _{\gamma}-A_{i}),
    \end{align*}
    where $A_{i}=\sigma _{i}^{\ast} \omega$. In this sense connections can be seen as functors (see~\cite{baez2011invitation, schreiber2011smooth} for details).
    \end{Remark}

\subsection{Strict 2-categories and strict 2-functors}

From the above construction a connection can be interpreted as a functor from the path groupoid  to the group. In category theory a very useful technique to generalize the notion of a category is by adding information to get a higher category, that is, a category in the usual sense together with additional data, namely 2-arrows (``arrows between the arrows"). In this section we aim to introduce the notions of strict 2-category and strict 2-functor. 

\begin{Definition}
    A strict 2-category $\mathscr{C}$ consists in the following data:
    \begin{enumerate}
        \item A collection of objects $Obj(\mathscr{C})$.
        \item A collection of 1-arrows $Mor(\mathscr{C})$ between objects. A 1-arrow $f$ is represented by
                          \begin{equation*}
                    \begin{tikzcd}[row sep=huge]
                        A \arrow [r, "f"] &
                        B.
                    \end{tikzcd}
                \end{equation*}
                where $A,B\in Obj(\mathscr{C})$.
               
        \item A composition of 1-arrows, which together with the objects form a category.
        \item A collection of 2-arrows 2-$Mor(\mathscr{C})$ between 1-arrows. A 2-arrow is represented by
       \begin{equation*} \begin{tikzcd}
                    B &
                    A \arrow[bend right=80,swap,"f"]{l}[name=RUU, below]{}
                    \arrow[bend left=80,"g"]{l}[name=RDD, above]{}
                    \arrow[Rightarrow,to path=(RUU) -- (RDD)\tikztonodes]{r}{\alpha}
                \end{tikzcd}.
                \end{equation*}
                where $A,B\in Obj(\mathscr{C})$ and $f,g\in Mor(\mathscr{C})$.
                \item 2-arrows can be composed in two different ways:
            \begin{itemize}
                \item Vertically,
                    \begin{equation*}
                    \begin{tikzcd}[column sep=1.8cm]
                        B &
                        A \arrow[bend right=80,swap,"f"]{l}[name=1, below]{}
                        \arrow[bend left=80,"f^{\prime \prime}"]{l}[name=3, above]{}
                        \arrow["f^{\prime}"]{l}[name=12, above]{}
                        \arrow[swap]{l}[name=23,below=6.5]{}
                        \arrow[Rightarrow,to path=(1) -- (12)\tikztonodes]{r}{\alpha}
                        \arrow[Rightarrow,to path=(23) -- (3)\tikztonodes]{r}{\;\alpha ^{\prime}}
                    \end{tikzcd}
                =
                    \begin{tikzcd}[column sep=1.8cm]
                        B &
                        A \arrow[bend right=80,swap,"f"]{l}[name=1, below]{}
                        \arrow[bend left=80,"f^{\prime \prime}"]{l}[name=2, above]{}
                        \arrow[Rightarrow,to path=(1) -- (2)\tikztonodes]{r}{\alpha^{\prime} \circ _{v} \alpha}
                    \end{tikzcd}.
                \end{equation*}
                \item Horizontally,
                \begin{equation*}
                    \begin{tikzcd}[column sep=1.8cm]
                        C &
                        B \arrow[bend right=80,swap,"f_{1}"]{l}[name=1l, below]{}
                        \arrow[bend left=80,"f_{1}^{\prime}"]{l}[name=2l, above]{}
                        \arrow[Rightarrow,to path=(1l) -- (2l)\tikztonodes]{r}{\alpha_{1}} &
                        A \arrow[bend right=80,swap,"f_{2}"]{l}[name=1r, below]{}
                        \arrow[bend left=80,"f_{2}^{\prime}"]{l}[name=2r, above]{}
                        \arrow[Rightarrow,to path=(1r) -- (2r)\tikztonodes]{r}{\alpha_{2}}
                    \end{tikzcd}
                    =
                    \begin{tikzcd}[column sep=1.9cm]
                        C &
                        A\arrow[bend right=80,swap,"f_{1}\circ f_{2}"]{l}[name=1, below]{}
                        \arrow[bend left=80,"f_{1}^{\prime}\circ f_{2}^{\prime}"]{l}[name=2, above]{}
                        \arrow[Rightarrow,to path=(1) -- (2)\tikztonodes]{r}{\alpha_{1}\circ _{h} \alpha_{2}}
                    \end{tikzcd}.
                \end{equation*}
                \end{itemize}
    \end{enumerate}
    Satisfying the following properties:
    \begin{enumerate}
        \renewcommand{\theenumi}{\alph{enumi}}
        \item Vertical and horizontal compositions are associative.
            \item For each 1-arrow $\begin{tikzcd}[row sep=huge]
                A \arrow [r, "f"] &
                B
            \end{tikzcd}$ there exists a 2-arrow,
            \begin{equation*}
                \begin{tikzcd}
                    B &
                    A \arrow[bend right=80,swap,"f"]{l}[name=1, below]{}
                    \arrow[bend left=80,"f"]{l}[name=2, above]{}
                    \arrow[Rightarrow,to path=(1) -- (2)\tikztonodes]{r}{1_{f}}
                \end{tikzcd}
            \end{equation*}
            acting as the identity in vertical composition. The 2-arrow \begin{equation*}
                \begin{tikzcd}
                    A &
                    A \arrow[bend right=80,swap,"1_{A}"]{l}[name=1, below]{}
                    \arrow[bend left=80,"1_{A}"]{l}[name=2, above]{}
                    \arrow[Rightarrow,to path=(1) -- (2)\tikztonodes]{r}{1_{1_{A}}}
                \end{tikzcd}
            \end{equation*}
            serves as the identity for horizontal composition.

            \item Vertical and horizontal compositions follow the interchange law,
            \begin{equation}
                (\alpha_{1}^{\prime}\circ_{v}\alpha_{1})\circ _{h}(\alpha_{2}^{\prime}\circ_{v}\alpha_{2})=(\alpha_{1}^{\prime}\circ_{h}\alpha_{2}^{\prime})\circ _{v}(\alpha_{1}\circ_{h}\alpha_{2}),
            \end{equation}
            meaning there is no ambiguity in the composition of the following diagram
            \begin{equation*}
                    \begin{tikzcd}[column sep=1.8cm]
                        C &
                        B \arrow[bend right=80,swap,"f_{1}"]{l}[name=1l, below]{}
                        \arrow[bend left=80,"f_{1}^{\prime \prime}"]{l}[name=3l, above]{}
                        \arrow["\;f_{1}^{\prime}"]{l}[name=12l, above]{}
                        \arrow[swap]{l}[name=23l,below=6.5]{}
                        \arrow[Rightarrow,to path=(1l) -- (12l)\tikztonodes]{r}{\alpha_{1}}
                        \arrow[Rightarrow,to path=(23l) -- (3l)\tikztonodes]{r}{\;\alpha_{1} ^{\prime}} &
                        A \arrow[bend right=80,swap,"f_{2}"]{l}[name=1r, below]{}
                        \arrow[bend left=80,"f_{2}^{\prime \prime}"]{l}[name=3r, above]{}
                        \arrow["\;f_{2}^{\prime}"]{l}[name=12r, above]{}
                        \arrow[swap]{l}[name=23r,below=6.5]{}
                        \arrow[Rightarrow,to path=(1r) -- (12r)\tikztonodes]{r}{\alpha_{2}}
                        \arrow[Rightarrow,to path=(23r) -- (3r)\tikztonodes]{r}{\;\alpha_{2} ^{\prime}}
                    \end{tikzcd}.
            \end{equation*}
    \end{enumerate}
\end{Definition}
\begin{Definition}
    A strict 2-groupoid is a strict 2-category that satisfies the following conditions,
    \begin{itemize}
        \item Every 1-arrow  $\begin{tikzcd}
                A \ar [r, "f"] &
                B
            \end{tikzcd}$ has an inverse $\begin{tikzcd}
                B \ar [r, "f^{-1}"] &
                A
            \end{tikzcd}$,
            such that
            $$
                f\circ f^{-1}  =1_{B}, \;\; \; f^{-1}\circ f  =1_{A}.
          $$
            \item Every 2-arrow $\begin{tikzcd}
                f \ar [r,Rightarrow, "\alpha"] &
                g
            \end{tikzcd}$ has a vertical inverse $\begin{tikzcd}
                g \ar [r,Rightarrow, "\alpha_v^{-1}"] &
                f
            \end{tikzcd}$
            such that
           $$
               \alpha \circ _{v}\alpha _{v}^{-1}  =1_{g} ,\; \; \; \alpha _{v}^{-1}\circ _{v}\alpha =1_{f} .$$
            \item Every 2-arrow $\begin{tikzcd}
                f \ar [r,Rightarrow, "\alpha"] &
                g
            \end{tikzcd}$ where $f,g: A\rightarrow B$ has a horizontal inverse $\begin{tikzcd}
                f^{-1} \ar [r,Rightarrow, "\alpha _{h}^{-1}"] &
                g^{-1}
            \end{tikzcd}$ such that
            $$\alpha \circ _{h}\alpha _{h}^{-1} =1_{1_{B}}, \;\; \; \alpha _{h}^{-1}\circ _{v}\alpha  =1_{1_{A}} .$$
    \end{itemize}
\end{Definition}
\begin{Definition}
    A strict 2-group is a strict 2-groupoid with a unique object. 
\end{Definition}

\begin{Definition}
    A strict 2-functor $ F:\mathscr{C}\rightarrow \mathscr{D}$
        between the strict 2-categories $\mathscr{C}$ and $\mathscr{D}$, is a map at three levels:  objects, 1-arrows and 2-arrows,
        \begin{align*}
            F&:Obj(\mathscr{C})\rightarrow Obj(\mathscr{D}),
            \nonumber \\
            F&:Mor(\mathscr{C})\rightarrow Mor(\mathscr{D}),
            \nonumber \\       F&:2\textup{-}Mor(\mathscr{C})\rightarrow 2\textup{-}Mor(\mathscr{D}),
        \end{align*}
        such that:
     \begin{enumerate}
        \renewcommand{\theenumi}{\alph{enumi}}
            \item (Functor) The maps at level of objects and 1-arrows define a functor.
            \item (Domain/codomain) If $
            \begin{tikzcd}
                f \ar [r,Rightarrow, "\alpha"] &
                g
            \end{tikzcd}$ in 2-$Mor(\mathscr{C})$, then $\begin{tikzcd}
                F(f) \ar [r,Rightarrow, "F(\alpha)"] &
                F(g)
            \end{tikzcd}$ in 2-$Mor(\mathscr{D})$.
            \item ($F$ preserves horizontal and vertical compositions)
$F(\alpha\circ_{h_{\mathscr{C}}}\beta)=F(\alpha)\circ_{h_{\mathscr{D}}}F(\beta)$ and $ F(\alpha\circ_{v_{\mathscr{C}}}\beta)=F(\alpha)\circ_{v_{\mathscr{D}}}F(\beta).$
           \item ($F$ preserves identities for 2-arrows) $F(1_{f})=1_{F(f)}\;\;\;\;\; \textup{for any 1-arrow}
                \begin{tikzcd}[row sep=huge]
                A \arrow [r, "f"] &
                B
            \end{tikzcd}.$
        \end{enumerate}
\end{Definition}

\subsection{Crossed modules and strict 2-groups}

In this section we study the equivalence between crossed modules and  strict 2-groups, the approach of describing strict 2-groups through crossed modules will prove to be useful for the rest of this work.

\begin{Definition}
    A crossed module $(G,H,\partial ,\triangleright)$ consists in the following data:
    \begin{enumerate}
        \item Groups $G$ and $H$
        \item A group homomorphism $\partial: H \rightarrow G$.
        \item An action $\triangleright:G\curvearrowright H$ of $G$ on $H$.
    \end{enumerate}
    satisfying:
    \begin{enumerate}
        \renewcommand{\theenumi}{\alph{enumi}}
        \item The function $\phi_g: H \rightarrow H$ defined as $\phi_g(h)= g \triangleright h$ for any $h \in H$ belongs to the set of automorphisms of $H$.
        \item $\partial$ is $G$-equivariant:     $\partial(g\triangleright h)=g\partial (h)g^{-1}\;\; \forall g\in G, \forall h\in H$.
        \item The Peiffer identity holds: $
                \partial(h) \triangleright f=hfh^{-1}\;\; \forall h,f\in H.
             $
    \end{enumerate}
\end{Definition}
\noindent There is a natural pair of inverse equivalences between the category of strict 2-groups and the category
of crossed modules, in this sense these two categories are equivalent. We describe the effect of these functors on objects:

\begin{Remark}[Crossed module associated to a strict 2-group]
    Given a strict 2-group $\mathscr{G}$ there exists a crossed module $(G,H,\partial,\triangleright)$, where:
\begin{itemize}
    \item $G$ is the group of 1-arrows;
    \item $H$ is the group defined as the collection of the 2-arrows in $\mathscr{G}$ coming out of $1_{\ast}$,
    \begin{equation*}
                    \begin{tikzcd}[column sep=1.8cm]
                        \ast &
                        \ast \arrow[bend right=80,swap,"1_{\ast}"]{l}[name=1l, below]{}
                        \arrow[bend left=80,"\partial (h)"]{l}[name=2l, above]{}
                        \arrow[Rightarrow,to path=(1l) -- (2l)\tikztonodes]{r}{h} &
                        \ast \arrow[bend right=80,swap,"1_{\ast}"]{l}[name=1r, below]{}
                        \arrow[bend left=80,"\partial (h^{\prime})"]{l}[name=2r, above]{}
                        \arrow[Rightarrow,to path=(1r) -- (2r)\tikztonodes]{r}{h^{\prime}}
                    \end{tikzcd}
                   =
                    \begin{tikzcd}[column sep=2cm]
                        \ast &
                        \ast \arrow[bend right=80,swap,"1_{\ast}"]{l}[name=1, below]{}
                        \arrow[bend left=80,"\partial (hh^{\prime})"]{l}[name=2, above]{}
                        \arrow[Rightarrow,to path=(1) -- (2)\tikztonodes]{r}{\;h\circ _{h} h^{\prime}}
                    \end{tikzcd}\;;
                \end{equation*}
            \item The assignment
           $ \partial :H  \rightarrow G, \; \;\;
                h  \mapsto \partial (h),
            $  is a group homomorphism.
            \item  The action $\triangleright$ of $G$ on $H$ is defined as $            g\triangleright h=1_{g}\circ _{h}h\circ _{h}1_{g^{-1}},$ that is,
        \begin{equation*}
                    \begin{tikzcd}[column sep=1.8cm]
                        \ast &
                        \ast \arrow[bend right=80,swap,"g"]{l}[name=1ll, below]{}
                        \arrow[bend left=80,"g"]{l}[name=2ll, above]{}
                        \arrow[Rightarrow,to path=(1ll) -- (2ll)\tikztonodes]{r}{1_{g}} &
                        \ast \arrow[bend right=80,swap,"1_{\ast}"]{l}[name=1l, below]{}
                        \arrow[bend left=80,"\partial (h)"]{l}[name=2l, above]{}
                        \arrow[Rightarrow,to path=(1l) -- (2l)\tikztonodes]{r}{h} &
                        \ast \arrow[bend right=80,swap,"g^{-1}"]{l}[name=1r, below]{}
                        \arrow[bend left=80,"g^{-1}"]{l}[name=2r, above]{}
                        \arrow[Rightarrow,to path=(1r) -- (2r)\tikztonodes]{r}{1_{g^{-1}}}
                    \end{tikzcd}
                   =
                    \begin{tikzcd}[column sep=2cm]
                        \ast &
                        \ast \arrow[bend right=80,swap,"1_{\ast}"]{l}[name=1, below]{}
                        \arrow[bend left=80,"g\partial (h)g^{-1}"]{l}[name=2, above]{}
                        \arrow[Rightarrow,to path=(1) -- (2)\tikztonodes]{r}{\;g\;\triangleright h}
                    \end{tikzcd}.
                \end{equation*}
\end{itemize}
\end{Remark}

\begin{Remark}[strict 2-group associated to a crossed module] Given a crossed module $(G,H,\partial ,\triangleright)$ there exists a strict 2-group defined by the data:
\begin{itemize}
    \item There is a single object $\ast$.
    \item The 1-arrows are elements of the group $G$.
    \item The 2-arrows $\alpha :g \Rightarrow g^{\prime}$ are pairs $(g,h)\in G\times H$ with $g^{\prime}=\partial (h)g$.
    \item Vertical composition of $(g,h)$ and $(g^{\prime},h^{\prime})$, when composable, is given by
    \begin{equation*}
            (g,h)\circ _{v}(g^{\prime},h^{\prime})=(g^{\prime},hh^{\prime}).
    \end{equation*}
    \item  Horizontal composition of $(g,h)$ and $(g^{\prime},h^{\prime})$ is given by
    \begin{equation*}
            (g,h)\circ _{h}(g^{\prime},h^{\prime})=(gg^{\prime},h(g\triangleright h^{\prime})).
        \end{equation*}

\end{itemize}
\end{Remark}

\noindent An important concept arising from crossed modules is a differential crossed module, which plays a role in a 2-Lie group similar to the role a Lie algebra plays in a Lie group. In the following, we provide definitions for certain objects associated with a differential crossed module, illustrative examples are provided in Appendix \ref{app:modulocruzadodiferencial}. Importantly, it should be noted that the notation used here is non-standard and has been especially developed to avoid confusion that may arise when using standard notation~\cite{baez2003higher, martins2011lie}.

\begin{Definition}
    A differential crossed module $(\mathfrak{g},\mathfrak{h},\partial _{\ast}, \triangleright ^{\prime})$ over a crossed module $(G,H,\partial ,\triangleright)$ consists in
\begin{itemize}
    \item Lie algebras $\mathfrak{g} , \mathfrak{h}$ associated to the Lie groups $G,H$, respectively.
    \item A Lie algebra morphism $\partial _{\ast}: \mathfrak{h}\rightarrow \mathfrak{g}$ , induced by the push-forward (or differential) of the morphism of Lie groups $\partial :H\rightarrow G$ at $1_{H}$.
\item A left action of $\mathfrak{g}$ on $\mathfrak{h}$, $\triangleright ^{\prime}:\mathfrak{g} \rightarrow End(\mathfrak{h})$ induced by the push-forward of the linear function
\begin{equation*}
            \begin{tikzcd}
                \triangleright ^{\prime \prime}:G \arrow[r,"\triangleright"] & Aut(H) \arrow[r,"\ast"] & GL(\mathfrak{h}),
            \end{tikzcd}
        \end{equation*}
        defined as
        \begin{equation*}
        \triangleright ^{\prime \prime} (g)=(\phi _{g})_{\ast}\;\;\;\; \forall g\in G,
    \end{equation*}
    where 
    $\phi _{g}(h)=g\triangleright h$ for any $h\in H$ and $(\phi _{g})_{\ast}$ is its push-forward.
\end{itemize}
\end{Definition}

\subsection{2-connections}

We finally introduce the notion of 2-connections motivated by 2-functors, or more specifically, local 2-connections, which suffice for the purpose of this paper. A complete discussion can be found in ~\cite{schreiber2008connections}. In this section we define the 2-groupoid  of paths and similarly to the above construction we translate a 2-connection as a 2-functor from the 2-groupoid  of paths to a strict 2-group (or equivalently a crossed module). The last part of this section is to introduce the curvature, the fake curvature and the 3-form curvature arising from this approach.

\begin{Definition}
    A lazy surface $\begin{tikzcd}
                \gamma \ar [r,Rightarrow, "H"] &
                \delta 
    \end{tikzcd}$, between two lazy paths $\gamma, \delta:x\rightarrow y$ is a differentiable function $H:[0,1]^{2}\rightarrow M$ satisfying
    \begin{itemize}
        \item $H(0,t)=\gamma (t),\;\;\; H(1,t)=\delta (t)$;
        \item $H(s,t)$ is independent of $s$ near $s=0$ and $s=1$;
        \item $H(s,t)$ is constant near $t=0$ and constant near $t=1$.
    \end{itemize}
\end{Definition}
\begin{Definition}
    A thin homotopy between two lazy surfaces $H, H^{\prime}: \gamma \Rightarrow \delta$ is a differentiable function $K:[0,1]^{3}\rightarrow M$ such that,
    \begin{itemize}
            \item $K(0,s,t)=H(s,t),\;\;\;K(1,s,t)=H^{\prime}(s,t)$;
            \item $K(r,0,t)=\gamma (t), \;\;\;K(r,1,t)=\delta (t)$ for any $r\in [0,1]$;
            \item Its derivative has a rank less than $3$ everywhere.
        \end{itemize}
\end{Definition}

\begin{Definition}
    Two lazy surfaces $H, H^{\prime}: \gamma \Rightarrow \delta$ are \emph{thin homotopic} if there exists a thin homotopy between them. The collection of lazy surfaces that are thin homotopic to $H$ is called the thin homotopy class of $H$ and is denoted as $[H]$.
\end{Definition}

\begin{Definition}
    The 2-groupoid of paths on $M$, $\mathcal{P}_{2}(M)$, is the strict 2-category where:
    \begin{itemize}
        \item Objects are points in $M$.
        \item Arrows are thin homotopy classes of lazy paths in $M$.
        \item 2-arrows between thin homotopy classes of lazy paths, $[\gamma_{0}], [\gamma _{1}]:x\rightarrow y$, are thin homotopy classes of lazy surfaces,
        $\begin{tikzcd}
                {[\gamma_{0}]} \ar [r,Rightarrow, "{[H]}"] &
                {[\gamma_{1}]}.
                \end{tikzcd}$
                \item Horizontal composition is the usual composition of homotopies,
                \begin{equation*}
                    \begin{tikzcd}[column sep=1.8cm]
                        z &
                        y \arrow[bend right=70,swap,"{[\gamma _{1}]}"]{l}[name=1l, below]{}
                        \arrow[bend left=70,"{[\gamma _{1}^{\prime}]}"]{l}[name=2l, above]{}
                        \arrow[Rightarrow,to path=(1l) -- (2l)\tikztonodes]{r}{{[H_{1}]}} &
                        x \arrow[bend right=70,swap,"{[\gamma _{2}]}"]{l}[name=1r, below]{}
                        \arrow[bend left=70,"{[\gamma _{2}^{\prime}]}"]{l}[name=2r, above]{}
                        \arrow[Rightarrow,to path=(1r) -- (2r)\tikztonodes]{r}{{[H_{2}]}}
                    \end{tikzcd}
                    =
                    \begin{tikzcd}[column sep=2.45cm]
                        z &
                        \;x \arrow[bend right=80,swap,"{[\gamma _{1}\gamma _{2}]}"]{l}[name=1, below]{}
                        \arrow[bend left=80,"{[\gamma _{1}^{\prime}\gamma _{2}^{\prime}]}"]{l}[name=2, above]{}
                        \arrow[Rightarrow,to path=(1) -- (2)\tikztonodes]{r}{{[H_{1}\circ _{h}H_{2}]}}
                    \end{tikzcd},
                \end{equation*}
                where
                \begin{equation*}
                    H_{1} \circ _{h} H_{2}=
                    \begin{cases}
                            H_{2}(s,2t), &         \text{if } 0\leq t \leq \frac{1}{2}\\
                            H_{1}(s,2t-1), &         \text{if } \frac{1}{2}\leq t \leq 1
                    \end{cases}
                \end{equation*}
            \item Vertical composition is
            \begin{equation*}
                    \begin{tikzcd}[column sep=1.8cm]
                        y &
                        x \arrow[bend right=70,swap,"{[\gamma]}"]{l}[name=1, below]{}
                        \arrow[bend left=70,"{[\gamma ^{\prime \prime}]}"]{l}[name=3, above]{}
                        \arrow["{[\gamma ^{\prime}]}"]{l}[name=12, above]{}
                        \arrow[swap]{l}[name=23,below=6.5]{}
                        \arrow[Rightarrow,to path=(1) -- (12)\tikztonodes]{r}{{[H]}}
                        \arrow[Rightarrow,to path=(23) -- (3)\tikztonodes]{r}{{\;\;[H^{\prime}]}}
                    \end{tikzcd}
                =
                    \begin{tikzcd}[column sep=2.3cm]
                        y &
                        \;x \arrow[bend right=70,swap,"{[\gamma]}"]{l}[name=1, below]{}
                        \arrow[bend left=70,"{[\gamma^{\prime \prime}]}"]{l}[name=2, above]{}
                        \arrow[Rightarrow,to path=(1) -- (2)\tikztonodes]{r}{{[H^{\prime} \circ _{v} H]}}
                    \end{tikzcd},
                \end{equation*}
                where
                \begin{equation*}
                    H^{\prime} \circ _{v} H=
                    \begin{cases}
                            H(2s,t), &         \text{if } 0\leq s \leq \frac{1}{2}.\\
                            H^{\prime}(2s-1,t), &         \text{if } \frac{1}{2}\leq s \leq 1.
                    \end{cases}
                \end{equation*}
    \end{itemize}
\end{Definition}
\noindent Just as a local connection can be viewed as a functor, a local 2-connection can be seen as a strict 2-functor (see~\cite{baez2011invitation, schreiber2011smooth} for details)
\begin{equation}
    hol:\mathcal{P}_{2}(U)\rightarrow \mathcal{G}
\end{equation}
for some Lie strict 2-group $\mathcal{G}$. This 2-functor is equivalent to a tuple $(A,\beta)$ where $A$ is a $\mathfrak{g}$-valued 1-form (local connection) and $\beta$ is a $\mathfrak{h}$-valued 2-form. With this 2-connection, the corresponding curvature is
\begin{align}
        \mathcal{F}_{A,\beta}&=F_{A}-\partial _{\ast}\beta \equiv dA+A\wedge A-\partial _{\ast}\beta,\nonumber \\
        \mathcal{G}_{A,\beta}&=d\beta +A\wedge ^{\triangleright ^{\prime}}\beta,
    \end{align}
referred to as fake curvature and $3$-curvature, respectively.\\
\noindent In this generalization, gauge transformations are described by what is known as a natural pseudotransformation between the functors of holonomy:
\begin{Theorem}
    Let $hol ^{\prime}, hol:\mathcal{P}_{2}(U)\rightarrow \mathcal{G}$ be smooth $2$-functors with associated $1$-forms $A^{\prime},A\in \Omega^{1}(U,\mathfrak{g})$ and $2$-forms $\beta ^{\prime},\beta \in \Omega^{2}(U,\mathfrak{h})$ respectively. The smooth function $g:U\rightarrow G$ and the $1$-form $\eta \in\Omega^{1}(U,\mathfrak{h})$ extracted from a smooth pseudonatural transformation $\rho:hol^{\prime}\rightarrow hol$ satisfy the relations
    \begin{equation}
        A+\partial _{\ast}(\eta)=gA^{\prime}g^{-1}-(dg)g^{-1}
    \end{equation}
    \begin{equation}
        \beta +A\wedge ^{\triangleright ^{\prime}}\eta+d\eta+\eta \wedge\eta=g\triangleright ^{\prime \prime}\beta ^{\prime}
    \end{equation}
\end{Theorem}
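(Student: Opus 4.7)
The plan is to unpack the definition of a smooth pseudonatural transformation $\rho:hol'\to hol$ in terms of its data on objects, $1$-morphisms and $2$-morphisms of $\mathcal{P}_{2}(U)$, and then derive each equation by taking infinitesimal limits of paths and lazy surfaces. To each point $x\in U$, $\rho$ assigns a $1$-morphism $\rho_{x}:hol'(x)\to hol(x)$ in $\mathcal{G}$; since $\mathcal{G}$ has a single object, this is an element of $G$, producing the smooth function $g:U\to G$. To each $1$-morphism $[\gamma]$ in $\mathcal{P}_{2}(U)$, $\rho$ assigns a $2$-morphism in $\mathcal{G}$; passing to the crossed module description $(G,H,\partial,\triangleright)$, such a $2$-morphism is encoded by an element of $H$ depending smoothly on the path, and linearizing this assignment for infinitesimal paths yields the $\mathfrak{h}$-valued $1$-form $\eta\in\Omega^{1}(U,\mathfrak{h})$.

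For the first identity, I would impose the pseudonaturality square on $1$-morphisms: for every $\gamma:x\to y$, there is a $2$-morphism $\rho_{\gamma}:\rho_{y}\circ hol'(\gamma)\Rightarrow hol(\gamma)\circ\rho_{x}$. In the crossed module picture this says $hol(\gamma)\,g(x)=\partial(h_{\gamma})\,g(y)\,hol'(\gamma)$ in $G$, where $h_{\gamma}\in H$ is the component of $\rho_{\gamma}$. Choosing an infinitesimal path with tangent vector $v$ and expanding $hol(\gamma)\approx e-\varepsilon A(v)$, $hol'(\gamma)\approx e-\varepsilon A'(v)$, $g(y)\approx g(x)+\varepsilon\,dg(v)$, $h_{\gamma}\approx e_{H}+\varepsilon\,\eta(v)$, and matching first-order terms yields, after right-multiplication by $g^{-1}$, exactly $A+\partial_{\ast}(\eta)=gA'g^{-1}-(dg)g^{-1}$.

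For the second identity, I would impose the pseudonaturality condition on $2$-morphisms: for each lazy surface $H:\gamma_{0}\Rightarrow\gamma_{1}$, the diagram relating $\rho_{\gamma_{0}}$, $\rho_{\gamma_{1}}$, $hol(H)$ and $hol'(H)$ must commute. The idea is to evaluate this coherence on an infinitesimal parallelogram with sides spanned by tangent vectors $v_{1},v_{2}$, using the second-order expansions $hol(H)\approx e_{H}+\varepsilon^{2}\beta(v_{1},v_{2})$ and $hol'(H)\approx e_{H}+\varepsilon^{2}\beta'(v_{1},v_{2})$, together with the first-order expansion of $\rho_{\gamma}$ obtained above on each edge. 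Organizing horizontal and vertical compositions according to the crossed module rules and antisymmetrizing in $v_{1},v_{2}$ should produce the contributions $d\eta$ (from differentiating $\eta$ along the two edges) and $\eta\wedge\eta$ (from composing two infinitesimal $2$-morphisms vertically), as well as $A\wedge^{\triangleright'}\eta$ from how $A$ interacts with $\eta$ through horizontal composition, giving $\beta+A\wedge^{\triangleright'}\eta+d\eta+\eta\wedge\eta=g\triangleright''\beta'$.

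The main obstacle is the second equation. The terms $d\eta$ and $\eta\wedge\eta$ appear only at second order, and tracking them requires carefully combining the interchange law, the horizontal composition rule $(g,h)\circ_{h}(g',h')=(gg',h(g\triangleright h'))$, the Peiffer identity and the $G$-equivariance of $\partial$ exactly in the right order. In particular, the term $A\wedge^{\triangleright'}\eta$ arises from expanding $g\triangleright h'$ when $g$ is itself obtained as an infinitesimal $1$-morphism controlled by $A$; extracting this cleanly, while distinguishing contributions with the correct antisymmetrization in the two surface directions, is where most of the bookkeeping is concentrated. Once this is done, comparing with the right-hand side, which arises directly from the action of the $1$-morphism $g$ on the $2$-morphism associated to $\beta'$ via $\triangleright''$, completes the proof.
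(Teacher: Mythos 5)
The paper itself does not prove this theorem: it simply defers to Schreiber and Waldorf~\cite{schreiber2011smooth}, where the statement is obtained by differentiating the data of a pseudonatural transformation between smooth transport 2-functors. Your plan is exactly that kind of argument, so in spirit you are following the only available route. The derivation of the first relation is essentially sound: the pseudonaturality 2-cell assigned to a path, written in crossed-module form as $hol(\gamma)\,g(x)=\partial(h_{\gamma})\,g(y)\,hol'(\gamma)$ (up to the source/target convention for 2-arrows $(g,h):g\Rightarrow\partial(h)g$, which you must fix consistently or the signs come out wrong), linearizes to $A+\partial_{\ast}(\eta)=gA'g^{-1}-(dg)g^{-1}$.

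The genuine gap is the second relation, and you have named it yourself without closing it: ``organizing horizontal and vertical compositions \dots should produce'' is a statement of hope, not a computation, and it is precisely there that all the content lies. Moreover, your attribution of the quadratic term is off: $\eta\wedge\eta$ does not come from vertically composing two infinitesimal 2-morphisms (vertical composition is just multiplication in $H$ and only ever produces terms linear in $\eta$ at each order of a single edge). Both $\eta\wedge\eta$ and $A\wedge^{\triangleright'}\eta$ arise from the compatibility of $\rho$ with concatenation of paths, i.e.\ from the whiskered horizontal composition rule $h_{\gamma_{2}\gamma_{1}}=h_{\gamma_{2}}\bigl(hol(\gamma_{2})\triangleright h_{\gamma_{1}}\bigr)$, expanded to second order around an infinitesimal square --- the same mechanism by which $A\wedge A$ appears in $F=dA+A\wedge A$ --- combined with the naturality condition against $hol(H)$ and $hol'(H)$ and the Peiffer identity to reorder factors. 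Until that second-order bookkeeping is actually carried out (or the result is cited, as the paper does), the proposal establishes only the first of the two claimed identities.
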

\begin{proof}
    See  Schreiber and Waldorf~\cite{schreiber2011smooth}
\end{proof}
\noindent If we choose a smooth function $g:U\rightarrow G$ and a $1$-form $\eta \in\Omega^{1}(U,\mathfrak{h})$ that satisfies the relations of the previous theorem, then as shown in \cite{schreiber2011smooth}, a pseudonatural transformation $\rho:hol\rightarrow hol^{\prime}$ can be defined, with $g$ and $\eta$ as its extracted data. Thus we can define two type of transformations, one resembling a usual gauge transformation and another one that is proper of the generalization.

\begin{itemize}
    \item Thin gauge transformations:
A smooth function ``gauge transformation'' $g:M\rightarrow G$ with $\eta =0$ induces the transformations
        \begin{align}
            A & \longrightarrow g^{-1}Ag+g^{-1}dg,\nonumber \\
            \beta & \longrightarrow g^{-1}\triangleright ^{\prime \prime}\beta,
        \end{align}
        which in turn cause the curvature $F$, the fake curvature $\mathcal{F}$ and the $3$-form curvature $\mathcal{G}$ to transform as
        \begin{align}
        \label{tdelgadas}
             F&\longrightarrow g^{-1}Fg,\nonumber \\ \mathcal{F}&\longrightarrow g^{-1}\mathcal{F}g,\nonumber \\
             \mathcal{G}&\longrightarrow g^{-1}\triangleright ^{\prime \prime}\mathcal{G}.
        \end{align}
    \item Fat gauge transformations:
Given a $\mathfrak{h}$-valued 1-form $\eta$, the transformations with $g:U\rightarrow G$ trivial,
are
        \begin{align}
            A&\longrightarrow A+\partial _{\ast}(\eta),\nonumber \\
            \beta& \longrightarrow \beta +d\eta+A\wedge ^{\triangleright ^{\prime}}\eta +\eta \wedge \eta,
        \end{align}
        under which the curvature $F$, the fake curvature $\mathcal{F}$ and the $3$-form curvature $\mathcal{G}$ transform as
        \begin{align}
        \label{tgruesas}
            F & \longrightarrow F+\partial _{\ast}(d\eta+A\wedge ^{\triangleright ^{\prime}}\eta +\eta \wedge \eta),\nonumber \\ \mathcal{F} & \longrightarrow \mathcal{F}, \nonumber \\
            \mathcal{G} & \longrightarrow \mathcal{G}+\mathcal{F}\wedge ^{\triangleright ^{\prime}} \eta.
        \end{align}
        When $H$ is abelian the term $\eta \wedge \eta$ is zero.
        The gauge transformation group is given by all pairs $(g,\eta)$, and the group product is given by the semi-direct product
        \begin{equation*}
            (g,\eta)(g^{\prime},\eta ^{\prime})=(gg^{\prime},(g\triangleright ^{\prime}\eta ^{\prime})\eta).
        \end{equation*}
\end{itemize}

\section{Physics from 2-connections}\label{sec:phys}
In this section we use categorical generalization to extend topological field theory of $BF$ fields giving rise to the $BFCG$ theory and use it to generalize the $BF$ coupling to gravity.
\subsection{BFCG}
The categorical generalization of the $BF$ theory is the topological theory denoted as $BFCG$ (see~\cite{girelli2008topological}). The theory is given by the action
\begin{equation}
        \label{bfcg}
        S_{BFCG}=\int_M \bigl \langle B\wedge \mathcal{F}\bigr \rangle_{\mathfrak{g}} + \bigl \langle C\wedge \mathcal{G}\bigr \rangle_{\mathfrak{h}},
    \end{equation}
where $B$ is a $\mathfrak{g}$-valued 2-form, $C$ is a $\mathfrak{h}$-valued 1-form and $\mathcal{F}$, $\mathcal{G}$ are the fake curvature and the 3-curvature, respectively. Additionally, $\bigl \langle ,\bigr \rangle_{\mathfrak{g}}$ and $\bigl \langle ,\bigr \rangle_{\mathfrak{h}}$ are bilinear, symmetric, non-degenerate, $G$-invariant, and invariant with respect to the Lie algebra commutator in their respective Lie algebras. The action $S_{BFCG}$, given by eq.\eqref{bfcg}, remains invariant under thin gauge transformations, eq.\eqref{tdelgadas}, if
\begin{align}
        B&\longrightarrow g^{-1}Bg,\nonumber \\
        C&\longrightarrow g^{-1}\triangleright ^{\prime \prime}C,
    \end{align}
    while invariance under fat gauge transformations, eq.~\eqref{tgruesas}, requires
    \begin{align}
        B&\longrightarrow B+C\wedge ^{\mathcal{T}}\eta,\nonumber \\
        C&\longrightarrow C.
    \end{align}
    To obtain the classical field equations, we utilize the principle of least action, leading us to
    \begin{equation}
\label{variacionbfcg}
    \begin{split}
        0 &= \delta \int_M \bigl \langle B\wedge \mathcal{F}\bigr \rangle_{\mathfrak{g}} + \bigl \langle C\wedge \mathcal{G}\bigr \rangle_{\mathfrak{h}}\\
        & = \int_M \bigl \langle \delta B\wedge \mathcal{F} \bigr \rangle_{\mathfrak{g}} +\bigl \langle B\wedge \delta \mathcal{F}\bigr \rangle_{\mathfrak{g}}+ \bigl \langle \delta C\wedge \mathcal{G}\bigr \rangle_{\mathfrak{h}}+ \bigl \langle C\wedge \delta \mathcal{G}\bigr \rangle_{\mathfrak{h}}\\
        & = \int_M \left(\bigl \langle \delta B\wedge \mathcal{F} \bigr \rangle_{\mathfrak{g}} +\bigl \langle \delta A\wedge d_{A}B \bigr \rangle_{\mathfrak{g}}+d\bigl \langle B\wedge \delta A\bigr \rangle_{\mathfrak{g}}-\bigl \langle \delta \beta \wedge \partial ^{\;\prime}(B) \bigr \rangle_{\mathfrak{h}}\right)\\[0.2cm]
         & \qquad {}  +[\bigl \langle \delta C\wedge \mathcal{G}\bigr \rangle_{\mathfrak{h}}+ \bigl \langle \delta \beta \wedge (dC+A\wedge ^{\triangleright ^{\prime}}C)\bigr \rangle_{\mathfrak{h}}-d\bigl \langle C\wedge \delta \beta\bigr \rangle_{\mathfrak{h}}\\[0.2cm]
         & \qquad {} +\bigl \langle \delta A\wedge (C\wedge ^{\mathcal{T}}\beta)\bigr \rangle_{\mathfrak{g}} ],
    \end{split}
\end{equation}
where we have utilized the formulas provided in Appendix \ref{app:a}. Thus, the action variation is zero for any variation $\delta A$, $\;\delta \beta$, $\;\delta B$, and $\;\delta C$ if and only if the following field equations are satisfied
\begin{align}
    d_{A}B+C\wedge ^{\mathcal{T}}\beta & =0,\nonumber \\ 
    dC+A\wedge ^{\triangleright ^{\prime}}C-\partial ^{\;\prime}(B) & =0, \nonumber \\ 
    \mathcal{F} & =0, \nonumber \\
    \mathcal{G} & =0,
\end{align}
where $\partial ^{\;\prime}:\mathfrak{g}\rightarrow \mathfrak{h}$ is the linear transformation defined by the rule\begin{equation}
       \bigl \langle \partial ^{\;\prime}(X), u\bigr \rangle _{\mathfrak{h}}= \bigl \langle X, \partial _{\ast}(u)\bigr \rangle _{\mathfrak{g}}\;\;\;\;\forall \;X\in \mathfrak{g}, \;\forall \;u\in \mathfrak{h}.
    \end{equation}
    \noindent Now, although the action~\eqref{bfcg} is invariant under thin gauge transformations~\eqref{tdelgadas} and fat gauge transformations \eqref{tgruesas}, its individual terms are not. To incorporate terms with powers of $B$ and $C$ invariant under both thin and fat gauge transformations into the $BFCG$ action is necessary to modify the action \eqref{bfcg} by introducing an auxiliary field $\alpha \in \Omega ^{1}(M,\mathfrak{h})$ ensuring invariance for each term in the action. This modification is known as the \emph{extended $BFCG$ action} (see~\cite{martins2011lie}) described by the action
    \begin{equation}
    \label{BFCG2}
        S_{BFCG2}=\int_M \bigl \langle B ^{\prime}\wedge \mathcal{F}\bigr \rangle_{\mathfrak{g}} + \bigl \langle C\wedge \mathcal{G}^{\prime}\bigr \rangle_{\mathfrak{h}},
    \end{equation}
    where $B^{\prime}$ has been written instead of $B$, as $B^{\prime}$ is chosen to be invariant under fat gauge transformations and 
    \begin{equation}
\mathcal{G}^{\prime}=\mathcal{G}+\mathcal{F}\wedge ^{\triangleright ^{\prime}}\alpha.
    \end{equation}
   \noindent It is worth mentioning that the prime in $B^{\prime}$ and $\mathcal{G}^{\prime}$ does not refer to a gauge transformation, but rather serves to distinguish them from $B$ and $\mathcal{G}$, respectively. The action~\eqref{BFCG2} is invariant under thin transformations if
    \begin{align}
    \label{tdelgadasb}
        \alpha & \longrightarrow g^{-1}\triangleright ^{\prime \prime}\alpha,\nonumber \\
        B^{\prime} & \longrightarrow g^{-1}B^{\prime}g,\nonumber \\
        C & \longrightarrow g^{-1}\triangleright ^{\prime \prime} C,
    \end{align} 
while invariance under fat gauge transformations \eqref{tgruesas}, requires
    \begin{align}
    \label{tgruesasb}
        \alpha & \longrightarrow \alpha -\eta,\nonumber \\
        B^{\prime} & \longrightarrow B^{\prime},\nonumber \\
        C & \longrightarrow C.
    \end{align}
Note that under fat gauge transformations, $\mathcal{F}$ and $\mathcal{G}^{\prime}$ are invariant, 
    \begin{equation}
        \mathcal{F} \longrightarrow \mathcal{F},\;\;\;\;\mathcal{G}^{\prime} \longrightarrow \mathcal{G}^{\prime}.
    \end{equation}
    Moreover, under thin gauge transformations, we have
        \begin{equation}
        \mathcal{F}\longrightarrow g^{-1}\mathcal{F}g,\;\;\;\; \mathcal{G}^{\prime}\longrightarrow g^{-1}\triangleright ^{\prime \prime}\mathcal{G}^{\prime}.
    \end{equation}
    The classical field equations of the extended $BFCG$ theory are derived by varying the action~\eqref{BFCG2}. To achieve this, we write
    \begin{equation*}
        \begin{split}
            S_{BFCG2}&=\int_M \bigl \langle B ^{\prime}\wedge \mathcal{F}\bigr \rangle_{\mathfrak{g}} + \bigl \langle C\wedge \mathcal{G}\bigr \rangle_{\mathfrak{h}} + \bigl \langle C\wedge (\mathcal{F}\wedge ^{\triangleright ^{\prime}}\alpha)\bigr \rangle_{\mathfrak{h}}\\[0.2cm]
            & =S_{B^{\prime}\mathcal{F}}+S_{CG}+\int_M \bigl \langle C\wedge (\mathcal{F}\wedge ^{\triangleright ^{\prime}}\alpha)\bigr \rangle_{\mathfrak{h}}.\\[0.2cm]
        \end{split}
    \end{equation*}
    Using eqs.~\eqref{variacionbfcg},~\eqref{identidadBdeltaF} and the relation
    $$
        \bigl \langle C\wedge (\mathcal{F}\wedge ^{\triangleright ^{\prime}}\alpha)\bigr \rangle_{\mathfrak{h}}=-\bigl \langle (C\wedge ^{\mathcal{T}} \alpha)\wedge \mathcal{F}\bigr \rangle_{\mathfrak{g}}=-\bigl \langle (\alpha \wedge ^{\mathcal{T}} C)\wedge \mathcal{F}\bigr \rangle_{\mathfrak{g}},
 $$ 
 requesting the variation of the action to be equal to zero translates to
\begin{align}
            0 &= \delta \int_M \bigl \langle B^{\prime}\wedge \mathcal{F}\bigr \rangle_{\mathfrak{g}} + \bigl \langle C\wedge \mathcal{G}^{\prime}\bigr \rangle_{\mathfrak{h}} \nonumber \\[0.2cm]
            & = \int_M \bigl \langle \delta B^{\prime}\wedge \mathcal{F} \bigr \rangle_{\mathfrak{g}} +\bigl \langle \delta A\wedge d_{A}B^{\prime} \bigr \rangle_{\mathfrak{g}}+d\bigl \langle B^{\prime}\wedge \delta A\bigr \rangle_{\mathfrak{g}}-\bigl \langle \delta \beta \wedge \partial ^{\;\prime}(B^{\prime}) \bigr \rangle_{\mathfrak{h}}\nonumber\\[0.2cm]
            & \qquad {}\qquad {} +\bigl \langle \delta C\wedge \mathcal{G}\bigr \rangle_{\mathfrak{h}}+ \bigl \langle \delta \beta \wedge (dC+A\wedge ^{\triangleright ^{\prime}}C)\bigr \rangle_{\mathfrak{h}}-d\bigl \langle C\wedge \delta \beta\bigr \rangle_{\mathfrak{h}}\nonumber\\[0.2cm]
            & \qquad {}\qquad {}+\bigl \langle \delta A\wedge (C\wedge ^{\mathcal{T}}\beta)\bigr \rangle_{\mathfrak{g}} \nonumber\\[0.2cm]
            & \qquad {}\qquad {} + \bigl \langle \delta C\wedge (\mathcal{F}\wedge ^{\triangleright ^{\prime}}\alpha)\bigr \rangle_{\mathfrak{h}}+ \bigl \langle \delta \alpha \wedge (\mathcal{F}\wedge ^{\triangleright ^{\prime}}C) \bigr \rangle_{\mathfrak{h}}-d\bigl \langle (C\wedge ^{\mathcal{T}} \alpha)\wedge \delta A\bigr \rangle_{\mathfrak{g}}\nonumber\\[0.2cm]
            & \qquad {}\qquad {} -\bigl \langle \delta A\wedge d_{A}(C\wedge ^{\mathcal{T}} \alpha) \bigr \rangle_{\mathfrak{g}}+ \bigl \langle \delta \beta \wedge \partial ^{\;\prime}(C\wedge ^{\mathcal{T}} \alpha) \bigr \rangle_{\mathfrak{h}},
    \end{align}
    from which we obtain the classical field equations
    \begin{align}
        d_{A}(B^{\prime}-C\wedge ^{\mathcal{T}} \alpha)+C \wedge ^{\mathcal{T}}\beta & =0,
    \nonumber \\
        dC+A\wedge ^{\triangleright ^{\prime}}C-\partial ^{\prime}(B^{\prime}-C\wedge ^{\mathcal{T}} \alpha) & =0,
    \nonumber \\
        \mathcal{F} & =0,
    \nonumber \\
\mathcal{G}^{\prime} =\mathcal{G}+\mathcal{F}\wedge ^{\triangleright ^{\prime}}\alpha & =0,
    \nonumber \\
        \mathcal{F}\wedge ^{\triangleright ^{\prime}}C & =0.
    \end{align}
    which are obtained from the variation of $A$, $\beta$, $B$, $C$, and $\alpha$, respectively.
    
\subsection{BFCG sequestered gravity}
In this section, couplings with and without kinetic terms are introduced, along with two volume elements. One of these volume elements is exclusive to $BFCG$. The $BFCG$ action coupled with gravity that we will consider is a generalization of the coupling between gravity and $BF$ theory (see~\cite{alexander2020topological}) in which an \emph{extended $BFCG$ action} is employed,
\begin{align}
\label{accionbfcggravedad1}
	S & =\int_M \bigl \langle B^{\prime}\wedge \mathcal{F} \bigr \rangle_{\mathfrak{g}}+\bigl \langle C\wedge \mathcal{G}^{\prime}\bigr \rangle_{\mathfrak{h}}+ \left[\frac{1}{2\kappa}R(\hat{g})-\frac{\bar{\Lambda}}{\kappa}+\mathscr{L}_{M}\right] \;\bigl \langle B^{\prime}\wedge B^{\prime}\bigr \rangle_{\mathfrak{g}} \nonumber \\
& =S_{B^{\prime}F}+S_{CG^{\prime}}+S_{RG},
\end{align}
where $\kappa=8\pi G c^{-4}$ and
\begin{equation}
    \;\bigl \langle B^{\prime}\wedge B^{\prime}\bigr \rangle_{\mathfrak{g}} \equiv \frac{1}{4}d^{4}x\epsilon^{\mu\nu\rho\sigma}(B^{\prime})_{\mu \nu}^{a}(B^{\prime})_{\rho \sigma}^{b}Q_{ab}=d^{4}x\;\sqrt{\hat{g}}\neq 0,
\end{equation} 
is an invariant volume form under both thin and fat gauge transformations \eqref{tdelgadasb}, \eqref{tgruesasb}.\\
The composite metric $\hat{g}_{\mu \nu}$ is defined as
\begin{equation}
    \hat{g}_{\mu \nu}=(\frac{\hat{g}}{g})^{\frac{1}{4}}\;g_{\mu \nu},
\end{equation}
where $g_{\mu \nu}$ is an arbitrary metric, considered as one of the fundamental fields in the action and $g$ its determinant. The energy-momentum tensor is
\begin{equation}
    T_{\mu \nu}(\hat{g})=\frac{-2}{\sqrt{\hat{g}}}\frac{\delta(\sqrt{\hat{g}}\mathscr{L}_{M})}{\delta \hat{g}^{\mu \nu}}=-2\frac{\delta\mathscr{L}_{M}}{\delta \hat{g}^{\mu \nu}}+\hat{g}_{\mu \nu}\mathscr{L}_{M},
\end{equation}
and its trace is given by
\begin{equation}
    T=T_{\mu \nu}\hat{g}^{\mu \nu}.
\end{equation}
The fields on which the action~\eqref{accionbfcggravedad1} depends are: $A, \beta, B^{\prime}, C$, $\alpha$, and $g^{\mu \nu}$.
Since neither $S_{B^{\prime}F}$ nor $S_{CG^{\prime}}$ depends on $g^{\mu \nu}$, the variation of \eqref{accionbfcggravedad1} with respect to $g^{\mu \nu}$ leads to the trace-free Einstein equation,
\begin{equation}
\label{primeraecuacionbfcgsecuestered}
    R_{\mu \nu }(\hat{g})-\frac{1}{4}R(\hat{g})\hat{g}_{\mu \nu }=\kappa \left(T_{\mu \nu}-\frac{1}{4}T\hat{g}_{\mu \nu}\right),
\end{equation}
as shown in eq.~\eqref{desarrollovariaciónrg}. Also, since $S_{CG^{\prime}}$ does not depend on $B^{\prime}$, the variation of the action with respect to $B^{\prime}$ leads to
\begin{equation}
\label{segundaecuacionbfcgsecuestered}
    \mathcal{F}+\frac{1}{2\kappa}\left[R(\hat{g})+\kappa T -4\bar{\Lambda}\right]B^{\prime}=0,
\end{equation}

\noindent as shown in eq.~\eqref{primeraecuacionbfsecuestered}. The variations with respect to $A, \beta, C$, and $\alpha$ are straightforward
\begin{align}
\label{primis}
     d_{A}(B^{\prime}-C\wedge ^{\mathcal{T}} \alpha)+C \wedge ^{\mathcal{T}}\beta &=0,
\\
\label{seguns}
    dC+A\wedge ^{\triangleright ^{\prime}}C-\partial ^{\prime}(B^{\prime}-C\wedge ^{\mathcal{T}} \alpha) &=0,
\\
\label{tercis}
    \mathcal{G}^{\prime}=\mathcal{G}+\mathcal{F}\wedge ^{\triangleright ^{\prime}}\alpha &=0,
\\
\label{cuartas}
    \mathcal{F}\wedge ^{\triangleright ^{\prime}}C &=0.
\end{align}
\begin{Example}
Let $\mathfrak{X}$ be the crossed module defined as 
in Example \ref{modulocruzadoabeliano}, we find $\partial _{\ast}$ is a isomorphism, $\partial ^{\prime}=\partial_{\ast}^{-1}$, $\triangleright ^{\prime}=0$ and $\mathcal{T}=0$. Thus, eqs.~\eqref{segundaecuacionbfcgsecuestered}, \eqref{primis}, \eqref{seguns}, \eqref{tercis} and \eqref{cuartas} simplify to
    \begin{align*}
        F-\partial_{\ast}( \beta) +\frac{1}{2\kappa}\left[R(\hat{g})+\kappa T -4\bar{\Lambda}\right]B^{\prime}= & 0,
    \\
        dB^{\prime}= &0,
    \\
        dC= &\partial_{\ast}^{-1}B^{\prime},
    \\
        d\beta = &0,
    \end{align*}
    since eq.~\eqref{cuartas} with $\triangleright ^{\prime}=0$ is trivial.
    Applying the exterior derivative to the first of them and using the other relations along with the Bianchi identity, $dF=ddA=0$, yields
    \begin{equation}
        d(R+\kappa T)\wedge B^{\prime}=0,
    \end{equation}
   since $\sqrt{\hat{g}}\neq 0$, we have (see~\cite{alexander2020topological})
    \begin{equation}
        d(R+\kappa T)=0,
    \end{equation}
    from where the trace-free Einstein equation can be rewritten as
 \begin{equation}
        R_{\mu \nu}(\hat{g})-\frac{1}{2}\hat{g}_{\mu \nu}R(\hat{g})+\Lambda \hat{g}_{\mu \nu}=\kappa T_{\mu \nu},
    \end{equation}
    where $\Lambda$ is a constant of integration. This shows that for a particular case of a strict $2$-group of abelian type, the generalized $BF$ theory coupled with gravity reduces to the $BF$ theory coupled with gravity reported in the literature and developed following the action in eq.~\eqref{eq:seq}.
\end{Example}
\subsection{Coupling with the form C}
    When working with the $BFCG$ theory, a possibility for coupling gravity arises that was not present in the $BF$ theory, which involves constructing the volume form from $C$, leading to the formulation of the action
    \begin{equation}
\label{accionbfcggravedad2}
	S=\int_M \bigl \langle B^{\prime}\wedge \mathcal{F} \bigr \rangle_{\mathfrak{g}}+\bigl \langle C\wedge \mathcal{G}^{\prime}\bigr \rangle_{\mathfrak{h}}+ \left[\frac{1}{2\kappa}R(\hat{g})-\frac{\bar{\Lambda}}{\kappa}+\mathscr{L}_{M}\right] \;d^{4}x\;\sqrt{\hat{g}}=S_{B^{\prime}F}+S_{CG^{\prime}}+S_{RG},
\end{equation}
where
\begin{equation}
     d^{4}x\;\sqrt{\hat{g}}= \;\bigl \langle (C\wedge ^{\mathcal{T}} C)\wedge (C\wedge ^{\mathcal{T}} C)\bigr \rangle_{\mathfrak{h}}\neq 0, 
\end{equation} 
is a volume form. In other words, instead of the term $\langle B^{\prime} \wedge B ^{\prime}\rangle_{\mathfrak{g}}$, the term $ \langle (C\wedge ^{\mathcal{T}} C)\wedge (C\wedge ^{\mathcal{T}} C) \rangle_{\mathfrak{h}}$ is used. The composite metric, $\hat{g}_{\mu \nu}$, is defined in the same way as
\begin{equation*}
    \hat{g}_{\mu \nu}=(\frac{\hat{g}}{g})^{\frac{1}{4}}\;g_{\mu \nu},
\end{equation*}
where $g_{\mu \nu}$ is an arbitrary metric. The fields on which the action~\eqref{accionbfcggravedad2} depends are: $A, \beta, B^{\prime}, C, \alpha$, and $g^{\mu \nu}$. Since neither $S_{B^{\prime}F}$ nor $S_{CG^{\prime}}$ depends on $g^{\mu \nu}$, the variation with respect to $g^{\mu \nu}$ of the action~\eqref{accionbfcggravedad2} leads to the trace-free Einstein equation,
\begin{equation}
\label{primeraecuacionbfcgsecuestered2}
    R_{\mu \nu }(\hat{g})-\frac{1}{4}R(\hat{g})\hat{g}_{\mu \nu }=\kappa \left(T_{\mu \nu}-\frac{1}{4}T\hat{g}_{\mu \nu}\right),
\end{equation}
as shown in eq.~\eqref{desarrollovariaciónrg}. The variations of the action with respect to $C, A, \beta, B^{\prime}$, and $\alpha$ lead, respectively, to
\begin{align}
\label{segundaecuacionbfcgsecuestered2}
    \mathcal{G}^{\prime}+\frac{1}{\kappa}\left[R(\hat{g})+\kappa T -4\bar{\Lambda}\right]\left[(C\wedge ^{\mathcal{T}}C)\wedge ^{\triangleright ^{\prime}}C\right] &=0.
\\
    d_{A}(B^{\prime}-C\wedge ^{\mathcal{T}} \alpha)+C \wedge ^{\mathcal{T}}\beta &=0,
\\
     dC+A\wedge ^{\triangleright ^{\prime}}C-\partial ^{\prime}(B^{\prime}-C\wedge ^{\mathcal{T}} \alpha) & =0,
\\
    \mathcal{F} &=0,
\\
        \mathcal{F}\wedge ^{\triangleright ^{\prime}}C &=0.
\end{align}
These equations detach the cosmological constant from the curvature $\mathcal{F}$ and instead establish a connection with the $3$-curvature $\mathcal{G^{\prime}}$, characteristic of higher norm theories. This opens possibilities for developing a fully generalized theory that integrates both the $B^{\prime}$ and $C$ fields as well as for researching such generalized curvatures and their implications within this theory. For instance, the new volume forms available in higher gauge theories may have an application in the study of bigravity, massive gravity and multimetric gravity theories~\cite{Baldacchino:2016jsz, deRham:2014zqa, Hassan:2011zd}. This possibility will be explored elsewhere.
\subsection{Couplings with kinetic term}
In this section we consider a minimal extension of the $BFCG$ theory coupled to gravity, which includes a kinetic term (a modified version of the Higher Yang-Mills term from~\cite{baez2002higher}, which remains invariant under gauge transformations) and maintains invariance under thin gauge transformations~\eqref{tdelgadas} and fat gauge transformations~\eqref{tgruesas} of the gauge fields $A$ and $\beta$. The action for this extension is given by
\begin{equation}
\label{eq:2ym}
    S=S_{0}+\frac{1}{4e^{2}}\int_M \bigl \langle \mathcal{F}\wedge \hat{\star} \mathcal{F}\bigr \rangle _{\mathfrak{g}} +\bigl \langle \mathcal{G}^{\prime}\wedge \hat{\star }\mathcal{G}^{\prime}\bigr \rangle _{\mathfrak{h}}=S_{0}+S_{2YM},
\end{equation}
where $e$ is the gauge coupling, $S_{0}$ is any of the actions~\eqref{accionbfcggravedad1} (with the volume form given through $B$) or~\eqref{accionbfcggravedad2} (with the volume form given through $C$), $\hat{\star}$ is the Hodge star operator associated with the composite metric $\hat{g}_{\mu \nu}$ and
\begin{equation}     \mathcal{G}^{\prime}=\mathcal{G}+\mathcal{F}\wedge ^{\triangleright ^{\prime}}\alpha.
\end{equation}
\noindent The variation of $S_{2YM}$ while keeping the fields $B^{\prime}$ and $g_{\mu \nu}$ (and therefore $\hat{g}_{\mu \nu}$) fixed is
\begin{equation}
    \begin{split}
        \delta ^{\prime}S_{2YM} & =\frac{1}{2e^{2}}\int_M \bigl \langle \delta \mathcal{F}\wedge \hat{\star} \mathcal{F}\bigr \rangle _{\mathfrak{g}} -\bigl \langle \hat{\star }\mathcal{G}^{\prime}\wedge  \delta \mathcal{G}\bigr \rangle _{\mathfrak{h}}-\bigl \langle \hat{\star }\mathcal{G}^{\prime}\wedge  \delta (\mathcal{F}\wedge ^{\triangleright ^{\prime}}\alpha)\bigr \rangle _{\mathfrak{h}}\\[0.2cm]
        & =\frac{1}{2e^{2}}\int_M d\bigl \langle \hat{\star} \mathcal{F}\wedge \delta A\bigr \rangle_{\mathfrak{g}} +\bigl \langle \delta A\wedge d_{A}(\hat{\star} \mathcal{F})\bigr \rangle_{\mathfrak{g}}-\bigl \langle \delta \beta \wedge \partial ^{\;\prime}(\hat{\star} \mathcal{F}) \bigr \rangle_{\mathfrak{h}}+\\[0.2cm]
        & \qquad {} \qquad {}\; +d\bigl \langle \hat{\star }\mathcal{G}^{\prime}\wedge \delta \beta \bigr \rangle_{\mathfrak{h}}-\bigl \langle \delta \beta \wedge (d\hat{\star }\mathcal{G}^{\prime}+A\wedge ^{\triangleright ^{\prime}}(\hat{\star }\mathcal{G}^{\prime})) \bigr \rangle_{\mathfrak{h}}\\[0.2cm]
        & \qquad {} \qquad {}\;-\bigl \langle \delta A\wedge (\hat{\star }\mathcal{G}^{\prime}\wedge ^{\mathcal{T}}\beta) \bigr \rangle_{\mathfrak{g}}+d\bigl \langle (\alpha \wedge ^{\mathcal{T}}\hat{\star} \mathcal{G}^{\prime})\wedge \delta A\bigr \rangle_{\mathfrak{g}}\\[0.2cm]
        &\qquad {} \qquad {}\; +\bigl \langle \delta A\wedge d_{A}(\alpha \wedge ^{\mathcal{T}}\hat{\star} \mathcal{G}^{\prime})\bigr \rangle_{\mathfrak{g}}-\bigl \langle \delta \beta \wedge \partial ^{\;\prime}(\alpha \wedge ^{\mathcal{T}}\hat{\star} \mathcal{G}^{\prime}) \bigr \rangle_{\mathfrak{h}}\\[0.2cm]
        &\qquad {} \qquad {}\;+\bigl \langle \delta \alpha \wedge (\mathcal{F}\wedge ^{\triangleright ^{\prime}}\hat{\star} \mathcal{G}^{\prime}) \bigr \rangle_{\mathfrak{h}}.
    \end{split}
\end{equation}
Where we have used the equations from Appendix \ref{app:a} and the equation
\begin{equation*}
\delta \bigl \langle \mathcal{R}\wedge \hat{\star} \mathcal{R}\bigr \rangle=2\bigl \langle \delta \mathcal{R}\wedge \hat{\star} \mathcal{R}\bigr \rangle,
\end{equation*}
which holds for variations where 
$B^{\prime}$, $g_{\mu \nu}$ and therefore $\hat{g^{\mu \nu}}$ remain fixed. On the other hand, using local expressions in components for $\mathcal{F}$, $\hat{\star}\mathcal{F}$, $\mathcal{G}^{\prime}$ and $\hat{\star}\mathcal{G}^{\prime}$, as well as properties of  Levi-Civita symbols and generalized Kronecker delta, the variation of $S_{2YM}$ with respect to $\hat{g^{\mu \nu}}$ (therefore, indirectly, with respect to $B^{\prime}$ and $g^{\mu \nu}$) is
\begin{equation}
        \frac{\delta S_{2YM}}{\delta \hat{g}^{\mu \nu }}\delta \hat{g}^{\mu \nu }
         =\int_M d^{4}x\frac{\sqrt{\hat{g}}}{2\kappa}\;(- \kappa )\left[T_{\mu  \nu}^{(\mathcal{F})}+T_{\mu \nu}^{(\mathcal{G}^{\prime})}\right]\delta \hat{g}^{\mu \nu},
\end{equation}
\noindent where the energy-momentum tensors associated with the gauge fields $\mathcal{F}$, $\mathcal{G}^{\prime}$ are
\begin{align}
    T_{\mu \nu}^{(\mathcal{F})} & =-\frac{Q_{ab}}{2 e^{2}}\left[\mathcal{F}_{\alpha \mu}^{a}\;\mathcal{F}_{\beta \nu}^{b}\;\hat{g}^{\alpha \beta}-\frac{1}{4}\mathcal{F}_{\alpha \beta}^{a}\;\mathcal{F}_{\gamma \delta}^{b}\;\hat{g}^{\alpha \gamma}\;\hat{g}^{\beta \delta}\;\hat{g}_{\mu \nu}\right],
\\
    T_{\mu \nu}^{(\mathcal{G}^{\prime})} & =-\frac{q_{\lambda \xi}}{4 e^{2}}\left[(\mathcal{G}^{\prime})_{\mu \alpha \beta}^{\lambda}(\mathcal{G}^{\prime})_{\nu \gamma \delta}^{\xi}\;\hat{g}^{\alpha \gamma}\;\hat{g}^{\beta \delta}-\frac{1}{6}(\mathcal{G}^{\prime})_{\mu ^{\prime}\alpha \beta}^{\lambda}(\mathcal{G}^{\prime})_{\nu ^{\prime}\gamma \delta}^{\xi}\;\hat{g}^{\mu ^{\prime}\nu ^{\prime}}\;\hat{g}^{\alpha \gamma}\;\hat{g}^{\beta \delta}\hat{g}_{\mu \nu}\right].
\end{align}
\noindent It is easy to see that
\begin{equation*}
    T^{(\mathcal{F})}=T_{\mu \nu}^{(\mathcal{F})}\hat{g}^{\mu \nu}=0,
\end{equation*}
and 
\begin{equation}
    T_{\mu \nu}=T_{\mu \nu}^{(M)}+T_{\mu \nu}^{(\mathcal{F})}+T_{\mu \nu}^{(\mathcal{G}^{\prime})}= \frac{-2}{\sqrt{\hat{g}}}\frac{\delta(\sqrt{\hat{g}}\mathscr{L}_{M\mathcal{F}\mathcal{G}^{\prime}})}{\delta \hat{g}^{\mu \nu}}\equiv -2\frac{\delta\mathscr{L}_{M\mathcal{F}\mathcal{G}^{\prime}}}{\delta \hat{g}^{\mu \nu}}+\hat{g}_{\mu \nu}\mathscr{L}_{M\mathcal{F}\mathcal{G}^{\prime}},
\end{equation}
where $\mathscr{L}_{M\mathcal{F}\mathcal{G}^{\prime}}$ is the Lagrangian density of the matter field and the gauge fields provided by the kinetic terms. Finally, the field equations are
\begin{itemize}
    \item For the action~\eqref{accionbfcggravedad1} (volume form given through $B^{\prime}$), the variations with respect to $g^{\mu \nu}, B^{\prime}, C, A, \beta$ and $\alpha$ are, respectively,
    \begin{align}
        R_{\mu \nu }(\hat{g})-\frac{1}{4}R(\hat{g})\hat{g}_{\mu \nu } - \kappa \left(T_{\mu \nu}-\frac{1}{4}T\hat{g}_{\mu \nu}\right) & = 0,
    \nonumber \\
        \mathcal{F}+\frac{1}{2\kappa}\left[R(\hat{g})+\kappa T -4\bar{\Lambda}\right]B^{\prime} & =0,
     \nonumber\\
         \mathcal{G}+\mathcal{F}\wedge ^{\triangleright ^{\prime}}\alpha & =0,
    \nonumber\\
        d_{A}(B^{\prime}-C\wedge ^{\mathcal{T}} \alpha)+C\wedge ^{\mathcal{T}}\beta+\frac{1}{2e^{2}}d_{A}(\hat{\star }\mathcal{F}) & =0,
    \nonumber\\
        dC+A\wedge ^{\triangleright ^{\prime}}C-\partial ^{\prime}(B^{\prime}-C\wedge ^{\mathcal{T}} \alpha)-\frac{1}{2e^{2}}\partial ^{\prime}(\hat{\star}\mathcal{F}) & =0,
     \nonumber\\
        \mathcal{F}\wedge ^{\triangleright ^{\prime}}C& =0.  \label{potencialdadoporb}
    \end{align}
    \item For the action~\eqref{accionbfcggravedad2} (volume form given through $C$), the variations with respect to $g^{\mu \nu}, C, B^{\prime}, A, \beta$ and $\alpha$ are, respectively,
    \begin{align}
        R_{\mu \nu }(\hat{g})-\frac{1}{4}R(\hat{g})\hat{g}_{\mu \nu } - \kappa \left(T_{\mu \nu}-\frac{1}{4}T\hat{g}_{\mu \nu}\right) & = 0,
    \nonumber \\
        \mathcal{G}^{\prime}+\frac{1}{\kappa}\left[R(\hat{g})+\kappa T -4\bar{\Lambda}\right]\left[(C\wedge ^{\mathcal{T}}C)\wedge ^{\triangleright ^{\prime}}C\right] &=0,
    \nonumber \\
        \mathcal{F} & =0,
    \nonumber \\
        d_{A}(B^{\prime}-C\wedge ^{\mathcal{T}} \alpha)+C\wedge ^{\mathcal{T}}\beta &\nonumber \\ 
        -\frac{1}{2e^{2}}\left[\hat{\star}\mathcal{G}^{\prime}\wedge ^{\mathcal{T}}\beta-d_{A}(\alpha \wedge ^{\mathcal{T}}\hat{\star}\mathcal{G}^{\prime})\right] & =0,
    \nonumber \\
        dC+A\wedge ^{\triangleright ^{\prime}}C-\partial ^{\prime}(B^{\prime}-C\wedge ^{\mathcal{T}} \alpha)& \nonumber \\
        -\frac{1}{2e^{2}}\left[d(\hat{\star}\mathcal{G}^{\prime})+A\wedge ^{\triangleright ^{\prime}}(\hat{\star}\mathcal{G}^{\prime})+\partial ^{\prime}(\alpha \wedge ^{\mathcal{T}}\hat{\star}\mathcal{G}^{\prime})\right] & =0,
    \nonumber \\
        \mathcal{F}\wedge ^{\triangleright ^{\prime}}C+\frac{1}{2e^{2}}(\mathcal{F}\wedge ^{\triangleright ^{\prime}}\hat{\star} \mathcal{G}^{\prime}) & =0.    \label{potencialdadoporc}
    \end{align}
\end{itemize}
\begin{Example}
Let $\mathfrak{X}$ be the crossed module defined as 
in Example \ref{modulocruzadoabeliano}, then $\partial _{\ast}$ is an isomorphism, $\partial ^{\prime}=\partial_{\ast}^{-1}$, $\triangleright ^{\prime}=0$ and $\mathcal{T}=0$. Thus, the field eqs.~\eqref{potencialdadoporb} are reduced to    
\begin{equation*}
            F-\partial_{\ast}(\beta)+\frac{1}{2\kappa}\left[R(\hat{g})+\kappa T -4\bar{\Lambda}\right]B^{\prime}=0,
        \end{equation*}
        \begin{equation*}
            \mathcal{G}^{\prime}=d\beta =0,
        \end{equation*}
        \begin{equation*}
            dB^{\prime}=-\frac{1}{2e^{2}}d(\hat{\star}(F-\partial _{\ast}(\beta))),
        \end{equation*}
        \begin{equation*}
            dC-\partial_{\ast}^{-1}(B^{\prime})=\frac{1}{2e^{2}}\partial_{\ast}^{-1}(\hat{\star}(F-\partial_{\ast}(\beta))).
        \end{equation*}
        From the previous equations, it follows that
         \begin{equation}
        \label{eq:bfsecuesteredu1}
            \frac{d\Lambda \wedge \mathcal{F}}{(\Lambda-\bar{\Lambda})^{2}}=-\frac{1}{2}\left(\frac{2}{e^{2}\kappa}d\hat{\star}\mathcal{F}\right)
        \end{equation}
        where
         \begin{equation*}
            \Lambda =\frac{1}{4}(R(\hat{g})+\kappa T),\;\;\; \mathcal{F}=F-\beta.
        \end{equation*}
         Equation~\eqref{eq:bfsecuesteredu1}, along with
         \begin{equation}
            T_{\mu \nu}^{(\mathcal{F})}=\frac{1}{e^{2}}\left[\mathcal{F}_{\alpha \mu}\;\mathcal{F}_{\beta \nu}\;\hat{g}^{\alpha \beta}-\frac{1}{4}\mathcal{F}_{\alpha \beta}\;\mathcal{F}_{\gamma \delta}\;\hat{g}^{\alpha \gamma}\;\hat{g}^{\beta \delta}\;\hat{g}_{\mu \nu}\right],
        \end{equation}
        are equivalent, up to reparametrization of $e$, to the abelian equations presented in~\cite{alexander2022black} where it is shown that this model leads to astrophysical solutions that, in some limits, resemble Reissner-Nordstr\"om black holes (solutions to the Einstein-Maxwell theory) but introduce the notion of a fundamental unit charge. This further demonstrates that, for a specific case of a strict 2-group abelian type, the generalized $BF$ theory coupled with gravity including a kinetic term, reduces to the $BF$ theory coupled with gravity including a kinetic term, reported in the literature. However, in the non-abelian case, it is speculated that additional information arises due to the nontrivial structure of the 2-connection, leading to a modifications to the non-abelian case of BF coupled to gravity. It would be interesting to study such corrections in the context of cosmological solutions, and even more, to take advantage of the richer gauge symmetry structure of higher gauge theory in order to combine gauge fields that are known to be relevant, on the one hand, for astrophysical solutions, and on the other hand, for cosmology.
\end{Example}

\section{Conclusions
}\label{sec:con}

In the first part of this work, we provided a basic introduction to the theory of $G$-principal bundles and its generalization to higher gauge theory.
After reviewing these tools, we moved on to work out a physical application original to this paper: the coupling of categorically generalized $BF$ theory with gravity, including a generalized kinetic term. We derived the equations of motion for an arbitrary crossed module. 
As a result of the generalized coupling, we extended the field content of unimodular gravity while maintaining it as a background-independent theory. However, as expected, the new theory is not topological, independently of whether it has an explicit kinetic term. The fields in the theory depend on the Lie groups used to define the crossed module. For example, for a $U(1),\; SO(2)$ (crossed module) gauge fields can be coupled to unimodular gravity, and by adding the kinetic term, this coupling is reflected in the metric field equations through the addition of the electromagnetic energy-momentum tensor. In addition, higher-order categories introduce new possibilities for the volume form and, therefore, for the action that couples BFCG and gravity. At the level of the field equations, this determines whether the gravitational part of the action acts as a source for $\mathcal{F}$ or for 
$\mathcal{G}$. This is expected to modify the space of solutions of the theory, as will be reported elsewhere.\\

In the specific case where the crossed module is constructed out of $U(1)$ and $SO(2)$ groups, we show that the standard $BF$ theory coupled with gravity is recovered. However, in the non-abelian case, it is speculated that additional information arises due to the nontrivial structure of the 2-connection, leading to theory that differs from those found in the existing literature. Finally, it is worth emphasizing that the resulting theory is not simply unimodular gravity minimally coupled to electromagnetism, since there are further equations arising from the BF part of the action and from the coupling between the 2-form  $B$ and the Einstein-Hilbert term.\\

Furthermore, there are new gauge transformations. In particular, in the context of $2$-connections, we identified thin and fat gauge transformations. The first are the usual gauge transformations plus a transformation for the new $2$-form connection that appears in the theory. The second are entirely new transformations that are highly useful due to their versatility. For example, in different contexts, they can be applied to develop a formalism in which the transformations of the fields are the same as those of the fields present in the $BFCG$ theory, demonstrating the practical utility of fat transformations. For instance, for the $2$-Poincare group the transformation properties of the one-form $C$ are the same as the transformation properties of the tetrad one-form $e$ under the local Lorentz and the diffeomorphism transformations \cite{mikovic2012poincare}. In future work, we plan to use the versatility of fat transformations
to explore cosmological solutions for the $BFCG$ theory coupled with gravity proposed here.

\appendix

\section{Remarks on differential crossed modules}
\label{app:modulocruzadodiferencial}
\noindent The differential crossed module $(\mathfrak{g},\mathfrak{h},\partial _{\ast}, \triangleright ^{\prime})$ associated with $(G,H,\partial ,\triangleright)$ satisfies~\cite{martins2011lie}:
\begin{enumerate}
\item For any $X \in \mathfrak{g}$, the function $\phi_{X}^{\prime}:\mathfrak{h}\rightarrow \mathfrak{h}$ defined as $\phi_{X}^{\prime}(\xi)=X\triangleright ^{\prime} \xi$ for any $\xi \in \mathfrak{h}$, is a derivation of $\mathfrak{h}$, which can be expressed as
\begin{equation*}
            X\triangleright ^{\prime} [\xi,\nu]=[X\triangleright ^{\prime}\xi,\nu]+[\xi ,X\triangleright ^{\prime} \nu],\;\;\;\; \forall \;X\in \mathfrak{g},\;\forall \;\xi, \nu \in \mathfrak{h}.
        \end{equation*}
\item The function $\triangleright ^{\prime}:\mathfrak{g}\rightarrow Der(\mathfrak{h})$ from the Lie algebra $\mathfrak{g}$ to the algebra of derivations of $\mathfrak{h}$ is a Lie algebra morphism, which can be written as
\begin{equation*}
            [X,Y]\triangleright^{\prime} \xi=X\triangleright ^{\prime} (Y\triangleright ^{\prime} \xi)-Y\triangleright ^{\prime}(X\triangleright ^{\prime} \xi),\;\;\;\; \forall \;X,Y\in \mathfrak{g},\;\forall \;\xi\in \mathfrak{h}.
        \end{equation*}
\item $\partial _{\ast}(X\triangleright ^{\prime} \xi)=[X,\partial _{\ast} (\xi)],\;\;\;\; \forall \;X\in \mathfrak{g},\;\forall \;\xi\in \mathfrak{h}.$
\item $\partial _{\ast}(\xi)\triangleright ^{\prime} \nu =[\xi,\nu]\;\;\;\;\forall \;\xi\in \mathfrak{h}.$
\end{enumerate}
\noindent An identity useful for proving the norm invariance of the extended $BFCG$ action is
\begin{equation}
        \label{invarianzabfcgextendida}
        g\triangleright ^{\prime \prime}(X\triangleright ^{\prime}\xi)=(gXg^{-1})\triangleright ^{\prime}(g\triangleright ^{\prime \prime}\xi),\;\;\;\; \forall g\in G,\;\xi \in \mathfrak{h},\;X\in \mathfrak{g},
    \end{equation}
where $\begin{tikzcd}
                \triangleright ^{\prime \prime}:G \arrow[r,"\triangleright"] & Aut(H) \arrow[r,"\ast"] & GL(\mathfrak{h})
            \end{tikzcd}$ is defined as $
        \triangleright ^{\prime \prime} (g)=(\phi _{g})_{\ast}, \forall g\in G$, with $\phi _{g}(h)=g\triangleright h, \forall h\in H$ and $(\phi _{g})_{\ast}$ denotes its push-forward. Additionally,  $g\triangleright ^{\prime \prime}\nu$ is defined as
     $ g\triangleright ^{\prime \prime}\nu=\triangleright ^{\prime \prime}(g)\nu$
    
\begin{Definition}
\label{formasbilinealssimetricasinvariantes}
   Let $\mathcal{X}=(G,H,\partial ,\triangleright)$ be a crossed module and let $\bar{\mathcal{X}}=(\mathfrak{g},\mathfrak{h},\partial _{\ast}, \triangleright ^{\prime})$ be the associated differential crossed module. A bilinear, symmetric, non-degenerate, $G$-invariant, and invariant form on $\bar{\mathcal{X}}$ is a pair of bilinear, symmetric, non-degenerate forms, $ \langle , \rangle_{\mathfrak{g}}$ on $\mathfrak{g}$ and $ \langle ,\rangle_{\mathfrak{h}}$ on $\mathfrak{h}$ such that:
\begin{enumerate}
\item $ \langle , \rangle_{\mathfrak{g}}$ is $G$-invariant:
$
            \bigl \langle gXg^{-1},gYg^{-1}\bigr \rangle_{\mathfrak{g}}=\bigl \langle X,Y\bigr \rangle_{\mathfrak{g}}\;,\;\;\;\forall \;g\in G,\;X,Y\in \mathfrak{g}\;,
        $
        and invariant:
        \begin{equation}
    \label{invariantekillingformg}
             \bigl \langle [X,Y],Z\bigr \rangle_{\mathfrak{g}}=\bigl \langle X,[Y,Z]\bigr \rangle_{\mathfrak{g}}\;,\;\;\;\forall \;X,Y,Z\in \mathfrak{g}
        \end{equation}
        
        \item $ \langle , \rangle_{\mathfrak{h}}$ is $G$-invariant:
       $
            \bigl \langle g\triangleright ^{\prime \prime}\xi, g\triangleright ^{\prime \prime}\nu \bigr \rangle_{\mathfrak{h}}=\bigl \langle \xi ,\nu\bigr \rangle_{\mathfrak{h}}\;,\;\;\;\forall \;g\in G,\;\xi ,\nu\in \mathfrak{h}\;,
       $
        and invariant:
        \begin{equation*}
             \bigl \langle [\xi ,\nu],\mu\bigr \rangle_{\mathfrak{h}}=\bigl \langle \xi ,[\nu,\mu]\bigr \rangle_{\mathfrak{h}}\;,\;\;\;\forall \;\xi ,\nu,\mu\in \mathfrak{h}\;.
        \end{equation*}
\end{enumerate}

\end{Definition}
\noindent Given the bilinear, non-degenerate, $G$-invariant, and invariant forms on $\mathfrak{g}$ and $\mathfrak{h}$, it is possible to define an antisymmetric bilinear function $\mathcal{T}:\mathfrak{h}\times \mathfrak{h}\rightarrow \mathfrak{g}$ by the rule
\begin{equation}
    \label{mapeoantisimetrico}
         \bigl \langle \mathcal{T}(\xi,\nu),X\bigr \rangle_{\mathfrak{g}}=-\bigl \langle \xi ,X\triangleright ^{\prime}\nu\bigr \rangle_{\mathfrak{h}}\;,\;\;\;\xi ,\nu\in \mathfrak{h},\;X\in \mathfrak{g},
    \end{equation}
    and a linear transformation $\partial ^{\;\prime}:\mathfrak{g}\rightarrow \mathfrak{h}$ defined by
$  \bigl \langle \partial ^{\;\prime}(X), u\bigr \rangle _{\mathfrak{h}}= \bigl \langle X, \partial _{\ast}(u)\bigr \rangle _{\mathfrak{g}},\;\forall \;X\in \mathfrak{g}, \;\forall \;u\in \mathfrak{h}.
   $ Let $\{\tau_{a}\}_{a=1,...,dim(G)}$ be a basis in $\mathfrak{g}$ and $\{e_{\alpha}\}_{\alpha=1,...,dim(H)}$ be a basis in $\mathfrak{h}$. Then, in components, it is possible to define:
\begin{itemize}
    \item The structure constants associated with the bases in the respective algebras given by the relations
    \begin{equation}
            [\tau _{a},\tau _{b}]_{\mathfrak{g}}=f_{ab}^{c}\tau _{c},\; \; \;    
[e_{\mu},e_{\nu}]_{\mathfrak{h}}=d_{\mu \nu}^{\alpha}\;e_{\alpha}.
        \end{equation}
        \item The matrices $(Q_{ab})=( \langle \tau _{a},\tau _{b} \rangle_{\mathfrak{g}})$ and $(q_{\mu,\nu})= (\langle e_{\mu}, e_{\nu}\rangle_{\mathfrak{h}})$ correspond to the $G$-invariant metrics in $\mathfrak{g}$ and $\mathfrak{h}$, respectively.
        \item The components of the action $(\triangleright ^{\prime})_{a \mu}^{\alpha}$ given by the relation
\begin{equation}
            \tau _{a}\triangleright ^{\prime} e_{\mu}=(\triangleright ^{\prime})_{a \mu}^{\alpha}e_{\alpha}.
        \end{equation}
        \item The components of $\mathcal{T}$ given by the relation
        \begin{equation}\label{bilinealantisimetricot}
          \mathcal{T}(e_{\mu}, e_{\nu})=\mathcal{T}_{\mu \nu}^{a}\;\tau_{a}\equiv(\triangleright ^{\prime})_{b \mu}^{\alpha}\;q_{\nu \alpha}\;Q^{ab}\;\tau_{a}.
        \end{equation}
\end{itemize}
\noindent From eq.~\eqref{invariantekillingformg}, we obtain the following relations between the structure constants and the non-degenerate bilinear form $Q$
        \begin{equation}\label{cosntantedeestructuraykillingform}   
            f_{ab}^{l}Q_{lc}=f_{bc}^{l}Q_{la}.
        \end{equation}
On the other hand, from the antisymmetry of $\mathcal{T}$, we have
\begin{equation}
        \label{relacionentrebanderitasykillingform}
            (\triangleright ^{\prime})_{b \mu}^{\alpha}\;q_{\nu \alpha}=-(\triangleright ^{\prime})_{b \nu}^{\alpha}\;q_{\mu \alpha}.
        \end{equation}
\begin{Example}
    Let $G$ be a Lie group, $H=\mathbb{R}^{n}$, and $\rho:G\rightarrow GL_{n}(\mathbb{R})$ a representation of $G$ on $\mathbb{R}^{n}$. Then $(G, H, \partial, \triangleright _{\rho})$ is a crossed module, where the action of $G$ on $H$, $\triangleright _{\rho}$, is defined via the representation as: $\bm {g \triangleright e=\rho (g)e}$, and the group morphism $\bm{\partial:H\rightarrow G}$ is the trivial morphism. The associated differential crossed module is $(\mathfrak{g}, \mathbb{R}^{n}, \partial _{\ast}, \triangleright ^{\prime})$, where the action of $\mathfrak{g}$ on $\mathbb{R}^{n}$, $\triangleright ^{\prime}$, is given by $\bm{\tau \triangleright ^{\prime}e=\rho _{\ast}(\tau)e}$, for any $\tau \in \mathfrak{g},\;e\in \mathbb{R}^{n}$ and the algebra morphism $\partial _{\ast}:\mathbb{R}^{n}\rightarrow \mathfrak{g}$ is the zero morphism, that is, $\bm{\partial _{\ast}=0}$. Finally, $\bm{g\triangleright ^{\prime \prime} e=\triangleright ^{\prime \prime}(g)e=\rho (g)e}$ for any $g\in G$, $e\in \mathbb{R}^{n}$.
\end{Example}

\begin{Example}
Let $G$ be a Lie group, $H =G$, $\partial = \text{id}_{G}$, and $\triangleright = \text{Ad}$ the conjugation automorphism. Then $(G,\; G,\; \text{id}_{G},\; \text{Ad})$ is a crossed Lie module with differential crossed module $(\mathfrak{g},\; \mathfrak{g},\; id_{\mathfrak{g}},\; ad)$ where $ad=[,]$ is the commuter in the Lie algebra.
\end{Example}
\noindent This $2$-group is trivializable and is known as the inner automorphism 2-group of $G$ which is denoted $INN(G)$~\cite{roberts2007inner}. It has the propriety that for every ordered pair of $1$-arrows there is exactly one $2$-arrow (codiscrete property). This follows from interpreting a $2$-arrow $\alpha :g \Rightarrow g^{\prime}$ as an orderer pair $(g,h)\in G\times H$ where $g^{\prime}=\partial (h)g$. Since $\partial =id_{G}$, it follows that $h=g^{\prime}g^{-1}$.
\begin{Example}
    \label{modulocruzadoabeliano}
       Let $G=SO(n)$, $H=Spin(n)$ the spin group,  $\partial= \rho$ the double cover function and $\triangleright_{\rho}$ the conjugation given by
       \begin{equation*}
           g\triangleright _{\rho} h:=g^{\prime}hg^{\prime-1},\;with \;\rho(g^{\prime})=g.
       \end{equation*}
       This is well-defined since $\rho$ is surjective and $Ker(\rho)=\{\pm id\}$. Then $(SO(n),Spin(n),\rho,\triangleright _{\rho})$ is a crossed Lie module but it is not codiscrete.\\
       A particular case of interest of this spin $2$-group is when $n=2$ where we have $(SO(2),U(1),\rho,\triangleright _{\rho})$ where
       \begin{equation*}
           \rho(e^{\frac{i}{2}\theta})=
           \begin{pmatrix}
            cos(\theta) & -sin(\theta)\\
            Sin(\theta) & cos(\theta)
            \end{pmatrix},
       \end{equation*}
       and $g\;\triangleright _{\rho}h=h$ for all $g\in SO(2)$. The associated differential crossed module is $(\mathfrak{so(2)}, \mathfrak{u(1)}, \rho_{\ast}, \triangleright ^{\prime})$, where 
       \begin{equation*}
           \rho _{\ast}(\frac{i}{2}\theta)=\theta J,\;\;\;J=
           \begin{pmatrix}
            0 & -1\\
            1 & 0
            \end{pmatrix},
       \end{equation*}
       is a isomorphism and $\triangleright ^{\prime}=0$.  Additionally,
        \begin{equation*}
            \langle \frac{i}{2},\frac{i}{2} \rangle_{\mathfrak{u(1)}}=\langle J,J\ \rangle_{\mathfrak{so(2)}}=-2,
        \end{equation*}
        which are bilinear, symmetric, non-degenerate, invariant under $U(1)$ and $SO(2)$, respectively, and invariant forms. Finally,
        \begin{equation*}
            \mathcal{T}=0,\;\;\; \partial ^{\prime}=\rho_{\ast}^{-1}.
        \end{equation*}
\end{Example}

\section{Differential forms valued in the Lie algebra
}
\label{formasdiferencialesvaluadas}
In this appendix, we will discuss the concepts of differential forms valued in a Lie algebra, as well as wedge products that can be defined given a differential crossed module.\\

\noindent Let $\{\tau_{a}\}_{a=1,...,\text{dim}(G)}$ be a basis in $\mathfrak{g}$ and $\{e_{\alpha}\}_{\alpha=1,...,\text{dim}(H)}$ be a basis in $\mathfrak{h}$. A $\mathfrak{g}$-valued $p$-form $A$ is an element of the set $\Omega ^{p}(M,\mathfrak{g})=\Omega ^{p}(M)\otimes \mathfrak{g}$, over the manifold $M$. Similarly, a $\mathfrak{h}$-valued $p$-form $\beta$ is an element of the set $\Omega ^{p}(M,\mathfrak{h})=\Omega ^{p}(M)\otimes \mathfrak{h}$, over the manifold $M$. Locally, they can be written as
\begin{equation}
        A\equiv A^{a} \tau_{a}=\frac{1}{p!}A_{\mu _{1},...,\mu _{p}}^{a}\tau_{a}dx^{\mu _{1}}\wedge ... \wedge dx^{\mu _{p}},
    \end{equation}
    and
    \begin{equation}
        \beta \equiv \beta^{\alpha} e_{\alpha}=\frac{1}{p!}\beta_{\mu _{1},...,\mu _{p}}^{\alpha}e_{\alpha}dx^{\mu _{1}}\wedge ... \wedge dx^{\mu _{p }},
    \end{equation}
For simplicity, the tensor product symbol has been omitted. The exterior derivative and the Hodge star operator act only on the part $\Omega ^{p}(M)$,
\begin{equation}
        \begin{split}
           & dA=(dA^{a})\tau _{a},\;\;\;d\beta=(d\beta ^{\alpha})e_{\alpha},\\
           & \star A=(\star A^{a})\tau _{a} ,\;\;\; \star \beta=(\star \beta ^{\alpha})e_{\alpha}. 
        \end{split}
    \end{equation}
    \noindent The commutator of a $\mathfrak{g}$-valued $p$-form $A\equiv A^{a} \tau_{a}$ and a $\mathfrak{g}$-valued $q$-form $B\equiv B^{b} \tau_{b}$ is
 \begin{equation}
       A\wedge ^{ad} B\equiv [A\wedge B]=(A^{a}\wedge B^{b})[\tau _{a},\tau _{b}],
    \end{equation}
    where
    $$            ad:\mathfrak{g}\times \mathfrak{g}  \rightarrow \mathfrak{g},\;\;\;
            (X,Y)  \rightarrow [X,Y],
       $$
    it is the adjoint representation of the Lie algebra $\mathfrak{g}$. In general, let $U$, $V$, and $W$ be vector spaces, and let
    \begin{equation*}
        T:U\times V \rightarrow W,
    \end{equation*}
be a bilinear transformation. Given a $p$-form $U$-valued $A\equiv A^{a}u_{a}$ and a $q$-form $V$-valued $B\equiv B^{\mu}v_{\mu}$, where $\{u_{a}\}$ is a basis of $U$ and $\{v_{\mu}\}$ is a basis of $V$, we can construct a $(p+q)$-form $W$-valued as   
\begin{equation}
        A\wedge ^{T} B=A^{a}\wedge B^{\mu}\;T(u_{a},v_{\mu}).
    \end{equation}
    This construction is independent of the choice of bases. So that, given a differential crossed module $(\mathfrak{g},\mathfrak{h},\partial _{\ast}, \triangleright ^{\prime})$ with a bilinear, symmetric, non-degenerate, $G$-invariant and invariant form, it is possible to define the following wedge products:
\begin{itemize}
    \item For $\mathfrak{g}$-valued forms:
    \begin{equation*}
            A\wedge ^{ad_{\mathfrak{g}}} B\equiv [A\wedge B]=(A^{a}\wedge B^{b})[\tau _{a},\tau _{b}]_{\mathfrak{g}}\equiv A^{a}\wedge B^{b} f_{ab}^{c}\tau _{c}\;.
        \end{equation*}
        
        \begin{equation}
            A\wedge ^{ \langle , \rangle_{\mathfrak{g}}} B\equiv \bigl \langle A \wedge B\bigr \rangle_{\mathfrak{g}}=A^{a}\wedge B^{b} \langle \tau _{a}, \tau _{b}\rangle_{\mathfrak{g}}\equiv A^{a}\wedge B^{b} Q_{ab}\;.
        \end{equation}
        \item  For $\mathfrak{h}$-valued forms:
        \begin{equation*}
            \eta \wedge ^{ad_{\mathfrak{h}}} \beta \equiv [\eta \wedge \beta]=\eta ^{\mu} \wedge \beta ^{\nu}[e_{\mu},e_{\nu}]_{\mathfrak{h}}\equiv \eta ^{\mu} \wedge \beta ^{\nu} d_{\mu \nu}^{\alpha}e_{\alpha}\;.
        \end{equation*}
        \begin{equation}
            \eta \wedge ^{\langle , \rangle_{\mathfrak{h}}} \beta \equiv \bigl \langle \eta \wedge \beta \bigr \rangle_{\mathfrak{h}} =\eta ^{\mu} \wedge \beta ^{\nu} \langle e_{\mu},e_{\nu} \rangle_{\mathfrak{h}}\equiv \eta ^{\mu} \wedge \beta ^{\nu} q_{\mu \nu}.
        \end{equation}

        \begin{equation}
        \label{productocuñaantisimetrica}
              \eta \wedge ^{\mathcal{T}} \beta=\eta ^{\mu} \wedge \beta ^{\nu} \mathcal{T}(e_{\mu},e_{\nu})\equiv \eta ^{\mu} \wedge \beta ^{\nu}\mathcal{T}_{\mu \nu}^{a}\tau _{a},
        \end{equation}
        where $\mathcal{T}:\mathfrak{h}\times \mathfrak{h} \rightarrow \mathfrak{g}$ is the antisymmetric bilinear function defined in \ref{mapeoantisimetrico}.
        \item For a $\mathfrak{g}$-valued form $A$ and a $\mathfrak{h}$-valued form $\beta$,
        \begin{equation}
        \label{productocuñaconbanderitas}
            A\wedge ^{\triangleright ^{\prime}} \beta=A^{a}\wedge \beta ^{\mu} (\tau _{a}\triangleright ^{\prime}e_{\mu})\equiv A^{a}\wedge \beta ^{\mu} (\triangleright ^{\prime})_{a \mu}^{\alpha}e_{\alpha}.
        \end{equation}
\end{itemize}

\noindent Given a $\mathfrak{g}$-valued 1-form $A$, it is possible to define another wedge product denoted as
\begin{equation}
         A\wedge A,
    \end{equation}
    by,
    \begin{equation*}
        A \wedge A (X_{1},X_{2})=[A (X_{1}),A (X_{2})],
    \end{equation*}
    for any pair of vector fields $(X_{1}, X_{2})$.
    
\section{Some Identities}
\label{app:a}
In this appendix, we will provide a detailed development of the variations of certain parts of the $BF$ action and the $BFCG$ action, which play a central role throughout the paper. This appendix will serve as a reference for readers interested in exploring in depth the specific variations of the $BF$ and $BFCG$ actions.

\noindent Let $M$ be an $n$-dimensional manifold, $B$ a $(n-2)$-form $\mathfrak{g}$-valued $G$-equivariant (as in Theorem \ref{teoderivadacovarianteequivariante}), $A$ the local 1-form connection and $F$ the 2-curvature. The following equality is useful for obtaining the variation of the $BF$ theory in the $n$-dimensional space $M$
\newpage
{\allowdisplaybreaks
\begin{align}\label{identidadBdeltaF}
         \int_M \bigl \langle B\wedge \delta F\bigr \rangle & = \int_M  B^{a}\wedge \delta F^{i}Q_{ia}\nonumber\\[0.2cm]
        & =\int_M  B^{a}\wedge \delta dA^{i}Q_{ia}+\frac{1}{2}B^{a}\wedge \delta (A^{b}\wedge A^{c})f_{bc}^{i}Q_{ia} \nonumber\\[0.2cm]
        & =\int_M  B^{a}\wedge \delta dA^{i}Q_{ia}+B^{a}\wedge A^{b}\wedge \delta A^{c}f_{bc}^{i}Q_{ia}\nonumber\\[0.2cm]
        & =\int_M  B^{a}\wedge \delta dA^{i}Q_{ia}+B^{a}\wedge A^{b}\wedge \delta A^{c}f_{ab}^{i}Q_{ic}\nonumber\\[0.2cm]
        & =\int_M  B^{a}\wedge \delta dA^{i}Q_{ia}+(-1)^{(n-1)}A^{a}\wedge B^{b}\wedge \delta A^{c}f_{ab}^{i}Q_{ic}\nonumber\\[0.2cm]
        & =\int_M  B^{a}\wedge \delta dA^{i}Q_{ia}+(-1)^{(n-1)}(d_{A}B-dB)^{a}\wedge \delta A^{i}Q_{ia}\nonumber\\[0.2cm]
        & =\int_M (-1)^{n}d(B^{a}\wedge \delta A^{i}Q_{ia})+(-1)^{(n-1)}d_{A}B^{a}\wedge \delta A^{i}Q_{ia}\nonumber\\[0.2cm]
        & =\int_M (-1)^{n}d\bigl \langle B\wedge \delta A\bigr \rangle +\bigl \langle \delta A\wedge d_{A}B\bigr \rangle.
    \end{align}}
 where $Q_{ia}$ are the components of the internal metric in some base $\{g_{a}\}$ of the Lie algebra of the Lie group $G$, $f^{i}_{bc}$ are the structure constants in this basis. We also employ the identity 
$ f_{ab}^{l}Q_{lc}=f_{bc}^{l}Q_{la}$, which ensures that there are no ambiguities in the wedge product $\langle \bullet \wedge \bullet \rangle$.\\
\noindent Now let $C$ be an $(n-3)$-form valued in $\mathfrak{h}$, $B$ an $\mathcal{G}$-equivariant (see Theorem \ref{teoderivadacovarianteequivariante}) $  (n-2)$-form valued in $\mathfrak{g}$,  $A$ the 1-connection valued in $\mathfrak{g}$, $\beta$ the 2-connection valued in $\mathfrak{h}$, and $\mathcal{F}$, $\mathcal{G}$, the fake curvature 2-form and curvature $3$-form, defined as
\begin{equation*}
        \mathcal{F}_{A,\beta}=F_{A}-\partial _{\ast}(\beta) \equiv dA+A\wedge A-\partial _{\ast}(\beta),\;\;\; \mathcal{G}_{A,\beta}=d\beta +A\wedge ^{\triangleright ^{\prime}}\beta.
    \end{equation*}
    The following equalities are useful for obtaining the variation of the $BFCG$ theory in the $n$-dimensional space $M$:
    \begin{equation}
        \begin{split}
             \int_M \bigl \langle B\wedge \delta \mathcal{F}\bigr \rangle _{\mathfrak{g}}& =\int_M \bigl \langle B\wedge \delta F\bigr \rangle _{\mathfrak{g}}-B^{a}\wedge \delta \partial _{\ast}(\beta)^{b}Q_{ab}\\[0.2cm]
             & =\int_M \bigl \langle B\wedge \delta F\bigr \rangle _{\mathfrak{g}}-B^{a}\wedge \delta \beta ^{\mu} \;(\partial_{\ast})_{\mu}^{b}\;Q_{ab}\\[0.2cm]
             & =\int_M \bigl \langle B\wedge \delta F\bigr \rangle _{\mathfrak{g}}-\delta \beta ^{\mu}\wedge B^{a} \;\partial_{a}^{\;\prime \;\alpha}\;q_{\mu \alpha}\\[0.2cm]
             & =\int_M (-1)^{n}d\bigl \langle B\wedge \delta A\bigr \rangle_{\mathfrak{g}} +\bigl \langle \delta A\wedge d_{A}B\bigr \rangle_{\mathfrak{g}}-\bigl \langle \delta \beta \wedge \partial ^{\;\prime}(B) \bigr \rangle_{\mathfrak{h}},
        \end{split}
    \end{equation}
    where $\partial ^{\;\prime}:\mathfrak{g}\rightarrow \mathfrak{h}$ is the linear transformation defined as
    \begin{equation*}
       \bigl \langle \partial ^{\;\prime}(X), u\bigr \rangle _{\mathfrak{h}}= \bigl \langle X, \partial _{\ast}(u)\bigr \rangle _{\mathfrak{g}}\;\;\;\;\forall \;X\in \mathfrak{g}, \;\forall \;u\in \mathfrak{h}.
    \end{equation*}
  From eqs. \eqref{bilinealantisimetricot}, \eqref{relacionentrebanderitasykillingform}, \eqref{productocuñaantisimetrica} and \eqref{productocuñaconbanderitas},
    \begin{equation}
        \begin{split}
            \int_M \bigl \langle C\wedge \delta \mathcal{G}\bigr \rangle_{\mathfrak{h}} & =\int_M C^{\mu}\wedge \delta \mathcal{G}_{\mu}\\[0.2cm]
            & =\int_M C^{\mu}\wedge \delta d\beta _{\mu}+ \left[C^{\mu}\wedge (\delta A \wedge ^{\triangleright ^{\prime}} \beta)_{\mu}+ C^{\mu}\wedge (A \wedge ^{\triangleright ^{\prime}} \delta \beta)_{\mu}\right]\\[0.2cm]
            & =\int_M \left[(-1)^{(n-1)}d(C^{\mu}\wedge \delta \beta _{\mu})+(-1)^{n}dC^{\mu}\wedge \delta \beta _{\mu}\right]\\[0.2cm]
            &\qquad {} \qquad {}\;+\left[C^{\mu}\wedge \delta A^{a} \wedge \beta ^{\nu}(\triangleright ^{\prime}) _{a\nu}^{\alpha}q_{\mu \alpha}+C^{\mu}\wedge A^{a} \wedge \delta \beta ^{\nu}(\triangleright ^{\prime}) _{a\nu}^{\alpha}q_{\mu \alpha}\right]\\[0.2cm]
            & = \int_M \left[(-1)^{(n-1)}d\bigl \langle C\wedge \delta \beta \bigr \rangle_{\mathfrak{h}}+(-1)^{n}\delta \beta^{\mu}\wedge  dC_{\mu}\right]\\[0.2cm]
            &\qquad {} \qquad {}\;+[(-1)^{(n-1)}\delta A^{a}\wedge C^{\mu}\wedge \beta ^{\nu}(\triangleright ^{\prime}) _{a\nu}^{\alpha}q_{\mu \alpha}\\[0.2cm]
             & \qquad {} \qquad {}\;+(-1)^{(n-1)}\delta \beta ^{\nu}\wedge A^{a}\wedge C^{\mu}(\triangleright ^{\prime})_{a\nu}^{\alpha}q_{\mu \alpha}]\\[0.2cm]
            & = \int_M \left[(-1)^{(n-1)}d\bigl \langle C\wedge \delta \beta \bigr \rangle_{\mathfrak{h}}+(-1)^{n}\delta \beta^{\mu}\wedge  dC_{\mu}\right]\\[0.2cm]
            &\qquad {} \qquad {}\;+[(-1)^{n}\delta A^{a}\wedge C^{\mu}\wedge \beta ^{\nu}\mathcal{T}_{\mu \nu}^{b}Q_{ab}\\[0.2cm]
            &\qquad {} \qquad {}\;+(-1)^{n}\delta \beta ^{\nu}\wedge A^{a}\wedge C^{\mu}(\triangleright ^{\prime})_{a\mu}^{\alpha}q_{\nu \alpha}]\\[0.2cm]
            & =\int_M (-1)^{(n-1)}d\bigl \langle C\wedge \delta \beta \bigr \rangle_{\mathfrak{h}}+(-1)^{n}\bigl \langle \delta \beta \wedge (dC+A\wedge ^{\triangleright ^{\prime}}C) \bigr \rangle_{\mathfrak{h}}\\[0.2cm]
            &\qquad {} \qquad {} \qquad {} \qquad {} \qquad {} \qquad {}+(-1)^{n}\bigl \langle \delta A\wedge (C\wedge ^{\mathcal{T}}\beta) \bigr \rangle_{\mathfrak{g}}.
            \end{split}
    \end{equation}  
    


\bibliographystyle{JHEP.bst}
\bibliography{biblio.bib}

\providecommand{\href}[2]{#2}\begingroup\raggedright\begin{thebibliography}{10}

\bibitem{Bern:2010yg}
Z.~Bern, T.~Dennen, Y.-t.~Huang and M.~Kiermaier, \emph{{Gravity as the Square of Gauge Theory}}, \href{https://doi.org/10.1103/PhysRevD.82.065003}{\emph{Phys. Rev. D} {\bfseries 82} (2010) 065003} [\href{https://arxiv.org/abs/1004.0693}{{\ttfamily 1004.0693}}].

\bibitem{Krasnov:2011up}
K.~Krasnov, \emph{{Gravity as a diffeomorphism invariant gauge theory}}, \href{https://doi.org/10.1103/PhysRevD.84.024034}{\emph{Phys. Rev. D} {\bfseries 84} (2011) 024034} [\href{https://arxiv.org/abs/1101.4788}{{\ttfamily 1101.4788}}].

\bibitem{Celada_2016}
M.~Celada, D.~González and M.~Montesinos, \emph{Bf gravity}, \href{https://doi.org/10.1088/0264-9381/33/21/213001}{\emph{Classical and Quantum Gravity} {\bfseries 33} (2016) 213001} [\href{https://arxiv.org/abs/1610.02020}{{\ttfamily 1610.02020}}].

\bibitem{Fucito:1996ax}
F.~Fucito, M.~Martellini and M.~Zeni, \emph{{The BF formalism for QCD and quark confinement}}, \href{https://doi.org/10.1016/S0550-3213(97)00216-2}{\emph{Nucl. Phys. B} {\bfseries 496} (1997) 259} [\href{https://arxiv.org/abs/hep-th/9605018}{{\ttfamily hep-th/9605018}}].

\bibitem{Freidel:2005ak}
L.~Freidel and A.~Starodubtsev, \emph{{Quantum gravity in terms of topological observables}},  \href{https://arxiv.org/abs/hep-th/0501191}{{\ttfamily hep-th/0501191}}.

\bibitem{alexander2020topological}
S.~Alexander and R.~Carballo-Rubio, \emph{Topological features of the quantum vacuum}, \href{https://doi.org/https://doi.org/10.1103/PhysRevD.101.024058}{\emph{Physical Review D} {\bfseries 101, 024058} (2020) } [\href{https://arxiv.org/abs/1810.02159}{{\ttfamily 1810.02159}}].

\bibitem{1952prel.book..189E}
A.~{Einstein}, \emph{{Do gravitational fields play an essential part in the structure of the elementary particles of matter?}},  in \emph{The Principle of Relativity. Dover Books on Physics. June 1}, pp.~189--198 (1952).

\bibitem{Unruh:1988in}
W.G.~Unruh, \emph{{A Unimodular Theory of Canonical Quantum Gravity}}, \href{https://doi.org/10.1103/PhysRevD.40.1048}{\emph{Phys. Rev. D} {\bfseries 40} (1989) 1048}.

\bibitem{Alvarez:2015sba}
E.~\'Alvarez, S.~Gonz\'alez-Mart\'\i{}n, M.~Herrero-Valea and C.P.~Mart\'\i{}n, \emph{{Quantum Corrections to Unimodular Gravity}}, \href{https://doi.org/10.1007/JHEP08(2015)078}{\emph{JHEP} {\bfseries 08} (2015) 078} [\href{https://arxiv.org/abs/1505.01995}{{\ttfamily 1505.01995}}].

\bibitem{baez2011invitation}
J.C.~Baez and J.~Huerta, \emph{An invitation to higher gauge theory}, \href{https://doi.org/https://doi.org/10.1007/s10714-010-1070-9}{\emph{General Relativity and Gravitation} {\bfseries 43} (2011) 2335} [\href{https://arxiv.org/abs/1003.4485}{{\ttfamily 1003.4485}}].

\bibitem{Mikovic:2016xmo}
A.~Mikovic, M.A.~Oliveira and M.~Vojinovic, \emph{{Hamiltonian analysis of the BFCG theory for a strict Lie $2$-group}}, \href{https://doi.org/10.4310/ATMP.2022.v26.n10.a10}{\emph{Adv. Theor. Math. Phys.} {\bfseries 26} (2022) 3783} [\href{https://arxiv.org/abs/1610.09621}{{\ttfamily 1610.09621}}].

\bibitem{girelli2008topological}
F.~Girelli, H.~Pfeiffer and E.~Popescu, \emph{Topological higher gauge theory: from bf to bfcg theory}, \href{https://doi.org/https://doi.org/10.1063/1.2888764}{\emph{Journal of mathematical physics} {\bfseries 49, 032503} (2008) } [\href{https://arxiv.org/abs/0708.3051}{{\ttfamily 0708.3051}}].

\bibitem{Radenkovic:2019qme}
T.~Radenkovic and M.~Vojinovic, \emph{{Higher Gauge Theories Based on 3-groups}}, \href{https://doi.org/10.1007/JHEP10(2019)222}{\emph{JHEP} {\bfseries 10} (2019) 222} [\href{https://arxiv.org/abs/1904.07566}{{\ttfamily 1904.07566}}].

\bibitem{nakahara2003geometry}
M.~Nakahara, \emph{Geometry, topology and physics}, IOP Publishing (2003).

\bibitem{michor2008topics}
P.W.~Michor, \emph{Topics in differential geometry}, vol.~93, American Mathematical Soc. (2008).

\bibitem{nomizu1963foundations}
S.~Kobayashi and K.~Nomizu, \emph{Foundations of differential geometry vol I}, Interscience (1963).

\bibitem{coquereaux1988riemannian}
R.~Coquereaux and A.~Jadczyk, \emph{Riemannian Geometry, Fibre Bundles, Kaluza-Klein Theories and all That}, vol.~16, World scientific (1988), \href{https://doi.org/10.1142/0488}{10.1142/0488}.

\bibitem{Cho:1975sw}
Y.M.~Cho, \emph{{Higher - Dimensional Unifications of Gravitation and Gauge Theories}}, \href{https://doi.org/10.1063/1.522434}{\emph{J. Math. Phys.} {\bfseries 16} (1975) 2029}.

\bibitem{overduin1997kaluza}
J.M.~Overduin and P.S.~Wesson, \emph{Kaluza-klein gravity}, {\emph{Physics reports} {\bfseries 283} (1997) 303} [\href{https://arxiv.org/abs/gr-qc/9805018}{{\ttfamily gr-qc/9805018}}].

\bibitem{Bonder:2022kdw}
Y.~Bonder, J.E.~Herrera and A.M.~Rubiol, \emph{{Energy nonconservation and relativistic trajectories: Unimodular gravity and beyond}}, \href{https://doi.org/10.1103/PhysRevD.107.084032}{\emph{Phys. Rev. D} {\bfseries 107} (2023) 084032} [\href{https://arxiv.org/abs/2211.06532}{{\ttfamily 2211.06532}}].

\bibitem{Alvarez2023}
E.~{\'A}lvarez and E.~Velasco-Aja, \emph{A primer on unimodular gravity},  in \emph{Handbook of Quantum Gravity}, C.~Bambi, L.~Modesto and I.~Shapiro, eds., (Singapore), pp.~1--43, Springer Nature Singapore (2023), \href{https://doi.org/10.1007/978-981-19-3079-9_15-1}{DOI}.

\bibitem{alexander2022black}
S.~Alexander, S.J.~Clark, G.~Herczeg and M.W.~Toomey, \emph{Black hole and cosmological analysis of bf sequestered gravity}, \href{https://doi.org/https://doi.org/10.1103/PhysRevD.105.044001}{\emph{Physical Review D} {\bfseries 105, 044001} (2022) } [\href{https://arxiv.org/abs/2110.09503}{{\ttfamily 2110.09503}}].

\bibitem{baez2005higher}
J.C.~Baez and U.~Schreiber, \emph{Higher gauge theory}, \href{https://doi.org/https://doi.org/10.48550/arXiv.math/0511710}{\emph{arXiv preprint} (2005) } [\href{https://arxiv.org/abs/math/0511710}{{\ttfamily math/0511710}}].

\bibitem{mac2013categories}
S.~Mac~Lane, \emph{Categories for the working mathematician}, vol.~5, Springer Science \& Business Media (1978).

\bibitem{schreiber2011smooth}
U.~Schreiber and K.~Waldorf, \emph{Smooth functors vs. differential forms}, \href{https://doi.org/https://doi.org/10.48550/arXiv.0802.0663}{\emph{Homology, Homotopy and Applications} {\bfseries 13} (2011) 143} [\href{https://arxiv.org/abs/0802.0663}{{\ttfamily 0802.0663}}].

\bibitem{baez2003higher}
J.C.~Baez and A.S.~Crans, \emph{Higher-dimensional algebra vi: Lie 2-algebras}, {\emph{arXiv preprint} (2003) } [\href{https://arxiv.org/abs/math/0307263}{{\ttfamily math/0307263}}].

\bibitem{martins2011lie}
J.~Martins and A.~Mikovi{\'c}, \emph{Lie crossed modules and gauge-invariant actions for 2-bf theories}, \href{https://doi.org/https://doi.org/10.48550/arXiv.1006.0903}{\emph{Advances in Theoretical and Mathematical Physics} {\bfseries 15} (2011) 1059} [\href{https://arxiv.org/abs/1006.0903}{{\ttfamily 1006.0903}}].

\bibitem{schreiber2008connections}
U.~Schreiber and K.~Waldorf, \emph{Connections on non-abelian gerbes and their holonomy}, {\emph{Theory Appl. Categ} {\bfseries 28} (2013) 476} [\href{https://arxiv.org/abs/0808.1923}{{\ttfamily 0808.1923}}].

\bibitem{Baldacchino:2016jsz}
O.~Baldacchino and A.~Schmidt-May, \emph{{Structures in multiple spin-2 interactions}}, \href{https://doi.org/10.1088/1751-8121/aa649d}{\emph{J. Phys. A} {\bfseries 50} (2017) 175401} [\href{https://arxiv.org/abs/1604.04354}{{\ttfamily 1604.04354}}].

\bibitem{deRham:2014zqa}
C.~de~Rham, \emph{{Massive Gravity}}, \href{https://doi.org/10.12942/lrr-2014-7}{\emph{Living Rev. Rel.} {\bfseries 17} (2014) 7} [\href{https://arxiv.org/abs/1401.4173}{{\ttfamily 1401.4173}}].

\bibitem{Hassan:2011zd}
S.F.~Hassan and R.A.~Rosen, \emph{{Bimetric Gravity from Ghost-free Massive Gravity}}, \href{https://doi.org/10.1007/JHEP02(2012)126}{\emph{JHEP} {\bfseries 02} (2012) 126} [\href{https://arxiv.org/abs/1109.3515}{{\ttfamily 1109.3515}}].

\bibitem{baez2002higher}
J.C.~Baez, \emph{Higher yang-mills theory}, {\emph{arXiv preprint} (2002) } [\href{https://arxiv.org/abs/hep-th/0206130}{{\ttfamily hep-th/0206130}}].

\bibitem{mikovic2012poincare}
A.~Mikovi{\'c} and M.~Vojinovi{\'c}, \emph{Poincar{\'e} 2-group and quantum gravity}, \href{https://doi.org/https://doi.org/10.1088/0264-9381/29/16/165003}{\emph{Classical and Quantum Gravity} {\bfseries 29, 165003} (2012) } [\href{https://arxiv.org/abs/1110.4694}{{\ttfamily 1110.4694}}].

\bibitem{roberts2007inner}
D.~Roberts and U.~Schreiber, \emph{The inner automorphism 3-group of a strict 2-group},  \href{https://arxiv.org/abs/0708.1741}{{\ttfamily 0708.1741}}.

\end{thebibliography}\endgroup







 \end{document}